\documentclass{patmorin}
\usepackage{pat}
\usepackage[T1]{fontenc}
\usepackage[utf8]{inputenc}
\usepackage[inline]{enumitem}
\usepackage[normalem]{ulem}
\usepackage{nicefrac}%

\usepackage{mathtools}  %
\usepackage[longnamesfirst,numbers,sort&compress]{natbib}

\crefformat{equation}{#2(#1)#3}
\let\eqref\cref
\crefformat{subsection}{Subsection #2#1#3}
\crefformat{subsubsection}{Subsubsection #2#1#3} 

\renewcommand{\ge}{\geqslant}
\renewcommand{\le}{\leqslant}
\renewcommand{\geq}{\geqslant}
\renewcommand{\leq}{\leqslant}

\definecolor{brightmaroon}{rgb}{0.76, 0.13, 0.28}
\definecolor{linkblue}{rgb}{0, 0.337, 0.227}
\newcommand{\defin}[1]{\emph{\textcolor{brightmaroon}{#1}}}
\makeatletter
\def\mathcolor#1#{\@mathcolor{#1}}
\def\@mathcolor#1#2#3{%
  \protect\leavevmode
  \begingroup
    \color#1{#2}#3%
  \endgroup
}
\makeatother
\newcommand{\mathdefin}[1]{\mathcolor{brightmaroon}{#1}}

\newcommand{\torso}[2]{{#1}\langle{#2}\rangle}
\DeclareMathOperator{\height}{height}
\DeclareMathOperator{\depth}{depth}

\DeclareMathOperator{\lca}{lca}
\DeclareMathOperator{\lgg}{\hat{lg}}
\DeclareMathOperator{\bin}{bin}

\newcommand{\calG}{\mathcal{G}}
\newcommand{\Oh}{O}

\DeclarePairedDelimiter{\set}{\{}{\}}

\DeclarePairedDelimiter{\ceil}{\lceil}{\rceil}

\title{\MakeUppercase{\boldmath  Adjacency labelling for proper minor-closed graph classes}}

 \author{
 Vida Dujmovi{\'c}\,\footnote{School of Computer Science and Electrical Engineering, University of Ottawa, Ottawa, Canada (\texttt{vida.dujmovic@uottawa.ca}). Research supported by NSERC and a University of Ottawa Research Chair.}
 \qquad
 Cyril Gavoille\footnote{LaBRI, University of Bordeaux, France (\texttt{gavoille@labri.fr}). This work is supported by the French ANR Projets ENEDISC (ANR-24-CE48-7768-01) and TEMPOGRAL (ANR-22-CE48-0001).}
 \qquad
  Gwena\"el Joret\footnote{D\'epartement d'Informatique, Universit\'e libre de Bruxelles, Belgium ({\tt gwenael.joret@ulb.be}). G.\ Joret is supported by the Belgian National Fund for Scientific Research (FNRS) and by the Australian Research Council.}\\[1ex]
Piotr Micek\footnote{Department of Theoretical Computer Science, Faculty of Mathematics and Computer Science, Jagiellonian University, Kraków, Poland (\texttt{piotr.micek@uj.edu.pl}). Research supported by the National Science Center of Poland under grant UMO-2023/05/Y/ST6/00079 within the WEAVE-UNISONO program.}
 \qquad
 Pat Morin\footnote{School of Computer Science, Carleton University, Ottawa, Canada (\texttt{morin@scs.carleton.ca}). Research supported by NSERC.}
 \qquad
 David~R.~Wood\footnote{School of Mathematics, Monash University, Melbourne, Australia (\texttt{david.wood@monash.edu}). Research supported by the Australian Research Council and by NSERC.}
 }

\date{}

\begin{document}
\maketitle

\begin{abstract}
  We show that every proper minor-closed class of graphs admits a $(1+o(1))\log_2 n$-bit adjacency labelling scheme. Equivalently, for every proper minor-closed class $\mathcal{G}$ and every positive integer $n$ there exists an $n^{1+o(1)}$-vertex graph $U$ such that every $n$-vertex graph in $\mathcal{G}$ is isomorphic to an induced subgraph of $U$.   Both results are optimal up to the lower order term. They generalize the corresponding results for planar graphs and apex-minor-free classes (Dujmovi\'c et al., J.~ACM 2021) to all proper minor-closed classes, answering the open question raised in that paper and anticipated earlier by Bonamy, Gavoille, and Pilipczuk (SODA 2020).

\end{abstract}

\bigskip
\section{Introduction}
Let $\calG$ be a class of graphs and let $f:\N\to\N$ be a function.
We say that $\calG$ admits an \defin{$f(n)$-bit adjacency labelling scheme} if there exists a function $A:(\set{0,1}^*)^2\to\set{0,1}$ such that for all positive integers $n$, for every $n$-vertex graph $G\in\calG$, there exists a function $\ell: V(G)\to\set{0,1}^*$ such that
$|\ell(v)|\leq f(n)$ for each vertex $v$ in $G$, and such that
for every two vertices $u$, $v$ in $G$,
\[
  A(\ell(u),\ell(v))=\begin{cases}
    0 & \textrm{if $uv\notin E(G)$}, \\
    1 & \textrm{if $uv\in E(G)$}.
  \end{cases}
\]

A graph $H$ is a \defin{minor} of a graph $G$ if a graph isomorphic to $H$ can
be obtained from a subgraph of $G$ by contracting edges. A class of graphs
$\mathcal{G}$ is \defin{minor-closed} if every minor of every graph in
$\mathcal{G}$ is also in $\mathcal{G}$, and it is \defin{proper} if it is not the class of all graphs. Some examples of proper proper minor-closed include planar graphs, graphs of treewidth at most $k$, graphs embeddable in a fixed
surface, linklessly embeddable graphs, knotlessly embeddable graphs and, for every fixed graph $X$, the class of graphs with no $X$-minor. Note that every proper minor-closed class is contained in the class of $K_t$-minor-free graphs for some fixed $t$.

In this paper we prove the following result. (All logarithms are in base $2$.)

\begin{thm}
  \label{main_result}
  Every proper minor-closed class of graphs admits
  a $(1+o(1))\log n$-bit adjacency labelling scheme.
\end{thm}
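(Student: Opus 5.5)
We plan to follow the strategy of Dujmovi\'c et al.\ (J.~ACM 2021) for planar and apex-minor-free graphs, replacing the product structure theorem (which fails for general $K_t$-minor-free graphs) with a structure that does hold. Recall the two ingredients used there. The first is that every planar graph is a subgraph of $H\boxtimes P$ with $\mathrm{tw}(H)$ bounded and $P$ a path. The second is a $(1+o(1))\log n$-bit labelling scheme for subgraphs of $H\boxtimes P$. That scheme uses \emph{bulk trees}, which are weighted balanced binary search trees over the layers of $P$ combined with a balanced-tree encoding of $H$, so that the total label length telescopes to $\log n + o(\log n)$.

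For $K_t$-minor-free graphs we would use the Graph Minor Structure Theorem. Each $G$ has a tree decomposition $(T,\{B_x\})$ of adhesion at most $k=k(t)$. Each torso $\torso{G}{B_x}$ is obtained from a graph almost embeddable in a bounded-genus surface by adding at most $k$ apices. After deleting the apices, each torso has product structure: it is contained in $H_x\boxtimes P_x \boxtimes K_{c}$ for bounded-treewidth $H_x$ and bounded $c$ (Dujmovi\'c et al.\ prove this for almost-embeddable graphs without apices). The labelling then proceeds in the following steps.

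\begin{enumerate}
\item Apex edges. Each vertex $v$ stores, for each of the $O(1)$ apices of its \emph{home bag}, the adjacency bit. Apex vertices themselves are identified by their identifiers in an ancestor bag.
\item Across-bag adjacencies. These reduce to adjacencies inside torsos, since every edge lies in some bag. Each vertex belongs to many bags, however. We handle this by rooting $T$ and assigning $v$ to the highest bag $x(v)$ containing it. Every edge $uv$ then lies in $x(u)$ or $x(v)$, and the other endpoint lies in the adhesion set of an ancestor path. Choosing a suitable root and normalising the decomposition keeps this consistent.
\item Weighted labelling. We weight each node of $T$ by the number of vertices it homes and build a centroid/bulk-tree style code for $T$. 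Node $x$ then gets a prefix code of length $\log(n/n_x)+O(\log\log n)$-type overhead. Inside node $x$, we label $v$ with the product-structure scheme applied to the homed vertices, of length $(1+o(1))\log n_x$. The two parts add to $(1+o(1))\log n$.
\item Adhesion edges. An edge $uv$ with $u$ homed at $x$ and $v$ in the adhesion $B_x\cap B_{\mathrm{parent}(x)}$ is handled by letting $u$ store a $k$-bit vector indexed by positions in that adhesion. This also needs the adhesion vertices to be identified from their own labels relative to $x$.
\end{enumerate}

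We expect the main obstacle to be step 4 and the interplay of steps 3 and 4. An adhesion vertex $v$ may be adjacent to vertices in unboundedly many child bags, and in a deep tree $v$ cannot store its positions in all of them. Also, $T$ may have depth $\Theta(n)$, so naive ancestor references cost too much. Our first attempt would be to choose the tree decomposition so that torsos have bounded-size adhesions, and to encode the tree via a heavy-path/bulk-tree decomposition whose total overhead is $o(\log n)$. Then $u$ needs to recognise only the $k$ adhesion vertices of its own parent edge, and each adhesion vertex can compute its index relative to any child from its own label plus $x$'s code. Simultaneously, the product-structure labels within torsos must be consistent with how adhesion cliques sit in $H_x\boxtimes P_x$; such cliques occupy $O(1)$ consecutive layers and bounded $H_x$-cliques. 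Making all of these codes fit into a single $o(\log n)$ slack is where most of the work would lie.
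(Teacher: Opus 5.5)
Your setup (home each vertex at the top-most bag containing it, give the home node a weighted code, and label the home torso) is the ``homes plus two-part labels'' starting point that the paper describes and then shows is insufficient. The proposal stops at the two issues that make up the actual proof, and the fixes you sketch for them do not work.

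\textbf{Cross-bag adjacency.} Suppose $u$ is homed at $x$ and $v\in B_x\cap B_{\mathrm{parent}(x)}$ is homed at a proper ancestor $y$. The tester must recognise from the two labels that $y$ is an ancestor of $x$, \emph{and} that $v$ is the particular adhesion vertex that one of $u$'s stored bits refers to.
\begin{itemize}
\item A code for the node $x$ alone carries no ancestry information.
\item A $k$-bit vector indexed by positions in $B_x\cap B_{\mathrm{parent}(x)}$ is useless unless $v$'s label determines its position there. But $v$ lies in the adhesions of unboundedly many descendants and cannot store these positions.
\end{itemize}
Your sentence ``each adhesion vertex can compute its index relative to any child from its own label plus $x$'s code'' is therefore the missing lemma, not a step.

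The paper supplies this lemma with weighted mixed labelling schemes, which require the torso scheme to label \emph{cliques} as well as vertices. Each clique $K$ of the torso gets a label $\mu(K)$. Each $w\in K$ gets a local identifier $\kappa(K,w)$ of $o(\log n)$ bits. An identity tester decides whether $v=w$ from $\mu(K)$, $\kappa(K,w)$ and $\mu(v)$.

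The node code of $x$ is then \emph{replaced} by the sequence of clique labels of the adhesions along the root-to-$x$ path. This single string identifies $x$, certifies ancestry, and lets $u$ name each neighbour in its parent adhesion by a depth and a local identifier.

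For the lengths to telescope, the label of an adhesion clique $K$ must have length $\log\omega(G)-\log\omega(G_K)+o(\log n)$, where $G_K$ is the part of the graph hanging below $K$. The paper obtains this by running the torso scheme with inflated weights $\delta(v)=k\,\omega(v)+\sum_{K\ni v}\omega(G_K)$. That in turn requires a weighted version of the $\mathcal{R}_k$ scheme that also produces clique labels, which is a genuine modification of the Dujmovi\'c et al.\ and Gawrychowski--Janczewski scheme.

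\textbf{Depth.} Even with mixed labellings, each step down the decomposition tree adds an overhead: at least a constant, and $g_3(n)+O(\log\log n)$ with $g_3(n)=O(\sqrt{\log n})$ in the paper. So the tree must have height $o(\log n/g_3(n))$.

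A heavy-path decomposition does not achieve this. A root-to-leaf path may cross $\Theta(\log n)$ light edges, each needing a fresh clique reference, which costs $\Omega(\log n)$ extra bits.

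The paper instead proceeds as follows.
\begin{itemize}
\item It repeatedly takes $X=\{x:\omega(T_x)>\omega(T)/b\}$. This is a $b$-skinny subtree, and removing it leaves pieces of weight at most $\omega(T)/b$.
\item Treating each such subtree as one bag and recursing gives a decomposition of height at most $\log_b n$. Each of its torsos has a $b$-skinny tree-decomposition.
\item It gives a separate mixed labelling scheme for graphs with $b$-skinny decompositions. Each level is merged into one bag, giving a path-decomposition of adhesion-width at most $bk$. The levels are coded with a Shannon--Fano--Elias tree. Each adhesion vertex is named by the depth of the lowest common ancestor (in that tree) of its first and last level, plus an index in $[bk]$.
\item This costs $O(k\log b+k\log\log n)$ bits, but only in vertex labels and local identifiers, not in clique labels. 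So it is paid once rather than once per level.
\end{itemize}
Choosing $b=2^{\sqrt{g_3(n)\log n}}$ balances the two costs and gives $\log n+O((\log n)^{3/4})$.

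Without these two ingredients your plan does not reach $(1+o(1))\log n$.
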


Note that all the dependence on the fixed proper minor-closed class in~\cref{main_result} is in the $o(\log n)$ term. Also, \cref{main_result} is optimal up to the $o(\log n)$ term,
which is $\Oh\left((\log n)^{3/4}\right)$.
The proof of~\cref{main_result} is constructive. For every fixed proper minor-closed class $\calG$, there is a polynomial-time algorithm that takes a graph $G\in\calG$ as input and constructs the labelling $\ell:V(G)\to\{0,1\}^*$.

A consequence\footnote{In fact, the viewpoints of adjacency labelling schemes and induced-universal graphs are essentially equivalent. A minor detail is that the labelling schemes need to be injective for the equivalence to hold but this is the case for those developed in this paper. See~\cite[Section~2.1]{spinrad:efficient} for more details on the connection between the two notions.} of \cref{main_result} is the existence of induced-universal graphs with a near-linear number of vertices for $n$-vertex graphs belonging to a fixed proper minor-closed class of graphs.

\begin{cor}\label{main_result_induced_universal}
  For every proper minor-closed class $\calG$, for every positive integer $n$,
  there exists a graph $U$ with $n^{1+o(1)}$ vertices such that every $n$-vertex graph in $\calG$ is isomorphic to an induced subgraph of $U$.
\end{cor}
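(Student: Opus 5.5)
The plan is to derive \cref{main_result_induced_universal} directly from \cref{main_result} by the standard translation between labelling schemes and induced-universal graphs. Fix $\calG$ and $n$. By \cref{main_result} there is a decoder $A$ and a function $f(n)=(1+o(1))\log n$ such that every $n$-vertex $G\in\calG$ has a labelling $\ell_G:V(G)\to\set{0,1}^*$ with $|\ell_G(v)|\le f(n)$ and $A(\ell_G(u),\ell_G(v))=1$ if and only if $uv\in E(G)$.

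We define $U$ as follows. Its vertex set is the set of all binary strings of length at most $f(n)$. Two distinct strings $x,y$ are adjacent in $U$ exactly when $A(x,y)=1$. Then $|V(U)|\le 2^{f(n)+1}=2\cdot n^{1+o(1)}=n^{1+o(1)}$. Given $G\in\calG$ on $n$ vertices, we map each $v$ to $\ell_G(v)$. Provided $\ell_G$ is injective, this is an embedding of $G$ onto the induced subgraph of $U$ on $\ell_G(V(G))$, since adjacency in $U$ among these strings coincides with $A$, which coincides with adjacency in $G$.

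Two technicalities need attention; the main one is injectivity, as the footnote indicates.
\begin{itemize}[label={}]
\item \emph{Injectivity of $\ell_G$.} The paper states that its schemes are injective, so I would invoke that. To be self-contained, one can instead append to each label a distinct $\lceil\log n\rceil$-bit vertex index. That costs $\log n$ extra bits and doubles the exponent, which is too much. The better fix is to note that the labels produced already encode a vertex's position in a product-structure/tree decomposition, and so are distinct.
\item \emph{Symmetry of $A$.} $U$ must be a well-defined simple graph even if $A$ is not symmetric. Declaring $xy\in E(U)$ iff $A(x,y)=1$ or $A(y,x)=1$ does not affect the embedded copies, because on labels of a single $G$ both values agree with $G$'s adjacency.
\end{itemize}
With these in hand the bound $2^{(1+o(1))\log n+1}=n^{1+o(1)}$ finishes the proof.
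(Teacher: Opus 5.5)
Your construction of $U$ is the same reduction the paper relies on. You take all strings of length at most $f(n)$, join two strings when the tester accepts them (your symmetrization is harmless), and embed $G$ via its labelling. The size bound $2^{f(n)+1}=n^{1+o(1)}$ is correct, and you are right that appending a vertex index costs too much.

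The one step with real content is injectivity, and your argument does not establish it. \cref{main_result} as stated only provides an adjacency labelling scheme, and such a scheme may legitimately give identical labels to false twins, for example all leaves of a star. So distinctness has to come from a specific feature of the construction. Saying the labels ``encode a vertex's position in a product-structure/tree decomposition'' asserts this rather than proves it. Along the recursion, many vertices share most parts of their labels: vertices in one layer share the same $\rho(i)$, and vertices in one $G_K$ share the same root clique label $\mu_R(K)$. Done directly, you would have to trace distinctness through every lemma down to the base scheme for $\mathcal{R}_k$.

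The missing idea is that the labels behind \cref{main_result} are vertex labels of a weighted mixed labelling, taken with $G^+=G$ and $\omega\equiv 1$. Injectivity is then forced by the identity-testing condition (iii). Since $K=\{v\}$ is a clique of $G^+$, if $\mu(v)=\mu(w)$ for some $w\neq v$ then $1=I(\mu(K),\kappa(K,v),\mu(v))=I(\mu(K),\kappa(K,v),\mu(w))=0$, a contradiction. This is exactly the short lemma the paper states right after defining efficient schemes. With it in place of your ``better fix'', the proof is complete.
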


\subsection{State of the Art}
\label{ssec:state-of-art}

Adjacency labelling schemes were introduced in the late 1980s by \citet{kannan.naor.ea:implicit} and independently in the PhD thesis of \citet{muller:local}. Since this initial work, labelling schemes have been developed for many other queries besides adjacency, including distances \citep{GAVOILLE200485}, ancestry and nearest common ancestors in
trees \citep{AKM01,alstrup.rauhe:improved,DBLP:journals/mst/AlstrupGKR04}, reachability in planar
digraphs \citep{DBLP:journals/jacm/Thorup04,DBLP:conf/focs/HolmRT15}, and flows and connectivity \citep{DBLP:journals/siamcomp/KatzKKP04}. A review of this vast literature is beyond the scope of this paper. Here we review results most relevant to the current work, focusing mainly on trees, bounded treewidth graphs, and planar graphs.

Adjacency labelling schemes have also been studied for other general families of graph classes. At the dense end of the spectrum, \citet{DBLP:conf/stoc/BonamyEGS21} constructed
asymptotically optimal adjacency labelling schemes for every hereditary class
containing $2^{\Omega(n^2)}$ $n$-vertex graphs. In particular, they proved that comparability graphs admit a labelling with labels of length $n/4+o(n)$, which is best possible up to the lower-order term. Proper minor-closed classes lie at the opposite, sparse end of the spectrum, where the situation is markedly different, as we now discuss.

We start with a simple example to better grasp the notion of labelling schemes. 
Given an $n$-vertex tree $T$, fix an arbitrary vertex to be the root of $T$, and assign each vertex of $T$ a unique identifier, i.e., a number in $\set{0,\ldots,n-1}$.
For each non-root vertex $v$ of $T$,
let the label of $v$ be
the pair consisting of its identifier and the identifier of its parent.
For the root vertex of $T$, let its label be
the pair consisting of two copies of its identifier.
Clearly, given two labels of vertices in $T$,
a test of adjacency is simply to verify whether an identifier of one vertex is equal to a stored identifier of the parent of the other vertex.
This constitutes a $\lceil\log(n^2)\rceil = \lceil2\log{n}\rceil$-bit adjacency labelling scheme for trees.
There is a fascinating series of results improving on this simple scheme for
trees. In 1990, a result of \citet{chung:universal} about induced-universal
graphs implies that $n$-vertex trees admit an adjacency labelling scheme with
labels of length $\log n + \log\log n + \Oh(1)$. In 2002, \citet{AR02} devised
an improved scheme with labels of length $\log n + \Oh(\log^* n)$. Finally, in
2015, \citet{alstrup.dahlgaard.ea:optimal} gave a $(\log n + \Oh(1))$-bit
adjacency labelling scheme for trees, which is optimal up to the $\Oh(1)$ term. Indeed, consider an $n$-vertex graph with no two vertices having the same neighborhood (e.g., a path for $n\geq4$). In any adjacency labelling scheme, each vertex must be assigned a distinct label and therefore one label must be of length at least  $\log (n+1)-1$.

In 2007, \citet{gavoille.labourel:shorter} presented
a $(1+o(1))\log n$-bit adjacency labelling scheme for graphs of bounded treewidth.
This is particularly relevant to the topic of this paper because of the following famous application of the graph minor structure theorem:
\citet{devos2004} showed that every graph
in a proper minor-closed class can be edge $2$-coloured so that each monochromatic subgraph
has bounded treewidth.
Therefore, given a proper minor-closed class of graphs $\calG$,
for each $G$ in $\calG$, fix such a $2$-colouring of the edges of $G$, say with red and blue, and label each vertex of $G$ by the concatenation of the two labels given by the adjacency labelling schemes for the red subgraph of $G$ and for the blue subgraph of $G$. Thus, all labels are of length $(2+o(1))\log n$.
Given two labels, to test adjacency simply test whether there is a blue edge or whether there is a red edge.
This gives a $(2+o(1))\log n$-bit adjacency labelling scheme for a fixed proper minor-closed class of graphs, which remained the best-known scheme up to the present work.

Planar graphs are a prominent example of a proper minor-closed class of graphs and there is a long history of research
on their adjacency labelling schemes.
Since planar graphs are $5$-degenerate,
they admit a simple $\lceil6\log n\rceil$-bit adjacency labelling scheme, which was already observed by \citet{muller:local}.
\citet{kannan.naor.ea:implicit} use a similar approach that makes use of the fact that planar graphs have arboricity~3 (so their edges can be partitioned into three forests \citep{nash-williams:edge-disjoint}) to devise an adjacency labelling scheme for planar graphs whose labels have length at most $\lceil4\log n\rceil$. This length was reduced to $3\log{n} + \Oh(\log^*{n})$ by the improved scheme of \citet{AR02} for arboricity-$k$ graphs. 
Of course, the later result of \citet{gavoille.labourel:shorter} also applies to planar graphs, and gives a scheme with labels of length $(2+o(1))\log n$.
The most recent breakthrough came with the application of the product structure theorem for planar graphs by \citet{dujmovic.joret.ea:planar}.
In 2020, \citet{BGP22} recognized that the product structure is particularly useful for adjacency labelling, breaking the $(2+o(1))\log n$  barrier by introducing a scheme with labels of length $(\nicefrac{4}{3}+o(1))\log n$.
Finally, in 2020 \citet{dujmovic.esperet.ea:adjacency} used the product structure theorem to give a $(1+o(1))\log n$-bit adjacency labelling scheme for planar graphs, which is optimal up to the lower order term. In \citet{dujmovic.esperet.ea:adjacency}, the lower order term is in $\Oh(\sqrt{\log n\log\log n})$.  \citet{gawrychowski.janczewski:simpler} later introduced a simplification that reduces the lower order term to  $\Oh(\sqrt{\log n})$ and allows for the adjacency testing function to be implemented in constant time in a realistic model of computation.

The adjacency labelling scheme of \citet{dujmovic.esperet.ea:adjacency} works for  graphs embeddable in any fixed surface, and more generally for any minor-closed class excluding a fixed apex graph as a minor. Here a graph $X$ is \defin{apex} if $X$ can be made planar by the removal of at most one vertex. Since $K_5$ is apex, this implies a $(1+o(1))\log n$-bit adjacency labelling scheme for $K_5$-minor-free graphs. Such apex-minor-free classes are the limit of this approach, since a minor-closed class $\mathcal{G}$ has `product structure' if and only if some apex graph is not in $\mathcal{G}$. Indeed, the existence of a $(1+o(1))\log n$-bit adjacency labelling scheme for $K_t$-minor-free graphs is stated as an open problem by \citet[Section~6, Problem~2]{dujmovic.esperet.ea:adjacency}. \citet{BGP22} likewise anticipated that such labelling schemes might extend to all proper minor-closed classes via the graph minor structure theorem of Robertson and Seymour, and identified the difficulty: ``we do not know how to combine labeling schemes along tree decompositions''.
 Prior to the present work, the best result for $K_t$-minor-free graphs, $t > 5$, was the $(2+o(1))\log n$-bit adjacency labelling scheme of \citet{gavoille.labourel:shorter}, who also conjectured that labels of length $\log n + \Oh(1)$ suffice for
every proper minor-closed class of graphs.

Finally we discuss the relationship between our result and the Implicit Graph Conjecture. A hereditary graph class is \defin{small} if it has $2^{O(n\log n)}$ labelled $n$-vertex graphs for all $n$. For example, every proper minor-closed graph class is small~\citep{NSTW06}. The Implicit Graph Conjecture~\citep{kannan.naor.ea:implicit} asserted that every small graph class admits an $O(\log n)$-bit adjacency labelling scheme. Proper minor-closed classes satisfy this conjecture, and indeed they admit labels of only $(1+o(1))\log n$ bits by \cref{main_result}. The conjecture in general was disproved by \citet{HH22}, who in fact showed there exist small graph classes for which every adjacency labelling scheme needs $\Omega(n^{1/2-\epsilon})$ bits. See \citep{Bonnet-etal-ICALP24,Bonnet-etal-SJC24,Alon24} for recent extensions. In light of the counterexamples of \cite{HH22}, it remains an interesting problem to determine which hereditary classes admit $O(\log n)$-bit, or even $(1+o(1))\log n$-bit, adjacency labelling schemes. Our result shows that excluding a fixed minor suffices for the strongest conclusion.

\subsection{Setup and Proof Overview}
\label{KeyTools}

In this paper, all graphs are finite, simple, and undirected.  For a graph $G$, \defin{$V(G)$} denotes the vertex-set of $G$, \defin{$E(G)$} denotes the edge-set of $G$.  A \defin{clique} in a graph is a non-empty set of pairwise adjacent vertices.   Let $G$ be a graph and $X\subseteq V(G)$. By $G[X]$ we denote the subgraph of $G$ induced on $X$. For other standard graph theory terms, we refer the reader to the textbook by \citet{diestel:graph}.

A \defin{tree-decomposition} of a graph $G$ is a pair $(T,(B_x \mid x \in V(T)))$, where $T$ is a tree and $B_x \subseteq V(G)$ for every $x \in V(T)$, with the following properties:
\begin{enumerate*}[label=(\alph*)]
  \item for each vertex $u$ in $G$, the subgraph of $T$ induced by $\{x \in V(T) \mid u \in B_x\}$ is non-empty and connected; and
  \item for each edge $uv$ in $G$, there exists $x\in V(T)$ such that $u,v\in B_x$.
\end{enumerate*}
We call the sets $B_x$ the \defin{bags} of the tree-decomposition and, for each edge $xy$ of $T$, $B_x\cap B_y$ is an \defin{adhesion} of the tree-decomposition.  When $T$ is rooted and $x$ is the parent of $y$, $B_x\cap B_y$ is the \defin{parent adhesion} of $y$. The \defin{parent adhesion} of the root of $T$ is the empty set. The \defin{home} of a vertex $v$ of $G$ is the unique $x\in V(T)$ such that $v\in B_x$ and $v$ is not in the parent-adhesion of $x$. The \defin{adhesion-width} of a tree-decomposition is the maximum size of an adhesion. For a tree-decomposition $\mathcal{T}:=(T,(B_x \mid x\in V(T)))$ of a graph $G$, for each  $x\in V(T)$, the \defin{torso} of $x$, denoted by \defin{$\torso{G}{\mathcal{T},B_x}$}, is the graph with vertex-set $B_x$ in which two distinct vertices $u$ and $v$ are adjacent if $uv \in E(G)$, or there exists $y \in N_T(x)$  such that $u,v \in B_x \cap B_y$.

The \defin{strong product} of graphs~$A$ and~$B$, denoted by~${A \boxtimes B}$, is the graph with vertex-set ${V(A) \times V(B)}$, where distinct vertices ${(v,x),(w,y) \in V(A) \times V(B)}$ are adjacent if
${v=w}$ and ${xy \in E(B)}$, or
${x=y}$ and ${vw \in E(A)}$, or
${vw \in E(A)}$ and~${xy \in E(B)}$.
For an integer $k\geq 0$, let \defin{$\mathcal{R}_{k}$} be the class of graphs isomorphic to a subgraph of $H\boxtimes P$ for some graph $H$ with treewidth at most $k$ and for some path $P$.  The main result of \citet{dujmovic.esperet.ea:adjacency} is a $(1+o(1))\log n$-bit adjacency labelling scheme for the class $\mathcal{R}_k$, for any fixed $k$.

For any graph class $\mathcal{G}$, let \defin{$\mathcal{G}^{+a}$} be the class of graphs $G$ such that $G-X$ is in $\mathcal{G}$ for some $X\subseteq V(G)$ with $|X|\leq a$.  It is a simple exercise to show that any $(1+o(1))\log n$-bit adjacency labelling scheme for a graph class $\mathcal{G}$ implies the existence of a $(1+o(1))\log n$-bit adjacency labelling scheme for $\mathcal{G}^{+a}$ (see \cref{sec:adding_apexes} for a more general result).  In particular, for any fixed $k,a\ge 0$, there exists a $(1+o(1))\log n$-bit adjacency labelling scheme for $\mathcal{R}_k^{+a}$. 

Our starting point is the following variant of the Graph Minor Structure Theorem:

\begin{thm}[\protect Graph Minor Product Structure Theorem~\citep{dujmovic.joret.ea:planar}]
  \label{gmpst}
  For every proper minor-closed class $\mathcal{G}$ there exist integers $k,a\geq 0$ such that every graph in $\mathcal{G}$ has a tree-decomposition in which every torso is in $\mathcal{R}_k^{+a}$.
\end{thm}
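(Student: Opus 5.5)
The theorem is attributed to \citet{dujmovic.joret.ea:planar}. I would derive it from the Robertson--Seymour Graph Minor Structure Theorem together with the product structure theorem for apex-minor-free, or more precisely almost-embeddable, graphs. By the remark above, $\mathcal{G}$ is contained in the class of $K_t$-minor-free graphs for some $t$. By Robertson and Seymour, there is an integer $p=p(t)$ such that every $K_t$-minor-free graph $G$ has a tree-decomposition in which every torso is $p$-almost-embeddable. That is, each torso $\torso{G}{\mathcal{T},B_x}$ is obtained by deleting at most $p$ apex vertices, leaving a graph $G_0\cup G_1\cup\dots\cup G_{p}$. Here $G_0$ is embedded in a surface of Euler genus at most $p$, and each $G_i$ ($i\geq1$) is a vortex of width at most $p$ glued into a face of $G_0$. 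The plan is to show that every such $p$-almost-embeddable graph minus its apices lies in $\mathcal{R}_k$ for some $k=k(p)$, and then set $a:=p$.

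The first step is the surface case. \citet{dujmovic.joret.ea:planar} prove that every planar graph is a subgraph of $H\boxtimes P$ with $\operatorname{tw}(H)\le 8$. The proof goes through a BFS layering $(V_0,V_1,\dots)$ and a partition of $V(G)$ into vertical paths, meaning geodesics in the BFS tree. The quotient by this partition has bounded treewidth, and the path $P$ indexes the layers. For a graph of Euler genus $g$, I would cut along $O(g)$ BFS-tree-based non-contractible cycles, i.e.\ fundamental cycles built from $O(g)$ vertical paths. This planarises the graph, and the cut vertices are added to every bag of $H$. This yields treewidth $O(g)$.

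The second step is vortices. Here I would choose the BFS layering with all vortex faces reachable in one step, by adding a root adjacent to each boundary vertex of each face containing a vortex. This keeps each vortex within $O(1)$ consecutive layers near its boundary cycle. Each vortex vertex then lies in a bag of a path-decomposition of width $\le p$ indexed by boundary vertices. I would assign each vortex vertex to the part of $H$ containing its associated boundary vertices, and to a layer adjacent to them. This blows up $H$'s treewidth by a factor depending only on $p$ and keeps edges between consecutive layers of $P$. Finally, I would handle the apex vertices by the definition of $\mathcal{R}_k^{+a}$, and check that the torso edges coming from adhesions are already present in the almost-embeddable torso as produced by Robertson--Seymour.

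The main obstacle is the vortex step. Vortex vertices can be adjacent to many boundary vertices spread along the face, so a naive assignment breaks either the bounded treewidth of $H$ or the layering. I would first try to strengthen the vertical-path partition so that each vortex's boundary is covered by the layered tripod construction with bounded multiplicity. Failing that, I would first pass to the intermediate notion of layered treewidth and layered partitions of bounded width, then convert that to a product structure.
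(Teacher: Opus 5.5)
\paragraph{Where the gap is.} Your outline has the same architecture as the proof of Dujmovi\'c et al.\ that the paper cites: apply the structure theorem, prove product structure for an almost-embeddable torso with its apices deleted, and absorb the apices into the $+a$. The surface step is fine in outline, provided you contract the cut subgraph $Z$ (which is connected and contains the root) and apply the tripod lemma to the planar graph $G/Z$ with spanning tree $T/Z$. Applying the planar theorem to $G-V(Z)$ instead would give a layering unrelated to that of $G$.

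The genuine gap is the vortex step, which you leave open. Your rule ``assign each vortex vertex to the part containing its associated boundary vertices'' is not well defined. The boundary vertices $x_j$ with $y\in B_j$ form an interval of the face, and that interval generally meets many different vertical paths. Your fallback does not rescue this. Bounded layered treewidth does not imply product structure in general: there are graphs of layered treewidth $1$ and arbitrarily large row treewidth. So an argument specific to vortices is still needed.

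\paragraph{How to close it.} That argument follows from your own root trick.
\begin{enumerate}
\item \emph{Setup.} Let $x_1,\dots,x_m$ be the boundary vertices of a vortex in order, with path-decomposition $(B_1,\dots,B_m)$, $x_j\in B_j$ and $|B_j|\le p+1$. Add the edges $x_jx_{j+1}$ inside the disc. Add a single vertex $r$ adjacent to the boundary vertices of all vortices; this raises the Euler genus by at most $2(p-1)$. Triangulate, and take a BFS tree from $r$.
\item \emph{Key observation.} Every boundary vertex is then in layer $1$. Hence each vertical path contains at most one boundary vertex.
\item \emph{Assigning vortex vertices.} Set $c(x_i):=i$ for boundary vertices and $c(y):=\min\{j:y\in B_j\}$ for internal vortex vertices. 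Put each internal $y$ in layer $1$ and in the part of $x_{c(y)}$.
\item \emph{Layered width.} Each part gains at most $p$ vertices, all in layer $1$. Every vortex edge lies inside layer $1$. So the layered width is at most $p+1$.
\item \emph{Quotient treewidth.} Take a tree-decomposition of the bounded-genus quotient. In every bag containing the part of $x_j$, add the parts of all vertices of $B_j$. This multiplies bag sizes by at most $p+2$. It covers every vortex edge, since the two ends share some $B_j$.
\item \emph{Validity.} The result is still a tree-decomposition. For each part, the indices $j$ at which it was added form a union of intervals $\{j:y\in B_j\}$ that all contain the same $c(y)$. Consecutive boundary vertices lie in equal or adjacent parts, so these occurrences form a connected subtree.
\end{enumerate}
Hence $G_0\cup G_1\cup\dots\cup G_p\subseteq H\boxtimes P\boxtimes K_{p+1}$ with $\mathrm{tw}(H)$ bounded in terms of $p$. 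Since $H\boxtimes K_{p+1}$ still has bounded treewidth, the torso minus its apices lies in $\mathcal{R}_k$ for some $k=k(p)$.
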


We now have enough background to sketch the proof of our main result.  We begin by describing natural approaches to the problem, where these approaches have issues, and then explain how we overcome these issues. By the above-mentioned results, it suffices to consider a graph class $\mathcal{G}$ with the following property:
For every $G\in\mathcal{G}$, there exists a tree-decomposition $\mathcal{T}:=(T,(B_x\mid x\in V(T)))$ of $G$ such that each torso of $\mathcal{T}$ comes from a monotone graph class $\mathcal{G}'$ that has a $((1+o(1))\log n)$-bit adjacency labelling scheme.  Let $A'$ be the adjacency tester for the labelling scheme on $\mathcal{G}'$. 

Let $G$ be an $n$-vertex graph in $\mathcal{G}$ and let $\mathcal{T}:=(T,(B_x\mid x\in V(T)))$ be a tree-decomposition of $G$ such that each torso of $\mathcal{T}$ is in the class $\mathcal{G}'$.  Each edge of $G$ is present in $G[B_x]$ for at least one $x\in V(T)$. Thus, a natural first approach is to compute an adjacency labelling of $G[B_x]$ for $A'$, for each $x\in V(T)$.  The problem with this approach is that a vertex $v$ of $G$ can appear in many bags of $\mathcal{T}$, so it can have many labels. Simply concatenating these labels would result in labels of length much greater than $\log n$.

\paragraph{Assigning homes:} For each $x\in V(T)$, let $A_x$ be the parent adhesion of $x$ in $\mathcal{T}$ and let $B^-_x:=B_x\setminus A_x$.  (Thus $x$ is the home of each $v\in B^-_x$.) Then $\{B^-_x\mid x\in V(T)\}$ is a partition of $V(G)$.  A natural second attempt is to compute an adjacency labelling $\ell_x$ of $G[B^-_x]$ for $A'$, for each $x\in V(T)$.  This leaves two closely-related issues:
\begin{enumerate}[nosep,nolistsep]
  \item For two vertices $v\in B^-_x$ and $w\in B^-_y$ with $x\neq y$, the adjacency test $A'(\ell_x(v),\ell_y(w))$ does not produce a useful result, since $\ell_x$ and $\ell_y$ are labellings of different graphs. To avoid misusing $A'$ this way, we need a way of determining that $x\neq y$ from the labels of $v$ and $w$.
  \item In the case where $x\neq y$, we need an alternative way of testing if $v$ and $w$ are adjacent, since $vw$ is not present in $\bigcup_{x\in V(T)}G[B^-_x]$ even when $vw\in E(G)$.
\end{enumerate}

These two issues make precise the difficulty that was also identified by \citet{BGP22}, namely
that of combining labellings along a tree-decomposition. The tools we develop
next are designed to overcome it.

\paragraph{2-Part labels:}
To deal with the first issue, we can assign each vertex $v$ a 2-part label $\langle x(v),\ell_x(v) \rangle$, where $x(v)$ is a unique identifier for the node $x$ of $T$ such that $v\in B^-_x$.  Doing this naïvely could easily lead to a label length of $2\log n + o(\log n)$.  This can be avoided by using a variable-length code so that $|x(v)|=\log n-\log|B^-_{x(v)}|+\Oh(1)$.  This way, $|x(v)|+|\ell_x(v)|=|x(v)|+\log|B^-_{x(v)}|+o(\log n)=\log n + o(\log n)$.  This solves the first of the two problems above, but still leaves us with the second problem.  There is still no way of testing adjacency between vertices with different homes.

\paragraph{Mixed labelling schemes:} An important contribution of the current work is to introduce a generalization of adjacency labelling schemes, called \defin{weighted mixed labelling schemes}, that attempts to solve both of these problems simultaneously. The weights play an important but technical role. For the simplicity of this overview we omit the role of the weights for now, speaking only of \emph{mixed labelling schemes}, and return to it at the end of the overview when stating our main technical result. The formal definition appears at the beginning of \cref{weighted_mixed_section}. Informally, a mixed labelling $\mu$ is defined for a pair of graphs $Q^+$ and $Q$, where $Q$ is a spanning subgraph of $Q^+$, and $\mu$ defines labels for both vertices and cliques of $Q^+$. The vertex labels assigned by $\mu$ are used with an adjacency tester $A$ for the spanning subgraph $Q$.  That is, for any two vertices $v$ and $w$ of $Q$, $A(\mu(v),\mu(w)) = 1$ if and only if $vw\in E(Q)$.  In addition to the vertex and clique labels, a mixed labelling assigns, for each clique $K$ of $Q^+$ and each vertex $u\in K$, a short \defin{local identifier} $\kappa(K,u)$ for $u$ in $K$. These clique, vertex, and local identifiers are used with an \defin{identity tester} $I$.  For any clique $K$ of $Q^+$, any vertex $u\in K$, and any vertex $v$ of $Q$, $I(\mu(K),\kappa(K,u),\mu(v))=1$ if and only if $v=u$.

Loosely speaking, a class admits an \emph{efficient} mixed labelling scheme if it admits a $(1+o(1))\log n$-bit mixed labelling scheme. This notion of efficiency is defined formally in the introduction to \cref{weighted_mixed_section}. \Cref{weighted_product_proof} shows that the adjacency labelling scheme for graphs in $\mathcal{R}_k$ given by \citet{dujmovic.esperet.ea:adjacency} and its improvements by \citet{gawrychowski.janczewski:simpler} can be extended to give an efficient mixed labelling scheme for $\mathcal{R}_k$.  \Cref{add_apexes} shows that the addition of apex vertices to obtain the class $\mathcal{R}^{+a}_k$ can be accommodated by any mixed labelling scheme.  The resulting mixed labelling scheme has clique and vertex labels of length $\log n + \Oh(\sqrt{\log n})$ (see \cref{gk_labelling_precise}).

Thus, for every proper minor-closed graph family $\mathcal{G}$, every graph $G$ in $\mathcal{G}$ has a tree-decomposition $\mathcal{T}$ of bounded adhesion-width whose torsos come from a class that has an efficient mixed labelling scheme.  This is the only property of proper minor-closed graph classes that we use.

\paragraph{A solution for short tree-decompositions:}
We now describe a first attempt to use mixed labelling schemes to simultaneously address the questions of determining if two vertices of $G$ have the same home and testing adjacency between two vertices of $G$ with different homes.  We begin by constructing a mixed labelling $\mu_x$ for the pair of graphs $\torso{G}{\mathcal{T},B_x}-A_x$ and $G[B_x]-A_x$, for each $x\in V(T)$.  Then, the label $\mu(v)$ for a vertex $v$ in $G$ has the form
\[
  \mu(v):=\langle \mu_{x_0}(K_1),\ldots,\mu_{x_{p-1}}(K_p),\mu_{x_p}(v),\alpha(v)\rangle
\]
where $x_0,\ldots,x_p$ is the path from the root $x_0$ of $T$ to the home $x=x_p$ of $v$, and $K_i=A_{x_i}\setminus A_{x_{i-1}}$ for each $i\in\{1,\ldots,p\}$ is a clique in $\torso{G}{\mathcal{T},B_{x_{i-1}}}-A_{x_{i-1}}$.  Here, $\alpha(v)$ is a form of adjacency list that stores, for each $u\in N_G(v)\cap A_{x_p}$, the index $i$ of the clique $K_i$ that contains $u$ and the local identifier $\kappa_{x_{i-1}}(K_i,u)$.

Now, consider a vertex $v$ whose home is $x$ and a vertex $w$ whose home is $y$.  By examining the sequence of clique labels in $\mu(v)$ and $\mu(w)$, an adjacency tester can determine which of the following four cases applies:
\begin{itemize}[nosep,nolistsep]
  \item $x=y$: In this case, the tester can use $\mu_x(v)$ and $\mu_y(w)=\mu_x(w)$ to determine if $vw\in E(G[B_x]-A_x)$.
  \item $x$ is an ancestor of $y$:  In this case, the path $x_0,\ldots,x_q$ from the root of $T$ to $y$ contains a node $x_p=x$.  The only way in which $v$ and $w$ can be adjacent is if $v\in K_{p+1}$.  If $v$ and $w$ are adjacent, then $v\in N_G(w)\cap A_{x_q}$ and therefore the pair $(p+1,\kappa_{x_p}(K_{p+1}, v))$ is contained in $\alpha(w)$. The adjacency tester checks this by searching through the entries in $\alpha(w)$ looking for an entry $(p+1,\kappa_{x_p}(K_{p+1}, v'))$ such that $I(\mu_{x_p}(K_{p+1}),\kappa_{x_p}(K_{p+1},v'),\mu(v))=1$.
  \item $y$ is an ancestor of $x$: This is symmetric to the previous case.
  \item $x$ and $y$ are not in an ancestor-descendant relationship.  In this case $A_x$ separates $v$ and $w$, so $v$ and $w$ are not adjacent.
\end{itemize}

With a careful application of variable-length codes, we can ensure that the total length of the label $\mu(v)$ for a vertex $v$ whose home has depth\footnote{The \defin{depth} of a node $x$ in a rooted tree is the number of edges on the path from $x$ to the root of the tree. The \defin{height} of the tree is the maximum depth of a node of the tree.} $p$ is at most $\log n + p\cdot g_3(n)+ o(\log n)$, where $g_3(n)$, with  $1\leq g_3(n)\in o(\log n)$, is some overhead that is incurred with each step $x_{i-1}x_i$ in the path $x_0,\ldots,x_p$.  In particular, there is an additive overhead of $g_3(n)$ in the length of $\mu_{x_{i-1}}(K_i)$ for each $i\in\{1,\ldots,p\}$.

Unfortunately, mixed labelling schemes still do not completely solve the problem, because the length $p$ of the path $x_0,\ldots,x_p$ can only be upper bounded by the height of $T$, which may be $\Omega(n)$. To prove \cref{main_result} would require that $\height(T)\cdot g_3(n)=o(\log n)$.  Even when $g_3(n)=1$, this would require a tree-decomposition of height $o(\log n)$, which no variant of \cref{gmpst} can guarantee.

Nevertheless, this strategy gives a $((1+o(1))\log n)$-bit adjacency labelling scheme for the class of graphs whose $n$-vertex members have a tree-decomposition of bounded adhesion-width and height %
$o(\log n/g_3(n))$ whose torsos come from a graph class $\mathcal{G}$ that has an efficient mixed labelling scheme.  (The function $g_3(n)$ is determined by the mixed labelling scheme for $\mathcal{G}$.) This result is \cref{small_height} in \cref{sec:short}.

\paragraph{Short partition into skinny subtrees:}
To deal with the case where the tree $T$ in the tree-decomposition $\mathcal{T}$ does not have small height, we partition the vertices of $T$ into a collection of subtrees such that each root to leaf path of $T$ intersects at most $1+\log_b n$ of these subtrees, for a carefully chosen value of $b$.  This is done in such a way that each subtree in the decomposition, rooted at its node closest to the root of $T$, is \defin{$b$-skinny}: it has at most $b$ nodes
at depth $i$ (measured from its own root), for each integer $i\ge 0$. This partition is described in \cref{b_skinny_decomp} in \cref{lowpw}. By treating each subtree as a single giant bag, this gives a tree-decomposition $\mathcal{T}'$ of $G$ whose height is at most $\log_b n$, and in which each torso of $\mathcal{T}'$  has a $b$-skinny tree-decomposition whose bags are bags of $\mathcal{T}$. In particular, the torsos of this $b$-skinny tree-decomposition belong to the original class admitting an efficient mixed labelling scheme.  Note that at this point we do not yet know how to label the (typically large) bags of $\mathcal{T}'$ themselves. That is the purpose of the next step.

\paragraph{Skinny tree-decompositions:} The last ingredient needed to complete the proof is a method of handling graphs
that have $b$-skinny tree-decompositions whose torsos belong to a graph class that
admits an efficient mixed labelling scheme.  To use our solution for short
tree-decompositions, we must show that such graphs have an efficient mixed
labelling scheme. To do this, we group the nodes of the tree-decomposition $T$
into layers $L_1,\ldots,L_p$, where $L_i$ contains all nodes of depth $i-1$.  For
each $i\in\{1,\ldots,p\}$, a mixed labelling $\mu_i$ is computed for the pair of
graphs $\bigcup_{x\in L_i}(\torso{G}{\mathcal{T},B_x}-A_x)$ and
$\bigcup_{x\in L_i}(G[B_x]-A_x)$. We then reuse ideas of \citet{gavoille.labourel:shorter} and \citet{dujmovic.esperet.ea:adjacency} to handle adjacency testing between vertices in different homes and extend these to provide the remaining functionality (clique labels, local identifiers, and identity testing) required for a mixed labelling scheme.  The resulting mixed labelling scheme has labels of length $\log n + O(\log b) + o(\log n)$. 

\paragraph{Closing the loop:}
Putting the preceding pieces together with $g_3(n)=O(\sqrt{\log n})$ and setting $b=2^{(\log n)^{3/4}}$ shows
that any proper minor-closed graph class has an $f(n)$-bit adjacency labelling scheme where
\[
  f(n) := \log n + (\log_b n)\cdot O(\sqrt{\log n}) + O(\log b) = \log n + O((\log n)^{3/4}) \enspace .
\]
Without much extra work, the adjacency-labelling solution for short tree-decompositions can be extended to provide an efficient mixed labelling scheme.  Since this only relies on the underlying tree-decomposition having bounded adhesion-width and having torsos in a class $\mathcal{G}$ that has an efficient mixed labelling scheme, our main result is thus described concisely by the following theorem:

\begin{thm}\label{PlainRealMainTechnical}
  Let $\mathcal{G}$ be a hereditary graph class closed under taking disjoint union and admitting an efficient weighted mixed labelling scheme. Then, for any fixed integer $k\geq 0$, the class of graphs that have a tree-decomposition of adhesion-width at most $k$ whose torsos belong to $\mathcal{G}$ also admits an efficient weighted mixed labelling scheme.
\end{thm}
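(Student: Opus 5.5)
The plan follows the three-stage outline sketched above, with each stage producing an efficient weighted mixed labelling scheme that the next stage consumes. Fix an $n$-vertex graph $G$ with a tree-decomposition $\mathcal{T}=(T,(B_x\mid x\in V(T)))$ of adhesion-width at most $k$ whose torsos lie in $\mathcal{G}$. Since $\mathcal{G}$ is hereditary, each graph $\torso{G}{\mathcal{T},B_x}-A_x$ lies in $\mathcal{G}$. Since $\mathcal{G}$ is closed under disjoint union, so does any union of such graphs over a set of nodes. The latter fact is what lets us label a whole layer of nodes with a single mixed labelling.

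\textbf{Step 1: skinny partition.} We apply the partition of \cref{b_skinny_decomp} with $b=2^{(\log n)^{3/4}}$ to split $V(T)$ into $b$-skinny subtrees. Every root-to-leaf path of $T$ meets at most $1+\log_b n$ of them. Contracting each subtree to one bag gives a tree-decomposition $\mathcal{T}'$ of height at most $\log_b n$. Its adhesions are still adhesions of $\mathcal{T}$, so its adhesion-width is still at most $k$.

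\textbf{Step 2: skinny tree-decompositions admit efficient mixed schemes.} Fix a bag $B'$ of $\mathcal{T}'$, together with the $b$-skinny tree-decomposition of its torso that is inherited from $\mathcal{T}$. We layer its nodes by depth into $L_1,\dots,L_p$. For each $i$, we take the weighted mixed labelling $\mu_i$ of $\bigcup_{x\in L_i}(\torso{G}{\mathcal{T},B_x}-A_x)$ versus $\bigcup_{x\in L_i}(G[B_x]-A_x)$. A vertex label consists of:
\begin{itemize}
\item its layer index;
\item an $O(\log b)$-bit position among the at most $b$ nodes of its layer;
\item a code for its root-path of nodes, recording only the $O(\log b)$-bit branching choice at each step;
\item its $\mu_i$ label;
\item an adjacency list $\alpha(v)$ for its at most $k$ neighbours in $A_x$, where each neighbour is stored as a pair (depth of the ancestor whose layer clique contains it, local identifier in that clique).
\end{itemize}
The adhesion $A_x$ is a clique of the torso of the parent, so identity testing against stored local identifiers works through $\mu_{i-1}$. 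The path code would cost $p\log b$ bits, and $p$ can be large. This is where I would reuse the Gavoille--Labourel / Dujmovi\'c et al.\ trick: since the $O(\log b)$-bit position within a layer identifies the node, an ancestor's identity is recoverable from the pair (depth, position). So each neighbour pointer in $\alpha(v)$ needs only $O(\log b)$ bits plus the $O(\log\log n)$-bit depth. The weights are then chosen proportional to the bag sizes, so that the variable-length layer and position codes telescope to $\log n+O(\log b)+o(\log n)$.

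The remaining mixed-scheme functionality must also be provided. Clique labels for cliques $K$ of the outer torso are handled as follows: such a $K$ lies in a single bag of the skinny decomposition, so we label it by its node's (layer, position) together with the $\mu_i$-clique label.

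\textbf{Step 3: short decomposition.} We apply \cref{small_height} to $\mathcal{T}'$, using the Step~2 scheme as the efficient mixed scheme for its torsos. The overhead is $g_3(n)=O(\sqrt{\log n})$ per level over $\log_b n$ levels, giving
\[
\log n+(\log_b n)\,O(\sqrt{\log n})+O(\log b)=\log n+O((\log n)^{3/4}).
\]
Finally we extend \cref{small_height} to output clique labels and local identifiers, making the result a mixed scheme rather than merely an adjacency scheme. Every clique of the final torso lies in a single home path, so its label is the root-path prefix followed by the clique label in its final bag.

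The main obstacle is Step~2, making everything fit in $\log n+O(\log b)$ bits simultaneously for vertex and clique labels. The crux is using weights to allocate the $\log n$ budget across layers and nodes whose sizes vary wildly. Adhesion neighbours can lie at arbitrarily shallow ancestors, so pointers must not encode whole paths. I would first try encoding ancestors by (depth, layer-position) and verify that the identity tester can reconstruct the needed $\mu_{j}$ clique label from the ancestor's own stored data.
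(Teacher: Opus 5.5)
Your three-stage architecture (skinny partition via \cref{b_skinny_decomp}, a labelling lemma for $b$-skinny decompositions, then \cref{small_height}) matches the paper's. However, Step~2, which you rightly call the crux, has a genuine gap.

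A $b$-skinny tree can have height $\Theta(n)$ (a path is $b$-skinny for every $b\geq 1$). So the depth of the ancestor node holding an adhesion neighbour needs $\Theta(\log n)$ bits, not $O(\log\log n)$. A single such pointer in $\alpha(v)$ already exceeds the $o(\log n)$ overhead.

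Worse, a pointer of the form (ancestor, local identifier) is unusable. To apply $I$ with a local identifier, the tester needs the clique label $\mu_j(K)$, and neither $\mu(v)$ nor $\mu(w)$ contains it. Also, $A_x$ is not a single clique of the layer-$(i-1)$ labelling: only $A_x$ minus the parent's adhesion survives there. The vertices of $A_x$ can have homes in up to $k$ different layers. Storing a clique label for each piece costs up to $\Theta(\log n)$ apiece, and these labels do not telescope the way the root-path clique labels of \cref{small_height} do. You leave exactly this as something you ``would first try'', so the step is not established.

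The missing idea is to stop referring to ancestors in $T$ and instead work with the layered path-decomposition $(B_1,\dots,B_p)$, where $B_i=\bigcup_{x\in L_i(T)}B_x$. Its adhesions have size at most $bk$. The paper builds the Shannon--Fano--Elias tree $T_\psi$ on $[p]$ with weights $\psi(i)=\omega(G_i)$; its codewords $\rho(i)$ are the layer codes, and its height is $O(\log n)$.
\begin{itemize}
\item For an adhesion vertex $v$ occupying layers $[a(v),b(v)]$, let $d(v)$ be the depth of $\lca_{T_\psi}(\{a(v),b(v)\})$.
\item All vertices sharing that lca lie in $B_i\cap B_{i+1}$ for the split index $i$, so they can be given distinct $\varphi(v)\in[bk]$.
\item Then $\beta(v)=(d(v),\varphi(v))$ has $O(\log\log n+\log b)$ bits and is stored in $v$'s own label.
\item $\alpha(w)$ lists the $\beta$-values of $w$'s at most $k$ adhesion neighbours. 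The tester checks that $\rho(v)$ and $\rho(w)$ agree on their first $d(v)$ bits and then compares $\beta$-values. No ancestor clique labels or local identifiers are needed.
\end{itemize}

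The same $\beta(u)$ serves as the local identifier for clique vertices $u\in K\cap A_{j(K)}$. Here $j(K)$ is chosen minimal so that $K\setminus A_{j(K)}\neq\emptyset$, and a $k$-bit mask $c(K)$ is added for injectivity. Your clique labels cover only $K\setminus A_x$ and say nothing about identity testing for the adhesion part of $K$.

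Finally, a fixed $b=2^{(\log n)^{3/4}}$ works only when $g_3=O(\sqrt{\log n})$. For an arbitrary efficient scheme you must tie $b$ to $g_3$, e.g.\ $\log b=\sqrt{g_3(n)\log n}$, so that both $\log b$ and $g_3(n)\log_b n$ are $o(\log n)$.
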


We can now say what the weights are for. For the recursion behind this theorem, plain mixed labelling schemes are not enough. We need a refinement in which each vertex $v$ carries a positive weight $\omega(v)$ and vertex labels satisfy the Kraft-type bound $|\mu(v)|\le\log\omega(V(G))-\log\omega(v)+o(\log n)$. Intuitively, weights let heavier vertices receive shorter labels; when recursing on a tree-decomposition, inflating the weight of a vertex according to the total weight hanging below it makes the label lengths telescope along root-to-leaf paths. This notion of a \defin{weighted mixed labelling scheme} (and the accompanying notion of \emph{efficiency}) is defined formally in \cref{weighted_mixed_section}.

A quantitative version of \cref{PlainRealMainTechnical} appears as \cref{RealMainTechnical} in \cref{sec:finalizing}.

\subsection{Paper Organization}

The remainder of the paper is organized as follows:  

\Cref{building_blocks} reviews some basic background material and shows how to
partition any tree-decomposition into $b$-skinny subtrees so that contracting
each subtree yields a tree-decomposition of height at most $\log_b n$, as outlined above.    \Cref{weighted_mixed_section} formally defines weighted mixed labelling schemes and proves \cref{PlainRealMainTechnical}.
We conclude in \Cref{conclusion} by explaining how the constructive nature of \cref{main_result} leads to a
polynomial-time labelling algorithm.

\section{Building blocks}\label{building_blocks}

Let \defin{$\N$} be the set of all nonnegative integers, and let \defin{$\N^+$} be the set of all positive integers.  For $p\in\N^+$, let $\mathdefin{[p]} := \set{1,\dots,p}$. 

A \defin{graph class} is a set of graphs closed under isomorphism. A graph class  $\mathcal{G}$ is \defin{hereditary} if for every $G$ in $\mathcal{G}$ every induced subgraph of $G$ is also in $\mathcal{G}$. A  graph class $\mathcal{G}$ is \defin{monotone} if for every $G\in\mathcal{G}$, every subgraph of $G$ is in $\mathcal{G}$. 
 A \defin{disjoint union} of two graphs $G$ and $H$ is a graph obtained from two vertex-disjoint copies of $G$ and $H$ with no edge in between. A class of graphs $\mathcal{C}$ is \defin{closed under taking disjoint union} if for all $G,H$ in $\mathcal{C}$, the disjoint union of $G$ and $H$ is also in $\mathcal{C}$.

If $G$ is a graph, $P$ is a path and $(P,(B_x\mid x \in V(P)))$ is a tree-decomposition of $G$, then $(P,(B_x\mid x \in V(P)))$ is called a \defin{path-decomposition} of $G$, which we usually denote as $(B_1,\ldots,B_{|V(P)|})$.

\subsection{Trees and Forests}

For a rooted tree $T$ with root $r\in V(T)$ and a node $x$ of $T$, define \defin{$P_T(x)$} to be the path in $T$ from $x$ to $r$.    A \defin{forest} $F$ is a graph (possibly disconnected) each of whose components is a tree.  If each component of a forest $F$ is a rooted tree, then $F$ is a \defin{rooted} forest. For a node $x$ in a rooted forest $F$, let $\mathdefin{P_F(x)}:=P_T(x)$ where $T$ is the component of $F$ that contains $x$.

Let $F$ be a rooted forest.  For each $x\in V(F)$, the \defin{$F$-depth} of $x$, denoted \defin{$\depth_F(x)$}, is the number of edges in $P_T(x)$. The \defin{height} of $F$, denoted $\mathdefin{\height(F)}:=\max_{x\in V(F)}\depth_F(x)$, is the maximum $F$-depth of a vertex in $F$. For each $y\in V(F)$ and each $x\in V(P_F(y))$ (including $y$),  the vertex $x$ is an \defin{$F$-ancestor} of $y$ and $y$ is a \defin{$F$-descendant} of $x$.  For each $x\in V(F)$, \defin{$F_x$} denotes the subtree of $F$ induced by all $F$-descendants of $x$. For each edge $xy$ of $F$ where $x$ is an $F$-ancestor of $y$, we say that $x$ is the \defin{$F$-parent} of $y$ and $y$ is an \defin{$F$-child} of $x$. We say that a set $S\subseteq V(F)$ is an \defin{$F$-antichain}, if there are no distinct $x,y\in S$ such that $x$ is a $F$-ancestor of $y$.  (When there is no danger of ambiguity, we may drop the leading $F$- from $F$-depth, $F$-ancestor, $F$-descendant, $F$-parent, and $F$-child.) We treat each subgraph $F'\subseteq F$, as a rooted forest in which each component $T$ of $F'$ is rooted at the node of $T$ having minimum $F$-depth.  For each $i\in\N^+$, the \defin{$i$-th layer} of $F$ is $\mathdefin{L_i(F)}:=\{x\in V(F):\depth_F(x)=i-1\}$. Thus, $L_1(F)$ is the set of roots of components of $F$.

For a rooted tree $T$ and any set $S\subseteq V(T)$, the \defin{lowest common $T$-ancestor} of $S$, denoted \defin{$\lca_T(S)$}, is the node in $\bigcap_{x\in S} V(P_T(x))$ of maximum $T$-depth.

\subsection{Tree-Decompositions and Forest-Decompositions}

It will be convenient to work with forest-decompositions, which are the natural generalization of a tree-decomposition that allow the graph indexing the bags to be a forest.  Specifically, $(F,(B_x\mid x\in V(F))$ is a \defin{forest-decomposition} of a graph $G$ if $F$ is a forest, for each edge $vw$ of $G$ there exists $x\in V(F)$ with $v,w\in B_x$, and for each vertex $v$ of $G$, $F[\{x\in V(F)\mid v\in B_x\}]$ is connected.  A forest-decomposition is \defin{rooted} if it is indexed by a rooted forest.  A rooted forest-decomposition $(F,(B_x\mid x\in V(F))$ of a graph $G$ is \defin{tidy} if
\begin{enumerate*}[label=(\arabic*)]
  \item $B_x\neq\emptyset$ for each $x\in V(F)$;
  \item $B_y\not\subseteq B_x$ for each $y\in V(F)$ with an $F$-parent $x$; and
  \item $B_z \cap B_y \not\subseteq B_y\cap B_x$ for each $z\in V(F)$ with $F$-parent $y$ and $F$-grandparent $x$.
\end{enumerate*}
Any rooted forest-decomposition $\mathcal{F}:=(T,(B_x\mid x\in V(T)))$ can be made into a rooted tidy forest-decomposition by
\begin{enumerate*}[label=(\arabic*)]
  \item removing every node $x$ of $T$ with $B_x=\emptyset$,
  \item removing every edge $xy$ of $T$ with $B_x\cap B_y=\emptyset$, and
  \item repeatedly replacing an edge $yz$ with the edge $xz$ if $B_z\cap B_y=B_x\cap B_y$ and $x$ is the parent of $y$.
\end{enumerate*} This transformation has the properties expressed in the following observation:
\begin{obs}\label{tidy_forest}
  Let $G$ be a graph and let $(F,(B_x\mid x\in V(F)))$ be a forest-decomposition of $G$. Then there exists a rooted forest $F'$ with $V(F')\subseteq V(F)$, $\height(F')\le\height(F)$, such that $\mathcal{F}':=(F',(B_x\mid x\in V(F')))$ is a rooted tidy forest-decomposition of $G$ and every torso of $\mathcal{F}'$ is a torso of $\mathcal{F}$.
\end{obs}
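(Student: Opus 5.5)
We prove \cref{tidy_forest} by applying the three-step transformation described just before it and checking, after each step, that three invariants hold:
\begin{enumerate*}[label=(\roman*)]
  \item we have a rooted forest-decomposition of $G$ indexed by a forest on a subset of $V(F)$ with the same bags;
  \item the height does not increase;
  \item every torso of the current decomposition is a torso of $\mathcal{F}$.
\end{enumerate*}
First we root each component of $F$ arbitrarily, or keep its roots if $F$ is already rooted. For (iii) we use one observation throughout. The torso of $x$ depends only on $G[B_x]$ and on the family of adhesions $\{B_x\cap B_y : y\in N(x)\}$. So it is enough to show that each operation preserves this family for every surviving node, up to adding or removing adhesions that are empty or already contained in another adhesion of $x$. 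Adding or removing a clique that is already covered does not change the torso.

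\emph{Steps (1) and (2).} Removing a node $x$ with $B_x=\emptyset$ keeps the decomposition valid. No vertex or edge uses $B_x$. The subgraph of nodes containing a vertex $v$ never includes $x$, so it stays connected. Removing an edge $xy$ with $B_x\cap B_y=\emptyset$ is also valid, since no vertex has both endpoints in its subtree, so the connectivity of those subtrees is unaffected. In both cases we only delete empty adhesions, so the torsos do not change. Deleting nodes and edges can only split components. We root each new component at its node of minimum original depth, as the convention for subgraphs of rooted forests prescribes. Depths then do not increase, which gives (ii).

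\emph{Step (3).} Suppose $x$ is the parent of $y$, $z$ is a child of $y$, and $B_z\cap B_y = B_x\cap B_y$. We replace the edge $yz$ by $xz$. Fix a vertex $v\in B_z$. If $v\in B_y$, then $v\in B_x$, so the nodes containing $v$ stay connected through $xz$. If $v\notin B_y$, then $v$ never occurred on the $y$-side of $z$, and its subtree is untouched. Edge coverage is unchanged, so we still have a forest-decomposition. The subtree of $z$ moves up one level, so depths do not increase.

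For the torsos, we compute the new adhesion $B_x\cap B_z$. By the connectivity of the subtree of each vertex in the old decomposition, $B_x\cap B_z\subseteq B_y$. Hence $B_x\cap B_z=B_x\cap B_y\cap B_z=B_y\cap B_z$. So $x$ gains the adhesion $B_y\cap B_z$, which is already contained in its adhesion with $y$. Node $y$ loses the adhesion $B_y\cap B_z$, which equals $B_x\cap B_y$, still an adhesion of $y$. Node $z$ keeps the same set. All torsos are therefore unchanged.

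\emph{Termination and tidiness.} Each application of Step (3) strictly decreases the sum of depths, so the process terminates. At termination, conditions (1) and (3) of tidiness hold by construction. For condition (3) we use the inclusion $B_z\cap B_y\subseteq B_x\cap B_y$ in place of equality; in that case the argument above still applies, because $B_x\cap B_z=B_y\cap B_z$ as before.

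The main obstacle is condition (2), $B_y\not\subseteq B_x$, which none of the three operations directly enforces. We add a fourth operation: if $B_y\subseteq B_x$, contract $y$ into $x$, making the children of $y$ children of $x$ and keeping the bag $B_x$. This preserves the decomposition and does not increase depths. The torso of $x$ gains the adhesions $B_y\cap B_w\subseteq B_x$ from the children $w$ of $y$. These are cliques of the torso of $y$, but it must be checked whether they are already cliques of the torso of $x$.

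To handle this, we first try to show that it suffices to use the contraction only when it is harmless. Otherwise, we weaken the claim by noting that the torso of $x$ is unchanged when each such adhesion is contained in an old adhesion. We expect to verify that, after Step (3) has been applied exhaustively, any child $y$ with $B_y\subseteq B_x$ has all its child adhesions inside $B_x\cap B_y=B_y$. If that verification fails, we would redirect the children of $y$ to $x$ via Step (3) and then delete $y$, whose vertices are then all covered by $B_x$.
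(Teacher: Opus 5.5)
Your handling of Steps (1)--(3) is sound. Replacing the equality in Step (3) by the inclusion $B_z\cap B_y\subseteq B_x\cap B_y$ is exactly what is needed to force tidiness condition (3). You are also right that none of the three operations enforces condition (2); the transformation described before the observation has the same omission.

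The gap is in your treatment of (2). First, the ``verification'' you plan is automatic. If $B_y\subseteq B_x$, then every child $w$ of $y$ satisfies $B_w\cap B_y\subseteq B_y=B_x\cap B_y$. So once Step (3) has been applied exhaustively, $y$ is a leaf, and your fourth operation simply deletes $y$. Second, and more importantly, when you analyse the contraction you only count the adhesions $x$ gains. You overlook that $x$ \emph{loses} its adhesion $B_x\cap B_y=B_y$. The torso of $x$ therefore loses the clique on $B_y$, unless those pairs are already edges of $G$ or lie in another adhesion of $x$. So invariant (iii) fails.

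This cannot be patched, because the statement with exact torso equality is false. Let $G$ be edgeless on $\{a,b,c,d\}$. Let $F$ be a star rooted at its centre $x$, with $B_x=\{a,b,c,d\}$ and leaves $y_1,y_2$ with $B_{y_1}=\{a,b\}$ and $B_{y_2}=\{c,d\}$.
\begin{itemize}
\item Suppose $x\in V(F')$. The only torso of $\mathcal{F}$ on vertex set $\{a,b,c,d\}$ has edges $ab$ and $cd$. Since $G$ is edgeless, these edges must come from adhesions, so both $xy_1$ and $xy_2$ must be edges of $F'$. For any choice of root, one of $y_1,y_2$ is then a child of $x$ with its bag inside $B_x$, violating condition (2).
\item Suppose $x\notin V(F')$. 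Then the torso of $y_1$ is edgeless, and that is not a torso of $\mathcal{F}$.
\end{itemize}

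Your argument does prove a weaker conclusion, using Steps (1)--(3) in inclusion form followed by deletion of the leaves that violate (2). For each node $x$ of $F'$, $\torso{G}{\mathcal{F}',B_x}$ is a spanning subgraph of $\torso{G}{\mathcal{F},B_x}$, and all the other required properties hold. That is the statement you should claim and prove. It suffices wherever the torso class is monotone, as subgraph-closed classes such as $\mathcal{R}_k^{+a}$ and their disjoint unions are. It does not suffice for an arbitrary hereditary class.
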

We make use of the fact that rooted tidy forest-decompositions remain tidy when we remove the root bags and their contents.  More precisely, let $\mathcal{F}:=(F,(B_x\mid x\in V(F)))$ be a rooted tidy forest-decomposition of a graph $G$, let $R$ be the set of roots in $F$, let $B_R:=\bigcup_{r\in R}B_r$, and let $C_x:=B_x\setminus B_R$ for each $x\in V(F)$.  Then $(F-R,(C_x\mid x\in V(F-R)))$ is a rooted tidy rooted forest-decomposition of $G-B_R$.

\subsection{Weight Functions}

For a non-empty set $S$, a \defin{weight function} over $S$ is a function $\omega:S\to\R^+$. For $X\subseteq S$, we use the shorthand $\mathdefin{\omega(X)}:=\sum_{x\in X}\omega(x)$.  When $S:=V(G)$ is the vertex-set of a graph $G$ and $H$ is a subgraph of $G$, we use the shorthand $\mathdefin{\omega(H)}:=\omega(V(H))$.

Let $\omega:S\to\R^+$ be a weight function.  For each $x\in S$, the \defin{(Kraft) ideal codeword length} for $x$ is $\log\omega(S)-\log\omega(x)$.
It will be convenient to place some upper bound on the Kraft ideal codeword length or, equivalently, on the ratio of total weight, $\omega(S)$ to minimum weight $\min_{x\in S}(\omega(x))$.  The following observation gives a way to do this without increasing the ideal codeword length of each $x\in S$ by more than a constant:

\begin{obs}\label{obs:nice}
  For every weight function $\omega:S\to\R^+$, there exists a weight function $\omega'(S)\to\N^+$ such that $\log\omega'(S)-\log\omega'(x)\le\min\{\log|S|,\;\log\omega(S)-\log\omega(x)\}+2$, for each $x\in S$.
\end{obs}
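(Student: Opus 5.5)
The plan is a direct construction: first normalize $\omega$ so that its total weight is $|S|$, then round every value up to the nearest positive integer. Rounding up guarantees that every weight is at least $1$, which gives the $\log|S|$ bound. It also keeps each weight at least its normalized value, which gives the $\log\omega(S)-\log\omega(x)$ bound. Meanwhile the total weight grows by at most $|S|$, so it at most doubles.

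Concretely, let $n:=|S|$ and $W:=\omega(S)>0$, and define
\[
  \omega'(x):=\ceil*{\frac{n\,\omega(x)}{W}} \quad\text{for each } x\in S .
\]
Since $\omega(x)>0$, the argument of the ceiling is positive, so $\omega'(x)\in\N^+$. The key estimate is on the total weight. Using $\ceil{t}\le t+1$, we get
\[
  \omega'(S)\le \sum_{x\in S}\Bigl(\frac{n\,\omega(x)}{W}+1\Bigr)=n+n=2n .
\]

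Now fix $x\in S$ and bound $\log\omega'(S)-\log\omega'(x)$ in two ways.
\begin{enumerate}
  \item Since $\omega'(x)\ge 1$, we have $\log\omega'(S)-\log\omega'(x)\le\log(2n)=\log|S|+1$.
  \item Since $\omega'(x)\ge n\,\omega(x)/W$, we have
  \[
    \log\omega'(S)-\log\omega'(x)\le \log(2n)-\log\frac{n\,\omega(x)}{W}=\log\omega(S)-\log\omega(x)+1 .
  \]
\end{enumerate}
Taking the smaller of the two bounds yields $\log\omega'(S)-\log\omega'(x)\le\min\{\log|S|,\;\log\omega(S)-\log\omega(x)\}+1$. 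This is stronger than the claimed additive constant of $2$.

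There is no real obstacle here; the only point needing care is the total-weight estimate $\omega'(S)\le 2n$, which both bounds rely on.
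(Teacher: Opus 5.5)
Your proof is correct and is essentially the paper's argument. The paper rescales so that the minimum weight is at least $1$, sets $\omega'(x):=\lceil\max\{\omega(S)/|S|,\;\omega(x)\}\rceil$, and bounds $\omega'(S)\le 3\omega(S)$; you instead rescale so the average weight is $1$, so the ceiling alone does the work of the $\max$, which gives $\omega'(S)\le 2|S|$ and an additive constant of $1$ rather than the paper's $\log 3$.
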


\begin{proof}
  By dividing every weight by $\min_{x\in S}\omega(x)$ we may assume, without loss of generality, that $\omega(x)\ge 1$ for each $x\in S$, since this does not change the value of the right-hand side in the inequality.
  For each $x\in S$, define $\omega'(x):=\ceil{\max\{\omega(S)/|S|,\; \omega(x)\}}$.  Then $\omega'(S)<\sum_{x\in S}(\omega(S)/|S|+\omega(x)+1)= 2\omega(S)+|S|\le 3\omega(S)$.  For each $x\in S$, the inequality $\omega'(x)\ge \omega(x)$ implies that $\log\omega'(S)-\log\omega'(x)\le\log \omega(S)-\log\omega(x)+\log 3$ and the inequality $\omega'(x)\ge \omega(S)/|S|$ implies that $\log\omega'(S)-\log\omega'(x)\le \log\omega(S)-\log(\omega(S)/|S|) + \log 3 = \log|S| + \log 3$.
\end{proof}

\subsection{Shannon and Shannon--Fano--Elias Codes}

A \defin{code} for a set $S$ is an injective function $\lambda:S\to\{0,1\}^*$ that maps each element in $S$ to a distinct binary string.  A code is \defin{prefix-free} if $\lambda(x)$ is not a prefix of $\lambda(y)$ for all distinct $x,y\in S$.

A rooted tree $T$ is a \defin{binary tree} if each node of $T$ has at most two children among which at most one is the \defin{left child} and at most one is the \defin{right child}.
A binary tree $T$ is \defin{full} if for every node $x$ of $T$,
either $x$ has two children or $x$ has no children.
A full binary tree $T$ is \defin{complete} if all the leaves of $T$ have the same depth in $T$.
Let $T$ be a binary tree. There is a natural ordering, called \defin{left-to-right ordering}, of the leaves of $T$, such that for every node $x$, if $x$ has two children then all the leaf descendants of the left child of $x$ precede all the leaf descendants of the right child of $x$.

The following result is a variant of Alphabetic Binary Trees. Similar results (with the constant $3$ replaced by $2$) are well-known and there are several proofs \cite{knuth:art_iii,mehlhorn:nearly,gilbert.moore:variable_length}. We provide a proof of the weaker statement here because it illustrates a simple but powerful technique---\defin{simulating weight by multiplicity}---that we will use again later.

\begin{lem}\label{shannon_fano_elias_tree}
  Let $S$ be a linearly ordered set and let $\omega:S\to\N^+$. Then there exists a binary tree $T$ such that $S$ is the set of leaves of $T$,
  the linear order on $S$ agrees with the left-to-right-order of the leaves in $T$, and for each $x\in S$
  \[
    \depth_{T}(x)\le \log\omega(S) - \log\omega(x) + 3 \enspace.
  \]
\end{lem}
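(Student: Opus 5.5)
The plan is to follow the hint in the text, simulating weight by multiplicity. We replace each $x\in S$ by $\omega(x)$ consecutive copies, ordered consistently with the order on $S$. This gives a linearly ordered multiset $S'$ of size $W:=\omega(S)$, which we place as the leaves, in order, of a complete binary tree $C$ of height $h:=\ceil{\log W}$. Where $W$ is not a power of two, we use the leftmost $W$ leaves of a complete tree of height $h$ and prune every node with no copy below it. Each $x\in S$ then owns a contiguous interval $I_x$ of $\omega(x)$ leaves of $C$.

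The key step is to cover each interval by few canonical subtrees. A contiguous interval of leaves of a complete binary tree of height $h$ is the disjoint union of leaf sets of canonical nodes. Among these, at most two nodes have any given height, and the tallest one has height at least $\lfloor\log\omega(x)\rfloor-1$. Indeed, an interval of length $m$ contains an aligned block of length $2^j$ whenever $2^{j+1}-1\le m$, so it contains a node of height at least $\lfloor\log(m+1)\rfloor-1$. Let $y_x$ be the lowest common ancestor in $C$ of the leaves in $I_x$. The interval splits at $y_x$ into a suffix of the left subtree and a prefix of the right subtree. At least one of these has length at least $\omega(x)/2$, and it contains an aligned canonical node $z_x$ of height at least $\log\omega(x)-3$, roughly. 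Hence $\depth_C(z_x)\le h-\log\omega(x)+3\le\log W-\log\omega(x)+3+1$. I will need to sharpen this by one, either by using $h=\ceil{\log W}$ more carefully or by choosing $z_x$ as the largest aligned block, whose depth is at most $h-\lfloor\log(\omega(x)+1)\rfloor+1$.

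Next we build $T$. For each $x$ we designate one canonical node $z_x$ of maximum height (minimum depth) inside $I_x$. The nodes $z_x$, $x\in S$, are pairwise non-ancestral, since the intervals are disjoint, and they appear in left-to-right order consistent with $S$. We take $T$ to be the subtree of $C$ consisting of all ancestors of the $z_x$'s, make each $z_x$ a leaf labelled $x$, and delete everything below it. Then $T$ is a binary tree whose leaves are exactly the $z_x$'s, in the correct order, and $\depth_T(x)=\depth_C(z_x)$. Contracting unary nodes only decreases depths, so this is harmless.

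I expect the main obstacle to be the constant: showing $\depth_C(z_x)\le\log W-\log\omega(x)+3$ and not $+4$. The losses come from three places: rounding $h=\ceil{\log W}$ costs less than 1, the largest aligned block in an interval of length $m$ has size greater than $m/4$ and so costs less than 2, and we need strict inequalities to absorb the floors. The precise claim to verify is that an interval of length $m$ in an aligned dyadic structure contains an aligned block of size $2^j$ with $2^j>m/4$. Combined with $2^h<2W$, this gives $\depth\le h-j<\log(2W)-\log(m/4)=\log W-\log m+3$, which is the stated bound.
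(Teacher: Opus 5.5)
Your proposal is correct and follows essentially the same route as the paper. Both arguments simulate weight by multiplicity in a complete binary tree of height $\lceil\log\omega(S)\rceil$. Both then pick, for each $x$, a node whose leaves all belong to $x$'s block and number at least $\omega(x)/4$, and take the union of the root paths to these pairwise incomparable nodes.

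The only difference is how that node is found. The paper walks down a spine from the child of the lowest common ancestor that carries at least half of the block. You instead use the dyadic-alignment count: $m\ge 2^{j+1}-1$ forces an aligned block of size $2^j>(m+1)/4$. Your final computation $h-j<\log W-\log m+3$, rather than the tentative $+4$ estimate, is the one to keep.
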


\begin{proof}
  Let $h:=\lceil\log\omega(S)\rceil$.
  Let $Q$ be the complete binary tree of height $h$, so with $2^h$ leaves.
  For each $x\in S$, define the $\omega(x)$-element set
  \[  S_{\omega}(x):=\set{(x,1),\,(x,2),\,\ldots\,,\,(x,\omega(x))}\]
  and let $S_\omega:=\bigcup_{x\in S}S_{\omega}(x)$.  Then $|S_\omega|=\omega(S)$.  Treat $S_{\omega}$ as a linearly ordered set by using lexicographic order and rename the leftmost $\omega(S)$ leaves of $Q$, in order, with elements of $S_{\omega}$.

  Now for each $x\in S$, we are going to fix a node $r(x)$ in $Q$ such that
  (1) all the leaves of $Q_{r(x)}$ lie in $S_{\omega}(x)$; and
  (2) $Q_{r(x)}$ contains at least $\omega(x)/4$ leaves.

  Let $x\in S$.
  If $\omega(x)=1$ then define $r(x):=(x,1)$, which is a leaf of $Q$.
  Clearly, (1) and (2) hold.
  Now suppose that $\omega(x)\geq2$.
  Consider the lowest common $Q$-ancestor $y$ of $S_{\omega}(x)$.
  Thus, $y$ is an internal node of $Q$ and
  both children of $y$ contain in their subtree a leaf in $S_{\omega}(x)$.
  Since $S_{\omega}(x)$ is a set of consecutive leaves in $Q$,
  all leaves of the left child $y'$ of $y$ in $Q$ that are in $S_{\omega}(x)$ are rightmost in $Q_{y'}$ and
  all leaves of the right child $y''$ of $y$ in $Q$ that are in $S_{\omega}(x)$ are leftmost in $Q_{y''}$.
  Let $z$ be the child of $y$ in $Q$ such that $Q_z$ has at least $\omega(x)/2$ leaves in $S_{\omega}(x)$ and let $S_z$ be the set of all such leaves.
  Thus, $|S_z|\geq \omega(x)/2$.
  Since the argument is completely symmetric in these cases, assume that $z$ is the left child of $y$.
  Consider a path in $Q_z$ starting from $z$ and always taking an edge to the right child of the current node until we finish at a leaf.
  Let $v$ be the first vertex of that path such that $Q_v$ has all leaves in $S_z$.
  If $v=z$ then we set $r(x):=v$. Clearly (1) and (2) hold.
  Suppose that $v\neq z$.
  Let $u$ be the parent of $v$.
  Since the set of leaves of $Q_u$ form an interval of rightmost leaves of $Q_z$ and since $Q_u$ contains a leaf not in $S_z$, we conclude that $Q_u$ contains all nodes in $S_z$ as leaves.
  Therefore, $Q_v$ contains at least $|S_z|/2$ vertices of $S_z$ as leaves.
  We set $r(x):=v$ and again (1) and (2) hold.

  Note that the set $\set{r(x) \mid x\in S}$ is a $Q$-antichain.
  Indeed, if $x,y\in S$, $x\neq y$ and $r(x)$ is a $Q$-ancestor of  $r(y)$, then the leaves of $Q_{r(y)}$ are contained in the leaves of $Q_{r(x)}$ which by (1) implies that $Q_{r(x)}$ contains all nodes in $S_{\omega}(y)$ as leaves.
  However, again by (1) all the leaves of $Q_{r(x)}$ lie in $S_{\omega}(x)$ which is disjoint from $S_{\omega}(y)$, a contradiction.

  Let $x\in S$. Since $Q_{r(x)}$ contains at least $\omega(x)/4$ leaves, $\height(T_{r(x)}) \geq \log(\omega(x)/4) = \log\omega(x) - 2$. Finally,
  \begin{align*}
    \depth_Q(r(x))
     & \le \height(Q)-\height(Q_{r(x)})                    \\
     & \le \lceil\log\omega(S)\rceil - (\log\omega(x) - 2) \\
     & \le \log\omega(S) - \log\omega(x) +3 \enspace.
  \end{align*}

  Let $T:=\bigcup_{x\in S} P_Q(r(x))$.
  Since $\set{r(x) \mid x\in S}$ is a $Q$-antichain,
  we conclude that $\set{r(x)\mid x\in S}$ is the set of all leaves of $T$.
  Identifying each leaf $r_x$ of $T$ with $x$ for each $x\in S$ gives the desired tree.
\end{proof}

In coding theory, \cref{shannon_fano_elias_tree} is the basis of the Shannon-Fano-Elias coding scheme, as we now explain.  By assigning each edge $e$ of a binary tree $T$ a $0$ or $1$ depending on whether $e$ joins a parent to its left or right child, we obtain an encoding of each root-to-leaf path in $T$ as a binary string. Doing this with the tree from \cref{shannon_fano_elias_tree}, we obtain the following.

\begin{cor}\label{shannon_fano_elias_code}
  Let $S$ be a linearly ordered set and let $\omega:S\to\N^+$.  Then there exists a code $\rho:S\to \{0,1\}^\star$ for $S$ such that:
  \begin{enumerate}[nosep,nolistsep,label=(\alph*)]
    \item $|\rho(x)|\le \log\omega(S) - \log\omega(x)+3$ for each $x\in S$, and
    \item $\rho(x)$ is lexicographically less than $\rho(y)$ for each $x,y\in S$ with $x< y$ in $S$.
  \end{enumerate}
\end{cor}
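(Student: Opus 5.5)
This follows directly from \cref{shannon_fano_elias_tree}, using the root-to-leaf path encoding described just before the statement. Apply \cref{shannon_fano_elias_tree} to the linearly ordered set $S$ and weight function $\omega$. This gives a binary tree $T$ whose leaf set is $S$, whose left-to-right leaf order agrees with the order on $S$, and in which $\depth_T(x)\le\log\omega(S)-\log\omega(x)+3$ for each $x\in S$.

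Label each edge from a parent to its left child with $0$ and each edge to a right child with $1$. Define $\rho(x)$ as the binary string read along the path from the root of $T$ to the leaf $x$. Then $|\rho(x)|=\depth_T(x)$, which is exactly property (a).

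Injectivity. Distinct leaves are reached by distinct root-to-leaf paths, and a path in a binary tree is determined by its left/right choices, so $\rho$ is injective. Moreover $\rho$ is prefix-free: if $\rho(x)$ were a proper prefix of $\rho(y)$, then the node reached by $\rho(x)$ would be a leaf with a descendant, which is impossible. (Prefix-freeness is not required, but it is needed for the next step.)

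Order. Take $x<y$ in $S$ and let $z=\lca_T(\{x,y\})$. Since neither leaf is an ancestor of the other, $z$ is a proper ancestor of both, and $x$ and $y$ lie below different children of $z$. The two codewords agree on the common prefix encoding the path from the root to $z$, and differ in the next bit. By the definition of left-to-right ordering, $x$ precedes $y$ exactly when $x$ lies below the left child of $z$. So $\rho(x)$ has $0$ and $\rho(y)$ has $1$ at the first differing position, which gives (b) for the standard lexicographic order.

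There is no real obstacle here. The only point needing care is that lexicographic comparison is decided by a first differing bit, and prefix-freeness ensures such a bit always exists.
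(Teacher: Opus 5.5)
Your proposal is correct and follows the paper's own route: the paper obtains this corollary by encoding root-to-leaf paths in the tree from \cref{shannon_fano_elias_tree}, with $0$ for left-child edges and $1$ for right-child edges, so the codeword length equals the depth and the lowest-common-ancestor argument gives the ordering. You also verify prefix-freeness, which the statement does not claim but the paper relies on later, for instance in the disjoint-union lemma.
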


\subsection{On Multipart Labels}
\label{multipart}

In many cases, a label (a bitstring) will have a variable number of parts (also bitstrings), of varying lengths.  We write this as $s:=\langle s_1,\ldots,s_p\rangle$, where each of $s_1,\ldots,s_p$ is a (possibly empty) bitstring.  The concatenation of $s_1,\ldots,s_p$, has length
\[
  \textstyle |s|=|s_1,\ldots,s_p|=\sum_{i=1}^p|s_p| \enspace .
\]
This ignores the important issue that, in order for a decoder to extract the individual parts $s_1,\ldots,s_p$ it must know the number of parts, $p$, and (at least $p-1$ of) the lengths $|s_1|,\ldots,|s_p|$.  We handle this by encoding the value of $p$ and the lengths $|s_1|,\ldots,|s_{p}|$ and prepending these to $s$. This is most easily done using Elias' code $\gamma:\N^+\to\{0,1\}^*$ for positive integers, in which each $x\in\N^+$ is encoded by a self-delimiting binary string $\gamma(x)$ with $|\gamma(x)|\le 2\lfloor \log x  \rfloor  + 1$ \cite{elias:universal}.\footnote{For each $x\in\N^+$, $\gamma(x)$ consists of $\lfloor\log x\rfloor$ 0 bits, followed by the binary representation of $x$ (which begins with a $1$ bit and has length $\lfloor\log x\rfloor+1$).} In our application $s_i$ may have zero length and $p$ may even be equal to zero, so that the codeword for each $x\in\N$ is $\gamma(x+1)$ and has length $2\floor{\log (x+1)}+1$.
In this way, we obtain a self-delimiting encoding of $s_1,\ldots,s_p$ as
\[  \mathdefin{\langle s_1,\ldots,s_p\rangle}:=\gamma(p+1),\gamma(|s_1|+1),\ldots,\gamma(|s_p|+1),s_1,\ldots,s_p \enspace ,
\]
which has length
\begin{equation}
  |\langle s_1,\ldots,s_p\rangle| = \textstyle\sum_{i=1}^p|s_i|+\Oh(\lgg p) + \Oh(\sum_{i=1}^p\lgg|s_i|) \enspace , \label{encoding_cost}
\end{equation}
where $\mathdefin{\lgg x} := \log\max\{2,x\}$ (just so that $\lgg(x)$ is defined and $\lgg(x)\ge 1$ for each $x\in\N$).    The bits of $\langle s_1,\ldots,s_p\rangle$ that come from $s_1,\ldots,s_p$ are called the \defin{payload} bits and the remaining bits are called the \defin{bookkeeping} bits of $\langle s_1,\ldots,s_p\rangle$.
In our setting it will always be the case that $p\in o(\log n)$ and we will use \cref{obs:nice} to ensure that $|s_i|\in \Oh(\log n)$ for each $i\in[p]$.
Thus, the number of bookkeeping bits in $\langle s_1,\ldots,s_p\rangle$  is $\Oh(p\log\log n)$.

In some cases, we include one or more non-negative integers as part of a multipart label.  When we include $d\in\N$ as part of a label this way, we are actually including the $\ceil{\log(d+1)}$-bit binary representation \defin{$\bin(d)$} of $d$.  For the sake of readability, we write this as $\langle s_1,d,s_2,\ldots\rangle$ rather than $\langle s_1,\bin(d),s_2,\ldots\rangle$. \label{bin_d}

\subsection{Separators with Low Pathwidth and Low Path-Adhesion-Width}\label{lowpw}

Recall that for $b\in\R^+$, a rooted tree $T$ is $b$-skinny if $|L_i(T)|\le b$ for each $i\in\N^+$, and a rooted tree-decomposition $(T,(B_x\mid x\in V(T)))$ of a graph $G$ is $b$-skinny if $T$ is $b$-skinny.  The following lemma shows how to construct a $(1/b)$-separator using the bags in a $b$-skinny subtree of a tree-decomposition.

\begin{lem}\label{low_pathwidth_separator}
  Let $b>1$, let $T$ be a rooted tree, let $\omega:V(T)\to\R$, and let
  \[
    X:=\set{x\in V(T)\mid\omega(T_x)>\omega(T)/b} \enspace.
  \]
  Then:
  \begin{enumerate}[nosep,nolistsep,label=(\alph*)]
    \item\label{connected} $T[X]$ is connected and contains the root of $T$;
    \item\label{skinny} $|X\cap L_i(T)|< b$, for each $i\in\N^+$; and
    \item\label{light} $\omega(C)\le \omega(T)/b$ for each component $C$ of $T-X$.
  \end{enumerate}
\end{lem}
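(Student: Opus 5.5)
The plan is to use one structural fact: $X$ is closed under taking $T$-ancestors. The three parts then follow quickly. Throughout, I read $\omega$ as a nonnegative weight function with $\omega(T)>0$, as in the weight functions of the previous subsection. With arbitrary real weights the statement can fail: with negative weights, for instance, $\omega(T_y)$ need not dominate $\omega(T_x)$ for an ancestor $y$ of $x$. I will state this assumption explicitly at the start.

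First, ancestor-closure. Let $x\in X$ and let $y$ be a $T$-ancestor of $x$. Then $V(T_x)\subseteq V(T_y)$. Since weights are nonnegative, $\omega(T_y)\ge\omega(T_x)>\omega(T)/b$, so $y\in X$.

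For \ref{connected}, the root $r$ satisfies $\omega(T_r)=\omega(T)>\omega(T)/b$, because $b>1$ and $\omega(T)>0$; hence $r\in X$. Every $x\in X$ is then joined to $r$ by the path $P_T(x)$, and this path lies entirely in $X$ by ancestor-closure. So $T[X]$ is connected and contains the root.

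For \ref{skinny}, use the fact that distinct nodes in the same layer $L_i(T)$ are not in an ancestor-descendant relationship. Their subtrees $T_x$, $x\in L_i(T)$, are therefore pairwise vertex-disjoint. This gives
\[
  |X\cap L_i(T)|\cdot\frac{\omega(T)}{b}<\sum_{x\in X\cap L_i(T)}\omega(T_x)\le\omega(T)\enspace,
\]
when $X\cap L_i(T)\neq\emptyset$, and the bound is trivial otherwise. Dividing by $\omega(T)/b>0$ yields $|X\cap L_i(T)|<b$.

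For \ref{light}, let $C$ be a component of $T-X$ and let $y$ be its node of minimum depth. I claim $C=T_y$. The inclusion $V(C)\subseteq V(T_y)$ holds because $C$ is a connected subtree and $y$ is its shallowest node. For the reverse inclusion, suppose some descendant $z$ of $y$ were in $X$. Then $y$, being an ancestor of $z$, would be in $X$ by ancestor-closure, a contradiction. So $T_y$ avoids $X$, and since $T_y$ is connected and contains $y$, it lies inside $C$. Because $y\notin X$, we get $\omega(C)=\omega(T_y)\le\omega(T)/b$.

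None of these steps is a real obstacle. The only delicate point is the weight assumption, which is needed both for ancestor-closure and for the root to lie in $X$.
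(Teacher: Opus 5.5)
Your proof is correct and follows essentially the same route as the paper: closure of $X$ under taking ancestors gives (a), pairwise disjointness of the subtrees $T_x$ for $x$ in a single layer gives (b), and the fact that each component of $T-X$ is a full subtree $T_y$ with $y\notin X$ gives (c). Your explicit assumption that the weights are nonnegative with $\omega(T)>0$ is exactly what the paper's argument uses implicitly (its statement allows arbitrary real weights, but in its application the weights are nonnegative integers with positive total), and your non-strict inequality $\omega(T_y)\ge\omega(T_x)$ is slightly more careful than the paper's strict one, which tacitly needs $\omega(y)>0$.
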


\begin{proof}
  Let $r$ be the root of $T$.  Since $b>1$, we have that $\omega(T_r)=\omega(T)>\omega(T)/b$. %
  For $x\in X\setminus\{r\}$, let $y$ be the $T$-parent of $x$, then
  $\omega(T_y)\ge \omega(y)+\omega(T_x)>\omega(T_x)>\omega(T)/b$.
  Therefore $r\in X$ and, for each $x\in X\setminus\{r\}$, $X$ contains the $T$-parent of $x$. Thus $T[X]$ is connected.
  This proves~\ref{connected}.

  For each $i\in\N^+$, the subtrees in $\set{T_x \mid x\in X\cap L_i(T)}$ are pairwise vertex-disjoint, so
  \[
    \omega(T) \ge \sum_{x\in X\cap L_i(T)} \omega(T_x)> \sum_{x\in X\cap L_i(T)}\omega(T)/b = |X\cap L_i(T)|\cdot\omega(T)/b.
  \]
  Rewriting this inequality gives $|X\cap L_i(T)|<b$ for each $i\in\N^+$.
  This proves~\ref{skinny}.

  By~\ref{connected}, each component $C$ of $T-X$ is a subtree of $T$ that is rooted at some node $y$ whose $T$-parent is in $X$.  Since  $y\not\in X$, $\omega(C)=\omega(T_y)\le \omega(T)/b$.
  This proves~\ref{light}.
\end{proof}

\begin{cor}\label{b_skinny_decomp}
  For every $b>1$, every $n\in\N^+$, every $n$-vertex graph $G$ and every tree-decomposition $\mathcal{T}:=(T,(B_x\mid x\in V(T)))$ of $G$, there exists a tree-decomposition $\mathcal{Q}:=(Q,(D_x\mid x\in V(Q)))$ of $G$ such that
  \begin{enumerate}[nosep,nolistsep,label=\rm(\alph*)]
    \item\label{prop_adhesion}
          each adhesion of $\mathcal{Q}$ is an adhesion of $\mathcal{T}$;
    \item\label{prop_skinny_torso} for each $q\in V(Q)$, $\torso{G}{\mathcal{Q},D_q}$ has a $b$-skinny tree-decomposition $(T(q), (B_y\mid y\in V(T(q))))$ where $T(q)$ is a subtree of $T$; and
    \item\label{prop_geometric} $\height(Q)\le \log_b n$.
  \end{enumerate}
\end{cor}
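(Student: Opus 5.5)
We prove \cref{b_skinny_decomp} by repeatedly applying \cref{low_pathwidth_separator}. The weight function will be $\omega(x):=|B^-_x|$, where $B^-_x$ is the set of vertices of $G$ whose home is $x$. Then $\omega(T_x)$ is exactly the number of vertices of $G$ whose home lies in $T_x$, and $\omega(T)=n$.

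\emph{Recursive construction.} Fix the root of $T$ and apply \cref{low_pathwidth_separator} to $T$ with this weight to obtain the set $X$. Then $T[X]$ is a subtree of $T$ that contains the root and satisfies $|X\cap L_i(T)|<b$ for every $i$. Since $T[X]$ contains the root, its $i$-th layer lies inside $L_i(T)$, so $T[X]$ is $b$-skinny. Every component $C$ of $T-X$ is of the form $T_y$ for some $y$ whose parent lies in $X$, and $\omega(C)\le n/b$. We recurse on each such subtree $T_y$, using the restricted weight and the threshold $\omega(T_y)/b$; the result is a partition of $V(T)$ into subtrees. We then let $Q$ have one node $q$ for each part $T(q)$ and set $D_q:=\bigcup_{y\in V(T(q))}B_y$. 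Two parts are adjacent in $Q$ when some edge of $T$ joins them, and $Q$ is rooted at the part containing the root of $T$.

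\emph{Verifying the properties.} Contracting each part of a partition of $V(T)$ into connected subtrees yields a tree. The sets $D_q$ form a tree-decomposition of $G$ because, for each vertex $v$, the set of parts meeting the (connected) subtree $\{x: v\in B_x\}$ is connected in $Q$, and every edge of $G$ lies in some bag $B_x\subseteq D_q$. The key identity for \ref{prop_adhesion} is that if the part $q'$ with root $y$ is a $Q$-child of $q$, attached through the $T$-edge $xy$ with $x\in T(q)$, then $D_q\cap D_{q'}=B_x\cap B_y$. This holds because a vertex lying in bags on both sides of the edge $xy$ must lie in both $B_x$ and $B_y$, by connectivity of its subtree. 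This gives \ref{prop_adhesion}.

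For \ref{prop_skinny_torso}, the bags $B_y$ with $y\in V(T(q))$ cover $D_q$ and form a tree-decomposition of $G[D_q]$. Each additional torso edge of $\torso{G}{\mathcal{Q},D_q}$ comes from an adhesion $B_x\cap B_y$ with $x\in T(q)$, so it lies inside the single bag $B_x$. Hence $(T(q),(B_y\mid y\in V(T(q))))$ is also a tree-decomposition of the torso, and it is $b$-skinny.

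\emph{Height bound.} For \ref{prop_geometric}, weight is defined by homes, so going down one level in $Q$ divides the relevant weight by at least $b$. At depth $d$ of $Q$ the part's root $y$ satisfies $\omega(T_y)\le n/b^d$. Every nonempty subtree in the recursion has weight at least $1$, since the root of each component $T_y$ with $y\neq$ root of $T$ is the home of the vertices in $B_y\setminus B_{\mathrm{parent}(y)}$. This requires $B_y\not\subseteq B_{\mathrm{parent}(y)}$, which we get by first passing to a tidy decomposition via \cref{tidy_forest}. That passage keeps the torsos and only uses original bags, so $T(q)$ remains a subtree of the tidied tree; I would state the corollary for the tidied $T$, or remark that zero-weight parts can be absorbed. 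Therefore $n/b^d\ge 1$, giving $d\le\log_b n$.

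I expect this weight-at-least-one issue, i.e.\ handling nodes with empty $B^-_x$, to be the main technical nuisance. The alternative to tidying first is to merge zero-weight components into their parent part; I would have to check that the merged part stays $b$-skinny, which is not clear, so tidying first is the cleaner route.
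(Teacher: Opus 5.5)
Your overall plan matches the paper's. You use home-based weights $\omega(y)=|B_y\setminus B_x|$, apply \cref{low_pathwidth_separator} repeatedly with threshold $\omega(T_y)/b$ on each remaining subtree, and create one node of $Q$ per skinny piece, with $D_q$ the union of its bags. Your checks of $D_q\cap D_{q'}=B_x\cap B_y$, of the torso, and of $\omega(T_y)\le n/b^d$ at depth $d$ are all correct.

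The gap is the step where you make every weight positive by first passing to a tidy decomposition. The tidying procedure behind \cref{tidy_forest} does not preserve the indexing tree. Its third step re-hangs a child $z$ of $y$ onto $y$'s parent $x$. One of your parts may then contain $x$ and $z$ without $y$: this is a subtree of the tidied forest $F'$ but not of $T$, and its layers are not layers of $T$. Its second step may disconnect $T$. In that case your construction, which starts from ``the root of $T$'', yields one $Q$ per component, and joining these into a single tree can add a level and exceed the bound in (c). So, as you concede, what you prove is a modified statement, not (b) as written.

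The missing observation is that zero weight only matters for a whole subtree of total weight $0$, and such a subtree can simply be discarded. Suppose $\omega(T_y)=0$, so no vertex has its home in $T_y$. The home of $v$ is the top node of the connected set $\{z: v\in B_z\}$. Hence every $v\in B_z$ with $z\in V(T_y)$ also lies in $B_y\cap B_x\subseteq B_x$, where $x$ is the parent of $y$.

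Delete these subtrees rather than merging them into the parent part; merging can indeed destroy skinniness, as you suspected. Deleting keeps $\mathcal{Q}$ a tree-decomposition of $G$ whose parts are subtrees of the original $T$. Every retained non-root part at depth $d$ then has $1\le\omega(T_y)\le n/b^d$, so $d\le\log_b n$. Isolated zero-weight nodes inside a subtree of positive total weight do no harm to \cref{low_pathwidth_separator}.

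The paper implements exactly this as an induction on $N=|V(G-A)|$. When $N\le b$ it stops and takes the single part $X=\{x:\omega(T_x)>0\}$. This part is $b$-skinny because the disjoint subtrees $T_x$ with $x\in X\cap L_i(T)$ each have weight at least $1$ and total weight at most $N\le b$. The bags of the dropped zero-weight subtrees are already contained in $D_s$. In the inductive step ($N>b$), a zero-weight child becomes a singleton leaf part below a part of weight greater than $b$, which costs nothing in height.
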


\begin{proof}
  Fix $b>1$.    Let $G$ be an $n$-vertex graph, let $\mathcal{T}:=(T,(B_x\mid x\in V(T)))$ be a rooted tree-decomposition of $G$, and let $r$ be the root of $T$.  We prove a slightly stronger statement that involves a set $A\subseteq B_r$.  The proof is by induction on $|V(G-A)|$.   To make the induction work we strengthen \ref{prop_geometric} to:
  \begin{enumerate}[nosep,nolistsep,label=\rm(\alph*')]
    \setcounter{enumi}{2}
    \item\label{prop_geometric_ii} $\height(Q) \le \max\{0, \log_b(|V(G-A)|)\}$
  \end{enumerate}

  Define the function
  \[
    \omega(y) := \begin{cases}
      |B_r\setminus A|   & \text{if $y=r$}                                         \\
      |B_y\setminus B_x| & \text{if $y\in V(T)\setminus\{r\}$ has $T$-parent $x$.}
    \end{cases}
  \]
  Observe that $\sum_{y\in V(T)}\omega(y)=|V(G-A)|$. Let $N:=|V(G-A)|$.

  First we consider the base case. If $N\le b$, let $Q$ be a tree that has only one node, $s$. Let $X=\{r\}$ if $N=0$, and $X=\{x\in V(T)\mid \omega(T_x)>0\}$ otherwise. Then $T[X]$ is a subtree of $T$ containing $r$. For any $i\in\N^+$, the subtrees $T_x$ for $x\in X\cap L_i(T)$ are disjoint, so $|X\cap L_i(T)| \le \sum_{x\in X\cap L_i(T)}\omega(T_x) \le \omega(T) = N \le b$. Thus, $T[X]$ is $b$-skinny. Let $D_s:=\bigcup_{x\in X} B_x$ and let $\mathcal{Q}:=(Q,(D_q\mid q\in V(Q)))$. For each $y\in N_T(X)$, $\omega(T_y)=0$, so $V(G_y-A_y)=\emptyset$, which implies $B_z\subseteq A_y\subseteq B_x$ for all $z\in V(T_y)$, where $x\in X$ is the $T$-parent of $y$. Hence $D_s=\bigcup_{y\in V(T)} B_y$, meaning $\mathcal{Q}$ is a tree-decomposition of $G$ with no adhesions, so~\ref{prop_adhesion} holds vacuously. Since $\torso{G}{\mathcal{Q},D_s}=G$, we take $T(s)=T[X]$ and then $(T(s),(B_y\mid y\in V(T(s))))$ is a $b$-skinny tree-decomposition of $\torso{G}{\mathcal{Q},D_s}$, so~\ref{prop_skinny_torso} holds. Since $V(Q)=\set{s}$ and $\depth_Q(s)=0$, condition \ref{prop_geometric_ii} holds because $\height(Q)=0 \le \max\{0, \log_b(N)\}$.

  Now assume $N> b$. Let $X:=\set{x\in V(T)\mid  \omega(T_x)>N/b}$. By \cref{low_pathwidth_separator}, $T[X]$ is a $b$-skinny subtree of $T$ that includes $r$.  The root of the tree $Q$ will be a node $s$ and $D_s:=\bigcup_{x\in X} B_x$.

  For each $y\in N_T(X)$, let $G_y:=G[\bigcup_{z\in V(T_y)} B_z]$ and let $A_y := B_y \cap B_x$ be the parent adhesion of $y$ in $\mathcal{T}$, where $x$ is the $T$-parent of $y$.  Then $\mathcal{T}_y:=(T_y, (B_z\mid z\in V(T_y)))$ is a tree-decomposition of $G_y$ and every adhesion of $\mathcal{T}_y$ is an adhesion of $\mathcal{T}$. Observe that $|V(G_y-A_y)| = \omega(T_y) \le N/b$. Apply the inductive hypothesis to $G_y$ with tree-decomposition $\mathcal{T}_y$ and special set $A_y$ to obtain a tree-decomposition $\mathcal{Q}_y=(Q_y,(D_q \mid q\in V(Q_y)))$ of $G_y$.  By induction, each bag of $\mathcal{Q}_y$ has a tree-decomposition formed by a (skinny) subtree of $T$.  Therefore, some bag $D_{r_y}$ of $\mathcal{Q}_y$ contains all the vertices of $B_y$.  Make $r_y$ a child of $s$ in the tree $Q$.

  We now show that $\mathcal{Q}$ is a tree-decomposition of $G$ that satisfies the requirements of the lemma.

  \begin{itemize}

    \item[\ref{prop_adhesion}:] For each $y\in N_T(X)$, the adhesion between $s$ and $r_y$ is $A_y$, which is an adhesion of $\mathcal{T}$. Furthermore, each adhesion of $\mathcal{T}_y$ is an adhesion of $\mathcal{T}$, so each adhesion of $\mathcal{Q}_y$ is an adhesion of $\mathcal{T}$.  Thus every adhesion of $\mathcal{Q}$ is an adhesion of $\mathcal{T}$.

    \item[\ref{prop_skinny_torso}:] $\mathcal{T}_X:=(T[X],(B_y\mid y\in V(T[X])))$ is a $b$-skinny tree-decomposition of $\torso{G}{\mathcal{Q},D_s}$.  For each $y\in N_T(X)$, $\torso{G}{\mathcal{Q}_y,D_{r_y}}$ has a tree-decomposition whose tree is a $b$-skinny subtree of $T_y$. The tree-decomposition of $\torso{G_y}{\mathcal{Q}_y,D_{r_y}}$ contains $B_y$, therefore this tree-decomposition is also a tree-decomposition of $\torso{G}{\mathcal{Q},D_{r_y}}$.  By induction, every other bag of $\mathcal{Q}_y$ has a tree-decomposition whose tree is a $b$-skinny subtree of $T$. Thus, each torso of $\mathcal{Q}$ has a $b$-skinny tree-decomposition whose tree is a subtree of $T$.

    \item[\ref{prop_geometric_ii}:] For each $y\in L_2(Q)$, the inductive hypothesis implies that $\height(Q_y)\le\max\{0,\linebreak \log_b(|V(G_y-A_y)|)\}$. Since $|V(G_y-A_y)| = \omega(T_y) \le N/b = |V(G-A)|/b$,  \[\height(Q_y)\le\max\{0, \log_b(|V(G-A)|/b)\} = \max\{0, \log_b|V(G-A)| - 1\}.\] 
    Thus, $\height(Q)=1+\max\{\height(Q_y):y\in L_2(Q)\}\le \max\{1, \log_b|V(G-A)|\} =\linebreak \log_b|V(G-A)|$ (since $|V(G-A)| > b \ge 1$). \qedhere
  \end{itemize}
\end{proof}

\section{Weighted Mixed Labelling Schemes -- Proof of \texorpdfstring{\cref{PlainRealMainTechnical}}{Theorem \ref*{PlainRealMainTechnical}}}
\label{weighted_mixed_section}
 
This section defines weighted mixed labelling schemes and proves our main technical result, namely \cref{PlainRealMainTechnical}. Let $\mathcal{G}$ be a class of graphs.
Let $g_i:\N\to\R^+$ for each $i\in[3]$.
The class  $\mathcal{G}$ admits a \defin{weighted $(g_1,g_2,g_3)$ mixed labelling scheme} of $\mathcal{G}$ if there exists a pair $(A,I)$ with
\begin{itemize}[nosep,nolistsep]
  \item $A:(\{0,1\}^*)^2\to\{0,1\}$ (the \defin{adjacency tester}); and
  \item $I:(\{0,1\}^*)^3\to\{0,1\}$ (the \defin{identity tester})
\end{itemize}
such that for every $n\in\N^+$, every $n$-vertex graph $G^+$ in $\mathcal{G}$, every spanning subgraph $G$ of $G^+$, and every weight function $\omega:V(G)\to \R^+$, there exists a function $\mu: V(G)\cup \{K\mid \text{$K$ is a clique in $G^+$}\} \to \set{0,1}^*$ and a function $\kappa:\{(K, u)\mid \text{$K$ is a clique in $G^+$}, u\in K\}\to \set{0,1}^*$ such that:
\begin{enumerate}[nosep,nolistsep,label=(\roman*)]
  \item For all pairs of distinct cliques $K$ and $L$ in $G^+$, $\mu(K)\neq\mu(L)$;
  \item For all $u,v\in V(G)$,
        \[
          \textrm{$A(\mu(u),\mu(v))=1$ if and only if $uv\in E(G)$;}
        \]
  \item For every clique $K$ in $G^+$, $u\in K$, $v\in V(G)$,
        \[
          \textrm{$I(\mu(K),\kappa(K, u),\mu(v))=1$ if and only if $u=v$;}
        \]
  \item for every $v\in V(G)$,
        \[
          |\mu(v)| \leq \log \omega(G) - \log\omega(v) + g_1(n);
        \]
  \item for every clique $K$ in $G^+$,
        \[
          |\mu(K)|\le \log \omega(G) - \log\min_{v\in K}(\omega(v)) + g_3(n);
        \]
  \item  for every clique $K$ in $G^+$ and $u\in K$,
        \[
          |\kappa(K, u)| \leq g_2(n).
        \]
\end{enumerate}
A pair $(\mu,\kappa)$ that satisfies the first two conditions is a \defin{mixed labelling} of $(G^+,G)$ for $(A,I)$.  A pair $(\mu,\kappa)$ that satisfies all these conditions is a \defin{weighted $(g_1,g_2,g_3)$ mixed labelling} of $(G^+,G,\omega)$ for $(A,I)$.

Here is a little intuition about these definitions. Function $g_1(n)$ measures how much the length of a vertex label exceeds the ideal value $\log \omega(G) - \log\omega(v)$ that appears in Kraft's Inequality \cite{Kraft1949}. Function $g_2(n)$ is the length of the longest local identifier. Function $g_3(n)$ measures how much the length of a clique label exceeds the Kraft-like quantity $\log\omega(G)-\log\min_{v\in K}(\omega(v))$.

The existence of a weighted $(g_1(n),\cdot,\cdot)$ mixed labelling scheme for a graph class $\mathcal{G}$ immediately implies the existence of a $(\log n+g_1(n))$-bit adjacency labelling scheme for $\mathcal{G}$. Thus, we are interested in the case where $g_1(n)\in o(\log n)$. We say that a graph class $\mathcal{G}$ admits an \defin{efficient} weighted mixed labelling scheme if $\mathcal{G}$ admits a weighted $(g_1,g_2,g_3)$ mixed labelling scheme with $g_i(n)\in o(\log n)$ for each $i\in [3]$. So any graph class that admits an efficient weighted mixed labelling scheme admits a $(1+o(1))\log n$-bit adjacency labelling scheme. The following lemma shows that any adjacency labelling scheme obtained this way is injective, which justifies \cref{main_result_induced_universal}.

\begin{lem}
  Let $(\mu,\kappa)$ be a mixed labelling of some $(G^+,G,\omega)$ for some $(A,I)$. Then the restriction $\mu_{|V(G)}$ of $\mu$ to $V(G)$ is injective.
\end{lem}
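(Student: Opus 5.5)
We argue via the identity tester, using that every single vertex forms a clique in $G^+$. Suppose, for a contradiction, that $u,v\in V(G)$ are distinct vertices with $\mu(u)=\mu(v)$. Since $G$ is a spanning subgraph of $G^+$, the vertex $u$ is also a vertex of $G^+$. The set $K:=\{u\}$ is non-empty and (vacuously) pairwise adjacent, so $K$ is a clique in $G^+$. Hence $\mu(K)$ and $\kappa(K,u)$ are both defined, and condition (iii) applies to the triple $(K,u,\cdot)$.

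Apply condition (iii) twice, once with the vertex argument $u$ and once with $v$. Taking the vertex argument equal to $u$ gives
\[
  I(\mu(K),\kappa(K,u),\mu(u))=1 ,
\]
since $u=u$. Taking the vertex argument equal to $v$ gives
\[
  I(\mu(K),\kappa(K,u),\mu(v))=0 ,
\]
since $v\neq u$. But $I$ is a function of the three bitstrings only, and $\mu(u)=\mu(v)$. So the two left-hand sides are identical inputs to $I$ and must produce the same output. This is a contradiction, so $\mu_{|V(G)}$ is injective.

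There is no real obstacle here. The only point to check is that singletons count as cliques under the paper's definition (``a non-empty set of pairwise adjacent vertices''), which they do. Note that we never use the adjacency tester $A$; using $A$ alone would fail, since vertices with equal neighbourhoods in $G$ could otherwise share a label.
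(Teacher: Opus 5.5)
Your proof is correct and is essentially identical to the paper's. Both take the singleton clique $K=\{u\}$ in $G^+$ and note that condition (iii) would force $I(\mu(K),\kappa(K,u),\cdot)$ to return both $1$ and $0$ on the same input string $\mu(u)=\mu(v)$ (like the paper, you rely on condition (iii) being part of the notion of a mixed labelling, which is clearly intended even though the definition's wording says ``the first two conditions'').
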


\begin{proof}
  Suppose, for the sake of contradiction, that there exists distinct $v,w\in V(G)$ such that $\mu(v)=\mu(w)$. Since  $K:=\{v\}$ is a clique in $G^+$, this immediately leads to the contradiction
  $1=I(\mu(K),\kappa(K,v),\mu(v))=I(K,\kappa(K,v),\mu(w))=0$.
\end{proof}

It will actually be convenient to consider a slightly weaker notion of weighted $(g_1,g_2,g_3)$ mixed labelling schemes.

A weight function $\omega:S\to \N^+$ is \defin{nice} if
for every $x\in S$,
\[
  \log \omega(S) - \log \omega(x) \leq \log |S| + 2\enspace.
\]
(We emphasize that weights are required to be positive integers here, which will be important later on.)
We then say that a class of graphs $\mathcal{G}$ admits a \defin{weak weighted $(g_1,g_2,g_3)$ mixed labelling scheme} if $\mathcal{G}$ satisfies the definition given above for weighted $(g_1,g_2,g_3)$ mixed labelling schemes, with the exception that only nice weight functions $\omega$ are considered.
The two notions are essentially equivalent, as shown by the following lemma. This will allow us to only consider nice weight functions in the proofs.

\begin{lem}
  \label{lem:nice_weights}
  Let $\mathcal{G}$ be a class of graphs.
  Let $g_i:\N\to\R$ be functions for each $i\in[3]$.
  If $\mathcal{G}$ admits a weak weighted $(g_1,g_2,g_3)$ mixed labelling scheme, then $\mathcal{G}$ admits a weighted $(g_1+2,g_2,g_3+2)$ mixed labelling scheme.
\end{lem}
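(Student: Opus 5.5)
The plan is to reduce an arbitrary weight function to a nice one using \cref{obs:nice}, and then apply the weak scheme unchanged. Keep the same pair $(A,I)$ given by the weak weighted $(g_1,g_2,g_3)$ mixed labelling scheme. Let $n\in\N^+$, let $G^+\in\mathcal{G}$ have $n$ vertices, let $G$ be a spanning subgraph of $G^+$, and let $\omega:V(G)\to\R^+$ be arbitrary. Apply \cref{obs:nice} with $S:=V(G)$ to obtain $\omega':V(G)\to\N^+$ such that, for each $v\in V(G)$,
\[
  \log\omega'(G)-\log\omega'(v)\le\min\{\log n,\;\log\omega(G)-\log\omega(v)\}+2 .
\]
The first branch of the minimum shows that $\omega'$ is nice, since its values are positive integers and $|S|=n$.

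Next, let $(\mu,\kappa)$ be the weak weighted $(g_1,g_2,g_3)$ mixed labelling of $(G^+,G,\omega')$ for $(A,I)$, and use it unchanged for $(G^+,G,\omega)$. Conditions (i), (ii), (iii) and (vi) do not involve the weights, so they transfer directly. For (iv), we get
\[
  |\mu(v)|\le\log\omega'(G)-\log\omega'(v)+g_1(n)\le\log\omega(G)-\log\omega(v)+g_1(n)+2 .
\]

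The only step needing care is (v), because the weight appears there through a minimum over the clique $K$. Let $v^*\in K$ minimise $\omega'$ on $K$. Then
\[
  |\mu(K)|\le\log\omega'(G)-\log\omega'(v^*)+g_3(n)\le \log\omega(G)-\log\omega(v^*)+g_3(n)+2 ,
\]
by the pointwise bound from \cref{obs:nice} applied at $v^*$. Since $\omega(v^*)\ge\min_{v\in K}\omega(v)$, this is at most $\log\omega(G)-\log\min_{v\in K}\omega(v)+g_3(n)+2$. This uses only that the pointwise inequality holds at every vertex, in particular at $v^*$, so there is no real obstacle. We conclude that we have a weighted $(g_1+2,g_2,g_3+2)$ mixed labelling scheme.
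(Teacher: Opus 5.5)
Your proof is correct and is essentially the paper's argument: keep the same testers $(A,I)$, replace $\omega$ by the nice integer weight function $\omega'$ given by the normalisation observation, apply the weak scheme to $(G^+,G,\omega')$, and transfer the length bounds using the pointwise inequality. Your explicit choice of a minimiser $v^*$ of $\omega'$ on $K$ just spells out the clique-label step, which the paper states in a single line.
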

\begin{proof}
  Suppose that $\mathcal{G}$ admits a weak weighted $(g_1,g_2,g_3)$ mixed labelling scheme.
  Let $A$ and $I$ be the corresponding adjacency and identity tester, respectively.
  We will use the same pair $(A, I)$ for our weighted $(g_1+2,g_2,g_3+2)$ mixed labelling scheme.
  Let $n\in \N^+$, let $G^+$ be an $n$-vertex graph in $\mathcal{G}$, let $G$ be a spanning subgraph of $G^+$, and let $\omega:V(G)\to \R^+$ be an arbitrary weight function.
  Let $\omega'$ be the function obtained by applying \cref{obs:nice} to $\omega$.
  Thus, for every $v\in V(G)$,
  \[
    \log\omega'(V(G))-\log\omega'(v)\le\min\{\log n,\;\log\omega(G)-\log\omega(v)\}+2 \enspace.
  \]
  In particular, $\omega'$ is nice.
  Hence, by our assumption, there is a mixed labelling $\mu$ of $(G^+,G,\omega')$ for $(A,I)$ and a local identifier $\kappa$ of $(G^+,G,\omega')$ for $(A,I)$.
  Hence, for every $v\in V(G)$,
  \[
    |\mu(v)| \leq \log \omega'(G) - \log\omega'(v) + g_1(n) \leq \log \omega(G) - \log\omega(v) + g_1(n) + 2\enspace,
  \]
  and  for every clique $K$ in $G^+$,
  \[
    |\mu(K)|\le \log \omega'(G) - \log\min_{v\in K}(\omega'(v)) + g_3(n)
    \leq \log \omega(G) - \log\min_{v\in K}(\omega(v)) + g_3(n) + 2\enspace.
  \]
  It follows that $\mu$ and  $\kappa$ are the desired  mixed labelling and local identifier, respectively,  of $(G^+,G,\omega)$ for $(A,I)$.
\end{proof}

\subsection{Adding Apex Vertices}
\label{sec:adding_apexes}

The next lemma shows how a mixed labelling scheme can accommodate the addition of apex vertices. Recall that $\mathcal{G}^{+a}$ is the class of graphs $G$ such that $G-X\in\mathcal{G}$ for some $X\subseteq V(G)$ with $|X|\leq a$.

\begin{lem}\label{add_apexes}
  Let $\mathcal{G}$ be a class of graphs that admits a weighted $(g_1,g_2,g_3)$ mixed labelling scheme, where $g_i:\N\to\R$ is a non-decreasing function  with $g_i(n)\in \Oh(\log n)$ for each $i\in[3]$.
  Let $a\in\N^+$. Then $\mathcal{G}^{+a}$ admits a weighted $(g'_1,g'_2,g'_3)$ mixed labelling scheme, where
  \begin{align*}
    g'_1(n)
     & = g_1(n)+\Oh(a)+\Oh(\log\log n) \enspace,  \\
    g'_2(n)
     & = g_2(n)+\Oh(\log a) \enspace,             \\
    g'_3(n)
     & = g_3(n)+\Oh(a)+\Oh(\log\log n) \enspace .
  \end{align*}
\end{lem}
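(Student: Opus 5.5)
Given $G^+\in\mathcal{G}^{+a}$ with $n$ vertices, a spanning subgraph $G$, and a weight function $\omega$, fix $X\subseteq V(G^+)$ with $|X|\le a$ and $G^+-X\in\mathcal{G}$. Write $X=\{x_1,\dots,x_{a'}\}$ with $a'\le a$, and let $Y:=V(G)\setminus X$, assuming $Y\neq\emptyset$ (the degenerate case is trivial). Apply the scheme for $\mathcal{G}$ to $(G^+-X,\,G-X,\,\omega|_Y)$ to obtain $(\mu_0,\kappa_0)$. Since $\log\omega(Y)\le\log\omega(G)$, the Kraft-type bounds for $\mu_0$ carry over to $G$ with the same overhead. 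Monotonicity of $g_i$ handles the smaller vertex count $n-a'$.

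Labels are built in three parts.

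\emph{Vertex labels for $v\in Y$.} Set $\mu(v):=\langle 0,\mu_0(v),\beta(v)\rangle$, where $\beta(v)$ is the $a'$-bit vector recording adjacency in $G$ between $v$ and each $x_i$. This costs $a$ payload bits plus $O(\log\log n)$ bookkeeping bits, using $|\mu_0(v)|=O(\log n)$ after \cref{lem:nice_weights}.

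\emph{Vertex labels for $x_i\in X$.} Set $\mu(x_i):=\langle 1,i,\beta'(x_i)\rangle$, where $\beta'$ records adjacencies inside $X$. The label has length $O(a)+O(\log a)$, which fits the required bound trivially since $\log\omega(G)-\log\omega(x_i)\ge 0$.

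\emph{Adjacency tester.} Branch on the leading flags. For two vertices of $Y$, apply $A_0$ to the $\mu_0$ parts. For a vertex of $Y$ and some $x_i$, read bit $i$ of $\beta$. For two apex vertices, read $\beta'$.

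\emph{Cliques and local identifiers.} For a clique $K$ of $G^+$, set $K':=K\setminus X$, which is a clique of $G^+-X$ or empty, and $K_X:=K\cap X$. Define
\[
\mu(K):=\langle \mu_0(K')\text{ or empty},\ \chi(K_X)\rangle,
\]
where $\chi(K_X)$ is the $a'$-bit indicator vector of $K_X$ and the empty case is signalled by a bookkeeping bit. Distinct cliques then receive distinct labels, since $\mu_0$ is injective on cliques and $\chi$ determines $K_X$. For $u\in K'$ let $\kappa(K,u):=\langle 0,\kappa_0(K',u)\rangle$; for $u=x_i$ let $\kappa(K,u):=\langle 1,i\rangle$, which costs $O(\log a)$ extra bits.

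\emph{Identity tester.} If $\kappa$ has flag $1$ and index $i$, check that $\mu(v)$ is the apex label with index $i$. Otherwise check that $v\in Y$ and apply $I_0(\mu_0(K'),\kappa_0(K',u),\mu_0(v))$. The clique-label length is $|\mu_0(K')|+a+O(\log\log n)$. When $K'\neq\emptyset$ we have $\min_{v\in K'}\omega(v)\ge\min_{v\in K}\omega(v)$, so the $g_3$ bound transfers. When $K'=\emptyset$ the length is $O(a)$, which is within budget.

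The main obstacle is the bookkeeping accounting, that is, keeping self-delimiting overhead at $O(\log\log n)$. This needs the component lengths to be $O(\log n)$, which follows from $g_i\in O(\log n)$ and, if needed, from first reducing to nice weights via \cref{lem:nice_weights}.
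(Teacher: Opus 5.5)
Your construction is essentially the paper's proof. You restrict to $G^+-X$ and label it with the scheme for $\mathcal{G}$, attach an $a'$-bit adjacency vector to $X$ to each vertex label, and give apex vertices short index-based labels. Clique labels pair $\mu_0(K\setminus X)$ with the indicator vector of $K\cap X$, and a flag bit in $\kappa$ separates apex from non-apex members. Using an explicit flag where the paper uses the zero-padded $\sigma(v)$, and a separate $\beta'$ for apex--apex adjacency, are cosmetic differences. The weight bookkeeping ($\omega(Y)\le\omega(G)$, $\min_{K'}\omega\ge\min_K\omega$, reduction to nice weights) is also handled as in the paper.

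The one step that does not meet the statement as written is the local identifier for non-apex vertices. Your remarks on bookkeeping indicate that $\kappa(K,u)=\langle 0,\kappa_0(K',u)\rangle$ means the self-delimiting multipart encoding. That encoding adds $\Theta(\lgg|\kappa_0(K',u)|)$ bits, up to $O(\log g_2(n))=O(\log\log n)$, which does not fit in $g_2(n)+O(\log a)$. You also only bound the apex case explicitly.

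The fix, which the paper states explicitly, is to use plain concatenation. Write a single flag bit followed by the raw string $\kappa_0(K',u)$ or $\bin(i)$. The identity tester can still parse this, because the payload is the entire remainder of the string. This gives $|\kappa(K,u)|\le 1+\max\{g_2(n),\lceil\log(a+1)\rceil\}=g_2(n)+O(\log a)$.
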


\begin{proof}
  By~\cref{lem:nice_weights} it is enough to show that $\mathcal{G}^{+a}$ admits a weak weighted $(g_1',g_2',g_3')$ mixed labelling scheme. We will describe an adjacency tester $A'$ and an identity tester $I'$ for which we can construct a mixed labelling $(\mu,\kappa)$ for any $(G^+,G,\omega)$ where $G^+\in \mathcal{G}^{+a}$, $G$ is a spanning subgraph of $G^+$, and $\omega:V(G)\to\N^+$ is a nice weight function.
  Since it is not possible to describe $A'$ and $I'$ without knowing the contents of the labels, we first describe how to compute $\mu$ and $\kappa$ for a particular $(G^+,G,\omega)$.

  Let $G^+$ be an $n$-vertex graph in $\mathcal{G}^{+a}$.  Thus, there is a subset $X\subseteq V(G^+)$ of at most $a$ vertices such that $G^+-X\in\mathcal{G}$.  Let $G$ be a spanning subgraph of $G^+$, and let $\omega:V(G)\to\N^+$ be a nice weight function.
  Thus, for each $v\in V(G)$,
  \begin{equation}
    \log\omega(G)-\log\omega(v)\le\log n + 2\enspace.
  \end{equation}

  Let $n\in \N^+$,
  let $G^+$ be an $n$-vertex graph in $\mathcal{G}^{+a}$,    let $G$ be a spanning subgraph of $G^+$,    and
  let $\omega:V(G)\to\N^+$ be a nice weight function.

  \paragraph{\boldmath The subgraph label $\mu^-$:}
  Let $(A,I)$ be the pair of adjacency and identity testers for a weighted $(g_1,g_2,g_3)$ mixed labelling scheme of $\mathcal{G}$.
  Let $(\mu^-,\kappa^-)$ be a weighted $(g_1,g_2,g_3)$ mixed labelling of $(G^+-X, G-X,\omega_{|V(G)})$ for $(A,I)$.
  Then for each $v\in V(G-X)$,
  \begin{align}
    \begin{split}
      |\mu^-(v)|
       & \le \log\omega(G-X) -\log\omega(v)+g_1(|V(G-X)|) \\
       & \le \log\omega(G)-\log\omega(v)+g_1(n) \enspace,
    \end{split}
    \label{eq:mu-}
  \end{align}
  and for each clique $K$ in $G-X$,
  \begin{align}
    \label{eq:split}
    \begin{split}
      |\mu^-(K)|
       & \le \log\omega(G-X)-\log\min_{v\in K}(\omega(v))+g_3(|V(G-X)|)  \\
       & \le \log\omega(G)-\log\min_{v\in K}(\omega(v))+g_3(n) \enspace.
    \end{split}
  \end{align}
  Extend the domain of $\mu^-$ to include the vertices in $X$ by defining $\mu^-(v):=\varepsilon$ (the empty string), for each $v\in X$.

  \paragraph{\boldmath The apex identifier $\sigma(u)$:}
  Fix a linear ordering $u_1,\ldots,u_{b}$ of the vertices in $X$.  By definition, $b\le a$.
  For each $u\in V(G)$ define
  \[
    \sigma(u) :=
    \begin{cases}
      \textrm{$\lceil\log(b+1)\rceil$-bit binary representation of $i$} & \textrm{if $u\in X$ and $u=u_i$,} \\
      \textrm{$\lceil\log(b+1)\rceil$ zero bits}                        & \textrm{if $u\in V(G)-X$.}
    \end{cases}
  \]

  \paragraph{\boldmath The vertex label $\mu(v)$:} For each $v\in V(G)$, let $c(v)$ be a $b$-bit string whose $i$-th bit indicates if $vu_i\in E(G)$.
  For each $v\in V(G)$, define
  \[
    \mu(v):=\langle \sigma(v),c(v),\mu^-(v) \rangle \enspace .
  \]
  For each $v\in V(G)$,
  \begin{align*}
    |\sigma(v)|+b+|\mu^-(v)|
     & \le \lceil\log(b+1)\rceil + b +\log\omega(G-X) - \log\omega(v)+g_1(n-|X|)           & \textrm{by~\eqref{eq:mu-}}      \\
     & \le\log\omega(G) - \log\omega(v)+g_1(n)+2a                                                                            \\
     & \le \log n + g_1(n)+2a+2
     & \textrm{since $\omega$ is nice}                                                                                       \\
     & \in\Oh(a+\log n)\enspace,
    \intertext{and therefore}
    |\mu(v)|
     & \leq |\sigma(v)|+a+|\mu^-(v)| +\mathcal{O}(\lgg |\sigma(v)|+\lgg a+\lgg |\mu^-(v)|) & \text{by \eqref{encoding_cost}} \\
     & \leq \log\omega(G) - \log\omega(v)+g_1(n) + \mathcal{O}(a+\log\log n) \enspace .
  \end{align*}

  \paragraph{Adjacency testing:}
  We now describe the adjacency tester $A'$.
  Given the labels $\mu(v)=\langle \sigma(v),c(v),\mu^-(v)\rangle$ and $\mu(w)=\langle \sigma(w),c(w),\mu^-(w)\rangle$ for two vertices $v$ and $w$ of $G$,
  $A'$ works as follows.
  First by checking the values of $\sigma(v)$ and $\sigma(w)$ we check if $v$ and/or $w$ lie in $X$.
  If at least one of them, say $v$, is in $X$, then $v=u_i$ for some $i\in[b]$ and $\sigma(v)$ encodes the value of $i$.  In this case,
  we set $A'(\mu(v),\mu(w))$ to be the $i$-th bit of $c(w)$.
  If neither $v$ nor $w$ is in $X$, then $A'(\mu(v),\mu(w))=A(\mu^-(v),\mu^-(w))$.

  \paragraph{\boldmath The clique label $\mu(K)$:}
  For each clique $K$ in $G^+$, let $c(K)$ be a $b$-bit string whose $i$-th bit indicates if $u_i\in K$.
  For each clique $K$ in $G^+$, define
  \[
    \mu(K):=
    \begin{cases}
      \langle c(K) \rangle           & \text{if $K\subseteq X$,} \\
      \langle c(K),\mu^-(K-X)\rangle & \text{otherwise.}
    \end{cases}
  \]
  To see that $\mu$ is injective when restricted to cliques, consider two distinct cliques $K$ and $L$ in $G^+$. If $K\cap X \neq L\cap X$, then $c(K)\neq c(L)$, so $\mu(K)\neq \mu(L)$ because their first parts differ. Otherwise, $K\setminus X \neq L\setminus X$. If one is empty, say $K\subseteq X$, then $\mu(K)$ has one part while $\mu(L)$ has two parts, so $\mu(K)\neq\mu(L)$. If both are non-empty, then $\mu^-(K-X)\neq \mu^-(L-X)$ since $\mu^-$ is injective on cliques of $G^+-X$, which implies $\mu(K)\neq \mu(L)$.

  If $K\not\subseteq X$, then
  \begin{align*}
    b+|\mu^-(K-X)|
     & \le \log\omega(G)-\log\min_{v\in K}(\omega(v)) + g_3(n-|X|)+a
     & \textrm{by~\eqref{eq:split}}                                                                  \\
     & \le \log n + 2 + g_3(n)+a
     & \textrm{since $\omega$ is nice}                                                               \\
     & \in\Oh(a) +\Oh(\log n) \enspace,
    \intertext{and therefore}
    |\mu(K)|
     & \le |\mu^-(K-X)|+b + \Oh(\lgg|\mu^-(K-X)|+\lgg b)
     & \text{by \eqref{encoding_cost}}                                                               \\
     & \leq \log\omega(G)-\log\min_{v\in K}(\omega(v)) + g_3(n) + \mathcal{O}(a+\log\log n)\enspace.
  \end{align*}
  If $K\subseteq X$, then $|\mu(K)| \le b + \Oh(\lgg b)$, which easily satisfies the same bound.

  \paragraph{\boldmath The local identifier $\kappa(K,u)$:}
  For each clique $K$ in $G^+$ and each $u\in K$, define
  \[
    \kappa(K,u) :=
    \begin{cases}
      0,\kappa^-(K-X,u) & \text{if $u\not\in X$,} \\
      1,\sigma(u)       & \text{if $u\in X$.}
    \end{cases}
  \]
  Note that $\kappa(K,u)$ is a concatenation, so it does not incur the encoding cost from \eqref{encoding_cost}. Thus,
  \[
    |\kappa(K,u)|\le 1+ \max\{g_2(n), \lceil\log(a+1)\rceil\} \le g_2(n)+\Oh(\log a) \enspace.
  \]

  \paragraph{Identity testing:}
  We now describe the identity tester $I'$.
  Given $\mu(K)$, $\kappa(K,u)$, and $\mu(v)$ for some clique $K$ in $G^+$, $u\in K$, and $v\in V(G)$,
  we compute $I'(\mu(K),\kappa(K,u),\mu(v))$ as follows.
  First by examining $\sigma(v)$ and the first bit of $\kappa(K,u)$, the tester determines whether $v\in X$ and whether $u\in X$.
  If $u,v\in X$, then $I'(\mu(K),\kappa(K,u),\mu(v))=1$ if $\kappa(K,u)=1,\sigma(v)$, and $I'(\mu(K),\kappa(K,u),\mu(v))=0$ otherwise.
  If $u,v\not\in X$, then since $u\in K$ we have that $K\not\subseteq X$, which means $\mu(K)$ has a second part $\mu^-(K-X)$. The tester extracts this second part and sets
  $I'(\mu(K),\kappa(K,u),\mu(v)) = I(\mu^-(K-X),\kappa^-(K-X,u),\mu^-(v))$.
  Otherwise, one of $u$ or $v$ is in $X$ and the other is not, so $u\neq v$ and we set $I'(\mu(K),\kappa(K,u),\mu(v)) =0$.
  \qedhere
\end{proof}

\subsection{Disjoint Unions of Graphs}
\label{sec:disjoint}

\begin{lem}\label{disjoint_union_ii}
  Let $\mathcal{G}$ be a class of graphs that admits a weighted $(g_1,g_2,g_3)$ mixed labelling scheme, where $g_i:\N\to\R^+$ is a non-decreasing function  with $g_i(n)\in \Oh(\log n)$ for each $i\in[3]$. Let $\mathcal{G}'$ be the class of graphs that can be formed by taking the union of a finite number of pairwise vertex-disjoint graphs in $\mathcal{G}$.
  Then $\mathcal{G}'$ admits a weighted $(g'_1,g_2,g_3')$ mixed labelling scheme, where
  \begin{align*}
    g'_1(n)
     & = g_1(n)+\Oh(\log\log n)\enspace,   \\
    g'_3(n)
     & = g_3(n)+\Oh(\log\log n) \enspace .
  \end{align*}
\end{lem}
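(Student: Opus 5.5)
By \cref{lem:nice_weights} it suffices to build a weak weighted scheme, so I assume $\omega$ is nice with positive integer weights. The approach is the 2-part label idea from the overview: prefix each component's labels with a variable-length component identifier. Let $G^+=G^+_1\cup\cdots\cup G^+_m$ with each $G^+_j\in\mathcal{G}$. Let $G_j:=G[V(G^+_j)]$, which is spanning in $G^+_j$, and let $n_j:=|V(G^+_j)|$. Order the components arbitrarily and weight component $j$ by $W_j:=\omega(G_j)\in\N^+$. \Cref{shannon_fano_elias_code} then gives a prefix-free code $\rho$ on $[m]$ with $|\rho(j)|\le\log\omega(G)-\log\omega(G_j)+3$. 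For each $j$, I take a weighted $(g_1,g_2,g_3)$ mixed labelling $(\mu_j,\kappa_j)$ of $(G^+_j,G_j,\omega|_{V(G_j)})$ for the scheme $(A,I)$ of $\mathcal{G}$.

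I then define the labels. For $v\in V(G_j)$ set $\mu(v):=\langle\rho(j),\mu_j(v)\rangle$. Every clique $K$ of $G^+$ lies inside a single component $G^+_j$, so I can set $\mu(K):=\langle\rho(j),\mu_j(K)\rangle$ and $\kappa(K,u):=\kappa_j(K,u)$, leaving local identifiers unchanged and giving $g_2'=g_2$. The testers work as follows. $A'$ first compares the first parts; if they differ it outputs $0$, which is correct since there are no edges between components. Otherwise it returns $A$ on the second parts. $I'$ likewise outputs $0$ if the component codes differ, and otherwise returns $I$ on the second parts. Injectivity on cliques follows because $\rho$ is injective and each $\mu_j$ is injective on the cliques of $G^+_j$.

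For the length bound on a vertex label I add the two parts and telescope:
\[
|\rho(j)|+|\mu_j(v)|\le \log\omega(G)-\log\omega(G_j)+3+\log\omega(G_j)-\log\omega(v)+g_1(n_j),
\]
which is at most $\log\omega(G)-\log\omega(v)+g_1(n)+3$ because $g_1$ is non-decreasing. The clique bound is identical, with $\min_{v\in K}\omega(v)$ in place of $\omega(v)$ and $g_3$ in place of $g_1$.

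The one point needing care is the bookkeeping cost from \eqref{encoding_cost}, which is $\Oh(\lgg|\rho(j)|+\lgg|\mu_j(\cdot)|)$. To get $\Oh(\log\log n)$ I need both parts to have length $\Oh(\log n)$. Niceness gives $\log\omega(G)-\log\omega(v)\le\log n+2$, so $|\rho(j)|\le\log\omega(G)-\log\omega(G_j)+3\le\log n+5$, since $\omega(G_j)\ge\omega(v)$ for any $v$ in the component. Also $|\mu_j(\cdot)|\le\log n+2+g_i(n)=\Oh(\log n)$ because $g_i(n)\in\Oh(\log n)$. Hence the bookkeeping is $\Oh(\log\log n)$, which yields $g_1'=g_1+\Oh(\log\log n)$ and $g_3'=g_3+\Oh(\log\log n)$ after applying \cref{lem:nice_weights}, whose additive $+2$ is absorbed. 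Since the component code is length-bounded using $\omega(G_j)\ge\omega(v)$, no separate niceness condition is needed for the sub-labellings, as they only use the general weighted scheme.
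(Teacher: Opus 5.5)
Your proposal is correct and matches the paper's proof essentially step for step. Like the paper, you reduce to nice weights via \cref{lem:nice_weights} and prefix each piece's labels with a Shannon--Fano--Elias code weighted by $\omega(G_j)$. You also reuse the local identifiers unchanged, telescope the length bounds, and use niceness so that both parts have length $\Oh(\log n)$, which makes the bookkeeping cost $\Oh(\log\log n)$.
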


\begin{proof}
  By~\cref{lem:nice_weights} it is enough to show that $\mathcal{G}'$ admits a weak weighted $(g_1',g_2,g_3')$ mixed labelling scheme. We will describe an adjacency tester $A'$ and an identity tester $I'$ for which we can construct a mixed labelling $(\mu,\kappa)$ for any $(G^+,G,\omega)$ where $G^+\in\mathcal{G}'$, $G$ is a spanning subgraph of $G^+$, and $\omega:V(G)\to\N^+$ is a nice weight function.
  Since it is not possible to describe $A'$ and $I'$ without knowing the contents of the labels, we first describe how to compute $\mu$ and $\kappa$ for a particular $(G^+,G,\omega)$.

  Let $G^+$ be an $n$-vertex graph in $\mathcal{G}'$, let $G$ be a spanning subgraph of $G^+$, and let $\omega:V(G)\to\N^+$ be a nice weight function.
  Thus, for each $v\in V(G)$,
  \begin{equation}
    \label{eq:nice}
    \log\omega(G)-\log\omega(v)\le\log n + 2\enspace.
  \end{equation}
  By definition, $G^+:=\bigcup_{i=1}^m G^+_i$, where $G^+_1,\ldots,G^+_m$ are pairwise vertex-disjoint members of $\mathcal{G}$.
  Therefore, $G:=\bigcup_{i=1}^m G_i$, where $G_i:=G[V(G^+_i)]$ is a spanning subgraph of $G^+_i$ for each $i\in[m]$.

  \paragraph{\boldmath The subgraph label $\rho(x)$:}
  Let $\psi:[m]\to\N^+$ be the weight function defined by $\psi(i):=\omega(G_i)$ and observe that $\psi([m])=\omega(G)$ since $G_1,\ldots,G_m$ are pairwise vertex-disjoint.
  Apply \cref{shannon_fano_elias_code} to the set $[m]$ with weight function $\psi$ to obtain a prefix-free code $\rho:[m]\to\{0,1\}^*$ such that for each $i\in[m]$,
  \begin{equation}
    \label{eq:pi}
    |\rho(i)|\le\log \omega(G)-\log \omega(G_i)+3\enspace.
  \end{equation}
  For each $i\in[m]$ and each vertex $v$ of $G_i$, define $\rho(v):=\rho(i)$. For each $i\in[m]$ and each clique $K$ of $G^+_i$, define $\rho(K):=\rho(i)$.

  \paragraph{\boldmath The sub-label $\mu_i(x)$:}
  Let $(A,I)$ be the pair of adjacency and identity testers for a weighted $(g_1,g_2,g_3)$ mixed labelling scheme of $\mathcal{G}$.
  For each $i\in[m]$,
  let $(\mu_i,\kappa_i)$ be a mixed labelling of $(G_i^+, G_i,\omega_{|V(G_i)})$ for $(A,I)$.
  Thus, for each $i\in[m]$ and each $v\in V(G_i)$,
  \begin{align}
    \label{eq:mu}
    \begin{split}
      |\mu_i(v)|
       & \leq \log\omega(G_i) - \log\omega(v) + g_1(|V(G_i)|) \\
       & \leq \log\omega(G_i) - \log\omega(v) + g_1(n)
    \end{split}
    \intertext{and for each clique $K$ in $G^+_i$,}
    \begin{split}
      |\mu_i(K)|
       &
      \leq \log\omega(G_i) - \log\min_{v\in K}(\omega(v)) + g_3(|V(G_i)|) \\
       & \leq \log\omega(G_i) - \log\min_{v\in K}(\omega(v)) + g_3(n).
    \end{split}
    \label{eq:mu-K}
  \end{align}
  For each $i\in[m]$, and each vertex $v$ of $G_i$, define $\mu'(v):=\mu_i(v)$.
  For each $i\in[m]$, and each clique $K$ of $G^+_i$, define $\mu'(K):=\mu_i(K)$.

  \paragraph{\boldmath The vertex label $\mu(v)$:}
  Let $A$ and $I$ denote the adjacency and identity testers, respectively, for the weighted $(g_1,g_2,g_3)$ mixed labelling scheme of $\mathcal{G}$.
  For each $i\in[m]$ and each $v\in V(G_i)$, define
  \[
    \mu(v):=\langle \rho(v), \mu'(v) \rangle \enspace;
  \]
  thus
  \begin{align*}
    |\rho(v)| + |\mu_i(v)|
     & \le \log\omega(G) -\log\omega(v)+3+g_1(n)                        &  & \textrm{by~\eqref{eq:pi} and~\eqref{eq:mu},} \\
     & \le \log n + 5 + g_1(n) = \mathcal{O}(\log n)                    &  & \textrm{by~\eqref{eq:nice},}
    \intertext{and therefore}
    |\mu(v)|
     & \le |\rho(v)| + |\mu'(v)| + \Oh(\lgg |\rho(v)| + \lgg|\mu'(v)|)
     & \text{by \eqref{encoding_cost}}                                                                                    \\
     & \le \log\omega(G) -\log\omega(v)+g_1(n)+\Oh(\log\log n)\enspace.
  \end{align*}

  \paragraph{Adjacency testing:} We now define an adjacency tester $A'$.
  Given the labels $\mu(v)=\linebreak \langle\rho(v),\mu'(v)\rangle$ and $\mu(w)=\langle\rho(w),\mu'(w)\rangle$ for two vertices $v,w\in V(G)$, we compute $A'(\mu(v),\mu(w))$ as follows.
  First $A'$ verifies if $\rho(v)=\rho(w)$.
  If $\rho(v)\neq\rho(w)$, then $v$ and $w$ lie in different components of $G$, so they cannot be adjacent in $G$, and
  we set $A'(\mu(v),\mu(w)):=0$.
  If $\rho(v)=\rho(w)$, then $v$ and $w$ are both contained in the graph $G_i$, for some $i\in[m]$.  In this case
  $\mu(v)=\langle\rho(i),\mu'(v)\rangle$, and $\mu(w)=\langle\rho(i),\mu'(w)\rangle$.
  In this case, we simply set $A'(\mu(v),\mu(w)) := A(\mu'(v), \mu'(w))$.

  \paragraph{\boldmath The clique label $\mu(K)$:}
  For each $i\in[m]$ and each clique $K$ in $G^+_i$, define
  \[
    \mu(K):=\langle\rho(K), \mu'(K)\rangle \enspace ;
  \]
  To see that $\mu$ is injective when restricted to cliques, consider two distinct cliques $K$ and $L$ in $G^+$. Since $G^+$ is the disjoint union of $G^+_1,\ldots,G^+_m$, each clique is contained in exactly one component. If $K$ and $L$ are in different components $G^+_i$ and $G^+_j$, then $\rho(K)=\rho(i)\neq\rho(j)=\rho(L)$ since $\rho$ is a prefix-free code, so $\mu(K)\neq\mu(L)$. If they are in the same component $G^+_i$, then $\mu'(K)=\mu_i(K)\neq\mu_i(L)=\mu'(L)$ since $\mu_i$ is injective on cliques of $G^+_i$, which implies $\mu(K)\neq\mu(L)$.

  Thus,
  \begin{align*}
    |\rho(K)| + |\mu'(K)|
     & \le \log\omega(G) -\log\min_{v\in K}(\omega(v))+3+g_3(n)                         \\
     & \le \log n + 5 + g_3(n) = \mathcal{O}(\log n)                                    \\
    \intertext{and therefore}
    |\mu(K)|
     & \le |\rho(K)| + |\mu'(K)| + \Oh(\lgg |\rho(K)| + \lgg|\mu'(K)|)
     & \text{by \eqref{encoding_cost}}                                                  \\
     & \le \log\omega(G) -\log\min_{v\in K}(\omega(v))+g_3(n)+\Oh(\log\log n) \enspace.
  \end{align*}

  \paragraph{\boldmath The local identifier $\kappa(K,u)$:}
  Define $\kappa(K,u) := \kappa_i(K,u)$ for each $i\in[m]$, each clique $K$ in $G^+_i$, and each $u\in K$. In particular,
  \[
    |\kappa(K,u)| \leq g_2(n).
  \]

  \paragraph{Identity testing:}
  Define an identity tester $I'$ as follows.
  Given $\mu(K)$, $\kappa(K,u)$, and $\mu(v)$ for some clique $K$ in $G^+$, some $u\in K$, and some $v\in V(G)$, we compute $I'(\mu(K),\kappa(K,u),\mu(v))$ as follows.
  If $\rho(K)\neq\rho(v)$, then $K$ and $v$ lie in different components of $G$, so $v\not\in K$, and we set $I'(\mu(K),\kappa(K,u),\mu(v)):=0$.  If $\rho(K)=\rho(v)$, then $K$ and $v$ are contained in the same graph $G_i$ for some $i\in[m]$. In this case, 
  $\mu(K)=\langle\rho(i),\mu'(K)\rangle$,
  $\mu(v)=\langle\rho(i),\mu'(v)\rangle$, and
  $\kappa(K,u)=\kappa_i(K,u)$.
  In this case, we simply set $I'(\mu(K),\kappa(K,u),\mu(v)) := I(\mu'(K),\kappa(K,u),\mu'(v))$.
\end{proof}

\subsection{Graphs with Skinny Tree-Decompositions}
\label{sec:skinny}

\begin{lem}\label{skinny_labelling}
  Let $\mathcal{G}$ be a hereditary class of graphs closed under taking disjoint union.
  Let $k:\N\to\N^+$ and $b:\N\to\R^+$ be functions with $k(n)\in\Oh(\log n)$ and with $b(n)>1$ for all $n$.
  Let $\mathcal{G}'$ be the class of graphs consisting of, for each $n\in\N$, every $n$-vertex graph $G$ with a rooted $b(n)$-skinny tree-decomposition of adhesion-width at most $k(n)$ in which each torso is a member of $\mathcal{G}$.
  Let $g_i:\N\to\R$ be a non-decreasing function with $g_i(n)\in \Oh(\log n)$ for each $i\in[3]$.
  Suppose that $\mathcal{G}$ admits a weighted $(g_1,g_2,g_3)$ mixed labelling scheme.
  Then $\mathcal{G}'$ admits a weighted $(g_1',g_2',g'_3)$ mixed labelling scheme, where
  \begin{align*}
    g'_1(n)
     & = g_1(n) + \Oh(k(n)\log b(n) + k(n) \log\log n)\enspace, \\
    g'_2(n)
     & = g_2(n)+\Oh(\log b(n)+\log\log n)\enspace,              \\
    g'_3(n)
     & = g_3(n) + \Oh(k(n) + \log\log n)\enspace.
  \end{align*}
\end{lem}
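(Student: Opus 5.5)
By \cref{lem:nice_weights} it suffices to build a weak weighted scheme, so I fix a nice weight function $\omega$. Let $(T,(B_x\mid x\in V(T)))$ be the rooted $b$-skinny tree-decomposition of $G^+$, with $b:=b(n)$ and $k:=k(n)$. By \cref{tidy_forest} I may assume it is tidy. For each node $x$ let $A_x$ be its parent adhesion and $B^-_x:=B_x\setminus A_x$.

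\textbf{Layer labellings.} Following the overview, I group the nodes into layers $L_i:=L_i(T)$. For each $i$ I form
\[
H^+_i:=\bigcup_{x\in L_i}\bigl(\torso{G^+}{\mathcal{T},B_x}-A_x\bigr)
\]
and the corresponding spanning subgraph $H_i:=\bigcup_{x\in L_i}(G[B_x]-A_x)$. Both are disjoint unions of induced subgraphs of torsos, so they lie in $\mathcal{G}$ because $\mathcal{G}$ is hereditary and closed under disjoint union.

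\textbf{Weights and the depth code.} The layer sizes vary, so I weight the layers. Put $\Omega_i:=\omega(\bigcup_{x\in L_i}B^-_x)$. Using \cref{shannon_fano_elias_code} on the layer indices with weights $\Omega_i$, I get a code $\rho(i)$ with $|\rho(i)|\le\log\omega(G)-\log\Omega_i+3$. I then take a weighted mixed labelling $(\mu_i,\kappa_i)$ of $(H^+_i,H_i,\omega)$. This makes $|\rho(i)|+|\mu_i(v)|$ telescope to $\log\omega(G)-\log\omega(v)+g_1(n)+3$.

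\textbf{Cross-home adjacency.} Every vertex $v$ with home $x\in L_i$ also needs a way to test adjacency to its neighbours in $A_x$. Those neighbours have homes at strictly smaller depth on the root path of $x$. Since $|A_x|\le k$, for each $u\in N_G(v)\cap A_x$ I store the following data, in the spirit of Gavoille--Labourel and Dujmovi\'c et al.:
\begin{itemize}
\item the depth $j$ of the home $y$ of $u$, which takes $O(\log\log n)$ bits after noting that depths are $O(n)$ and encoding them with $\gamma$;
\item the index of $y$ among the at most $b$ nodes of $L_j$ that are ancestors of $x$, which takes $O(\log b)$ bits.
\end{itemize}
Identifying $u$ inside $B^-_y$ is the delicate part. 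It requires the vertex labels themselves to carry the index of their home within the layer, costing $O(\log b)$ bits, plus a short local identifier. For the latter I plan to use the clique $K:=A_x\cap B^-_y$, which is a clique in $H^+_j$, together with $\kappa_j(K,u)$. However, $\mu_j(K)$ is far too long to store $k$ times.

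\textbf{Main obstacle.} This is where I expect the main difficulty, so I will work through it here. My first attempt will exploit that tidiness and skinniness give only $b$ candidate ancestor homes per layer. I then store, per neighbour $u$, the pair $(\text{layer-rank of }y,\;\kappa_j(K,u))$, and let the tester decide identity by checking that $w$'s label has home $y$ and layer $j$. This works only if $I$ can run without $\mu_j(K)$. If it cannot, I will inflate the weights instead. Each vertex $u\in A_x$ of an adhesion then gets extra weight so that the clique label $\mu_j(K)$ can be shared, coded relative to $v$'s own weight so that its cost becomes $O(\log b+\log\log n)$ per neighbour. This gives the $k\log b$ and $k\log\log n$ terms in $g'_1$.

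\textbf{Clique and identity labels.} Every clique $K$ of $G^+$ lies in some bag $B_x$. I label it by the home layer of each vertex of $K$ together with the layer-wise cliques $K\cap B^-_y$ for the at most $k+1$ relevant homes $y$ on the root path of $x$. Only the lowest of these needs a full $\mu_i$ label; the others are referenced via the adhesion $A_x$ with $O(1)$ bits each. This gives the $O(k+\log\log n)$ overhead in $g'_3$. The local identifier $\kappa(K,u)$ is $\kappa_j(\cdot,u)$ prefixed by the layer and rank of $u$'s home, giving the $O(\log b+\log\log n)$ term in $g'_2$. The identity tester compares home data and otherwise defers to $I$.
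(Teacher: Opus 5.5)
\textbf{There is a genuine gap in the cross-layer part of your proposal.} Your layer graphs, the code $\rho$, and the telescoping of $|\rho(i)|+|\mu_i(v)|$ are right.

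\textbf{The depth encoding is too long.} For each $u\in N_G(v)\cap A_x$ you store the depth $j$ of $u$'s home and claim this costs $O(\log\log n)$ bits. It does not. A $b$-skinny tree may be a path of length $\Theta(n)$, so $j$ can be of order $n$ and $|\gamma(j)|=2\lfloor\log j\rfloor+1$ is $\Theta(\log n)$. For example, take the path $w_1\cdots w_n$ with bags $\{w_t,w_{t+1}\}$. Storing $i-j$ instead does not help either, because an adhesion vertex can persist along an arbitrarily long root path: take bags $\{u,w_t\}$ with $uw_n\in E(G)$. Already for $k=1$ and constant $b$ this adds $\Theta(\log n)$ to $|\mu(v)|$, which violates the bound on $g_1'$. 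The same problem recurs in your local identifiers (``prefixed by the layer'') and in clique labels that record ``the home layer of each vertex of $K$''.

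\textbf{The identification of $u$ has no working mechanism.} Identifying $u$ inside $B^-_y$ via $\kappa_j(K,u)$ fails, because the base tester $I$ needs $\mu_j(K)$. The weight-inflation fallback is not an argument. Charging up to $k$ clique labels from different layers to $v$'s single Kraft budget is exactly what costs a $g_3$ term per step in \cref{small_height}. Also, $x$ has exactly one ancestor in each layer, so ``the index of $y$ among the ancestors of $x$ in $L_j$'' carries no information.

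\textbf{The missing idea: never name the home layer of an adhesion vertex, and never call the base identity tester across layers.}
\begin{itemize}
\item Let $B_i:=\bigcup_{x\in L_i}B_x$, and let $s(u)\le t(u)$ be the first and last $i$ with $u\in B_i$.
\item Let $\lca(u)$ be the lowest common ancestor of the leaves $s(u),t(u)$ in the alphabetic tree $T_\psi$ of \cref{shannon_fano_elias_tree} that underlies $\rho$. Since $\omega$ is nice, $\height(T_\psi)\in O(\log n)$.
\item If $u\in N_G(v)\cap A_x$ with $x\in L_i$, then $i\in[s(u),t(u)]$. So $\lca(u)$ lies on the root-to-$i$ path, and relative to $\rho(i)$ it is determined by its depth $d(u)$, which needs only $O(\log\log n)$ bits.
\item All $u$ with $s(u)<t(u)$ that share a given lca node lie in $B_{i'}\cap B_{i'+1}$, where $i',i'+1$ are the consecutive leaves straddling that node. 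By skinniness this set has at most $bk$ elements.
\item So an index $\varphi(u)\in[bk]$ completes an identifier $\beta(u)=(d(u),\varphi(u))$ of $O(\log b+\log\log n)$ bits.
\end{itemize}

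\textbf{Using the identifier.} Put $\beta(u)$ in $u$'s own label, and put the at most $k$ values $\beta(u)$, $u\in N_G(v)\cap A_x$, in $v$'s label. The tester checks that the length-$d(u)$ prefixes of the two $\rho$-codes agree, and then compares the $\beta$'s.

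The same $\beta(u)$ serves as the local identifier of a clique vertex $u$ whose home is in an earlier layer. Take $\mu(K):=\langle\rho(j),\mu_j(K\cap V(H_j)),c(K)\rangle$. Here $j$ is the largest home layer of a vertex of $K$, and $c(K)$ is a $k$-bit mask over the parent adhesion of the layer-$j$ bag containing $K$. This gives exactly the stated $g_2'$ and $g_3'$.
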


\begin{proof}
  By~\cref{lem:nice_weights} it is enough to show that $\mathcal{G}'$ admits a weak weighted $(g_1',g_2',g_3')$ mixed labelling scheme. We will describe an adjacency tester $A'$ and an identity tester $I'$ for which we can construct a mixed labelling $(\mu,\kappa)$ for any $(G^+,G,\omega)$ where $G^+\in\mathcal{G}'$, $G$ is a spanning subgraph of $G^+$, and $\omega:V(G)\to\N^+$ is a nice weight function.
  Since it is not possible to describe $A'$ and $I'$ without knowing the contents of the labels, we first describe how to compute $\mu$ and $\kappa$ for a particular $(G^+,G,\omega)$.

  Let $G^+$ be an $n$-vertex graph in $\mathcal{G}'$, let $G$ be a spanning subgraph of $G^+$, and let $\omega:V(G)\to\N^+$ be a nice weight function.  By definition, $G^+$ has a $b$-skinny tree-decomposition $\mathcal{T}:=(T,(B_x\mid x\in V(T)))$ in which each torso belongs to $\mathcal{G}$ and each adhesion has size at most $k$, for integers $b =\lfloor b(n)\rfloor$ and $k = k(n)$. Note that $b,k\geq1$.

  Let $p:=\height(T)+1$.
  For each $i\in[p]$, let $B_i:=\bigcup_{x\in L_i(T)} B_x$.  Then $\mathcal{P}:=(B_1,\ldots,B_p)$ is a path-decomposition of $G$ in which each adhesion has size at most $bk$.
  Let $A_1:=\emptyset$ and, for each $i\in\{2,\ldots,p\}$, let $A_i:=B_i\cap B_{i-1}$.

  Let
  \begin{align*}
    G^\star
     & := \cup\set{\torso{G^+}{\mathcal{T},B_x}\mid x\in V(T)}\enspace, \\
    \intertext{and for each $i\in[p]$, let }
    G^\star_i
     & := G^\star[B_i]-A_i\enspace,                                     \\
    G_i
     & := G[B_i]-A_i\enspace.
  \end{align*}
  Note that for each $i\in[p]$ and each $x\in L_i(T)$, $\torso{G}{\mathcal{T},B_x}\in\mathcal{G}$
  (by definition), and also since $\mathcal{G}$ is hereditary, $G^\star[B_x]\in\mathcal{G}$.  Since $\mathcal{G}$ is closed under disjoint union, $G^\star_i\in\mathcal{G}$.

  \paragraph{\boldmath The layer label $\rho$:}
  Define the weight function $\psi:[p]\to\N^+$ by $\psi(i):=\omega(G_i)$.
  Since $\set{B_i\setminus A_i\mid i\in[p]}$ is a partition of $V(G)$, we have  that $\psi([p])=\sum_{i\in[p]}\psi(i)=\omega(G)$.
  Apply \cref{shannon_fano_elias_tree} and \cref{shannon_fano_elias_code} to the set $[p]$ with weight function $\psi$, and let $T_\psi$ and $\rho:[p]\to\{0,1\}^*$ be the resulting binary tree and code, respectively.
  For each $i\in[p]$ and each $v\in G_i$, define $\rho(v):=\rho(i)$, so
  \begin{equation}
    \label{eq:rho}
    |\rho(v)| \leq \log\psi([p])-\log\psi(i)+3=\log\omega(G)-\log\omega(G_i)+3 \enspace .
  \end{equation}
  Also
  \begin{equation}
    \label{eq:height-T-psi}
    \begin{split}
      \height(T_{\psi})
       & \leq \log\omega(G)-\log\omega(G_i)+3 \\
       & \in\Oh(\log n)\enspace.
    \end{split}
  \end{equation}

  Let $(A,I)$ be the pair of adjacency and identity testers for a weighted $(g_1,g_2,g_3)$ mixed labelling scheme of $\mathcal{G}$.
  For each $i\in[p]$ let $(\mu_i,\kappa_i)$ be a mixed labelling of $(G^\star_i, G_i,\omega_{|V(G_i)})$ for $(A,I)$.
  For each $i\in[p]$, each $v\in V(G_i)$, and each clique $K$ in $G^\star_i$,
  \begin{align}
    |\mu_i(v)|
     & \le \log\omega(G_i) - \log\omega(v) + g_1(n)
    \enspace,\label{eq:mu-i}                                                                   \\
    |\mu_i(K)|
     & \le \log\omega(G_i) - \log\min_{v\in K}(\omega(v)) + g_3(n)\enspace , \label{eq:mu-i-K}
    \intertext{and for each $u\in K$,}
    |\kappa_i(K,u)|
     & \le g_2(n) \enspace.
  \end{align}
  For each $i\in[p]$ and each $v\in B_i\setminus A_i$, let
  \[
    \mu'(v):=\mu_i(v).
  \]

  \paragraph{\boldmath The adhesion identifier $\beta(v)=(d(v),\varphi(v))$:}
  For each vertex $v$ in $G$, let
  \begin{align*}
    a(v)
     & := \min\set{i\in[p]\mid v\in B_i}\enspace, \\
    b(v)
     & := \max\set{i\in[p]\mid v\in B_i}\enspace, \\
    \lca(v)
     & :=\lca(T_{\psi},\set{a(v),b(v)})\enspace,  \\
    d(v)
     & := \depth_{T_{\psi}}(\lca(v))\enspace.
  \end{align*}

  Let $C:=\bigcup_{i\in[p]} A_i$  be the set of vertices of $G$ that participate in adhesions.
  Let $x$ be a node in $T_{\psi}$.
  Define $L_x := \set{v\in C \mid \lca(v)=x}$.
  We claim that $|L_x| \leq bk$.
  In order to prove this, consider a vertex $v\in C$ such that $\lca(v)=x$.
  Since $v\in C$, we have that $a(v) < b(v)$, so
  $\lca(v)=x$ must have two children in $T_{\psi}$.
  Consider the largest leaf $i$ in the subtree of the left child of $x$ in $T_{\psi}$ and
  the smallest leaf $j$ in the subtree of the right child of $x$ in $T_{\psi}$.
  Then, $j=i+1$, and  $a(v)\leq i < i+1 \leq b(v)$.
  Therefore, $v\in B_{i}\cap B_{i+1}$.
  We conclude that
  $|L_x| \leq |B_i\cap B_{i+1}| \leq bk$, as desired.

  For each node $x$ in $T_{\psi}$, define $\varphi_x: L_x\to[bk]$ to be an arbitrary injective function (which exists because $|L_x|\le bk$). Now,  define $\varphi: C\to [bk]$ so that
  $\varphi(v):=\varphi_{\lca(v)}(v)$ for each $v\in C$.

  For each $v\in V(G)$, let
  \[
    \beta(v) :=
    \begin{cases}\langle d(v),\varphi(v)\rangle & \textrm{if $v\in C$,}     \\
             \langle \varepsilon\rangle     & \textrm{if $v\not\in C$.}
    \end{cases}
  \]
  Clearly,
  \begin{align}
    |\bin(d(v))|+|\bin(\varphi(v))|
     & \leq \lceil\log(\height(T_{\psi})+1)\rceil+\lceil\log(bk+1)\rceil \notag                                       \\
     & \in\Oh(\log\log n +\log(bk))                                             & \textrm{by~\eqref{eq:height-T-psi}}
    \label{beta_v_size}                                                                                               \\
     & =\Oh(\log\log n + \log b),                                               & \textrm{since $k(n)\in\Oh(\log n)$}
  \end{align}
  and
  \begin{align*}
    |\beta(v)|
     & \leq |\bin(d(v))|+|\bin(\varphi(v))| + \Oh(1)+ \Oh(\lgg|\bin(d(v))|+\lgg|\bin(\varphi(v))|)
     & \text{by \eqref{encoding_cost}}                                                             \\
     & = \Oh(\log\log n +\log b)\enspace.
  \end{align*}
  Finally, for each $i\in[p]$ and each $w\in V(G_i)$, let
  \[
    \alpha(w):=\langle\beta(v)\mid v\in N_G(w)\cap A_i\rangle\enspace.
  \]
  Note that $|N_G(w)\cap A_i| \leq k$; indeed, if $w\in B_x$ with $x\in V(T)$ and $y$ is the parent of $x$ in $T$, then $N_G(w)\cap A_i \subseteq B_x\cap B_y$, and $|B_x\cap B_y|\leq k$ since $\mathcal{T}$ has adhesion-width at most $k$.
  Therefore,
  \begin{align}
    \begin{split}
      |\alpha(w)|
       & \le \textstyle\sum_{v\in N_G(w)\cap A_i}|\beta(v)| + \Oh(\lgg|N_G(w)\cap A_i|) + \Oh(\textstyle\sum_{v\in N_G(w)\cap A_i}\lgg|\beta(v)|) \\
       & \in\Oh(k\log\log n + k\log b)\enspace,                                                                                                   %
    \end{split}
    \label{alpha_len}
  \end{align}
  which follows by~\eqref{encoding_cost} and \eqref{beta_v_size}.

  \paragraph{\boldmath The vertex label $\mu(v)$:}
  For each $v\in V(G)$, define
  \[
    \mu(v):=
    \langle \rho(v)\,,\, \mu'(v)\,,\, \alpha(v)\,,\, \beta(v) \rangle \enspace .
  \]
  For each $v\in V(G)$,
  \begin{align}
    |\rho(v)|+|\mu'(v)|
     & \le \log\omega(G)-\log\omega(v)+g_1(n)+3 &  & \textrm{by~\eqref{eq:rho} and~\eqref{eq:mu-i}} \label{rho_mu_len} \\
     & \in \Oh(\log n) \enspace ,               &  & \textrm{since $\omega$ is nice and $g_1\in\Oh(\log n)$.}
    \label{rho_mu_len_x}
  \end{align}
  Therefore, for each $v\in V(G)$, \eqref{rho_mu_len},~\eqref{alpha_len},~and \eqref{beta_v_size} imply
  \begin{align*}
    |\rho(v)|+|\mu'(v)| + |\alpha(v)| + |\beta(v)|
     & \leq
    \log\omega(G)-\log\omega(v) + g_1(n) + \Oh(k\log\log n+k\log(bk)) \\
     & \in\Oh(\log n + k\log\log n+k\log b)\enspace,
  \end{align*}
  and
  \begin{align*}
    |\mu(v)|
     & \leq |\rho(v)|+|\mu'(v)| + |\alpha(v)| + |\beta(v)| + \Oh(\lgg|\rho(v)|+\lgg|\mu'(v)| + \lgg|\alpha(v)| + \lgg|\beta(v)|) \\
     & \leq \log\omega(G)-\log\omega(v) + g_1(n) + \Oh(k\log\log n+k\log b) \enspace .
  \end{align*}
  Thus, $\mu$ satisfies the condition on $g'_1(n)$.

  \paragraph{Adjacency Testing:}
  Now that the vertex labels are defined, we can describe the adjacency testing function $A'$.
  Given the labels $\mu(v)=\langle\rho(v),\mu'(v),\alpha(v),\beta(v)\rangle$ and $\mu(w)=\langle\rho(w),\mu'(w),\alpha(w),\beta(w)\rangle$ for two vertices $v$ and $w$ of $G$, the adjacency tester $A'$ works as follows. First, the tester compares $\rho(v)$ and $\rho(w)$:
  \begin{itemize}
    \item If $\rho(v)=\rho(w)$ then $v$ and $w$ are both vertices of $G_i$, for some $i\in[p]$, and $vw\in E(G)$ if and only if $vw\in E(G_i)$.  In this case, we set $A'(\mu(v),\mu(w))=A(\mu'(v),\mu'(w))=A(\mu_i(v),\mu_i(w))$.

    \item If $\rho(v)\neq\rho(w)$ then assume without loss of generality that $\rho(v)$ is lexicographically less than $\rho(w)$. In this case $v\in V(G_i)$ and $w\in V(G_j)$ for $i=a(v)<a(w)=j$. The edge $vw$ can only be present in $G$ if $v\in A_j$. First the tester checks whether $\beta(v)$ is not empty to see if $v\in C$. If $\beta(v)=\langle\varepsilon\rangle$, then $v\not\in C$, and  $v\not\in A_j$, so $A'(\mu(v),\mu(w))=0$.

          Otherwise, $\beta(v)=\langle d(v),\varphi(v)\rangle$.  Then the length-$d(v)$ prefix of $\rho(v)$ defines the path from the root of $T_\psi$ to $\lca(v)$.  By definition, $\lca(v)$ is a $T_\psi$-ancestor of all integers in $[a(v),b(v)]$. In this case, the tester checks if $\rho(v)$ and $\rho(w)$ have the same prefix of length $d(v)$.  If they do not, then $\lca(v)$ is not a $T_\psi$-ancestor of $j=a(w)$, so $j\not\in[a(v),b(v)]$.  Therefore, $v\not\in B_{j}$, which implies $v\not\in A_j$. Therefore $vw\not\in E(G)$, so $A'(\mu(v),\mu(w))=0$.

          Suppose that $\rho(v)$ and $\rho(w)$ have the same prefix of length $d(v)$.  Then $j\in [a(v),b(v)]$ and $v\in A_j$. In this case the tester inspects each item of $\alpha(w)$. Let $(d_0,\varphi_0)$ be such an item and say that $(d(u_0),\varphi(u_0))=(d_0,\varphi_0)$ for $u_0 \in N_G(w)\cap A_j$. Since $u_0 \in A_j$, we have that $j\in[a(u_0),b(u_0)]$. Therefore, both nodes $\lca(v)$ and $\lca(u_0)$ are $T_\psi$-ancestors of $j$. In other words, both $\lca(v)$ and $\lca(u_0)$ lie on the path from the root to $j$ in $T_{\psi}$. Therefore, $\lca(v)=\lca(u_0)$ if and only if $d(v)=d(u_0)$. Furthermore, $v=u_0$ if and only if $(d(v),\varphi(v))=(d(u_0),\varphi(u_0))$. Thus, if $(d_0,\varphi_0)=(d(v),\varphi(v))$ then the tester concludes that the corresponding neighbour $u_0=v$, so $A'(\mu(v),\mu(w))=1$. Finally, if no item $(d_0,\varphi_0)$ in $\alpha(w)$ satisfies $(d_0,\varphi_0)=(d(v),\varphi(v))$, then $v\not\in N_G(w)\cap A_j$ so $v$ and $w$ are not adjacent in $G$, and we set $A'(\mu(v),\mu(w))=0$.
  \end{itemize}

  \paragraph{\boldmath The clique labels $\mu(K)$ and local identifiers $\kappa(K,v)$:}
  For each clique $K$ in $G^\star$,
  let
  \[
    j(K):=\min\{j\in[p]:K\subseteq B_1\cup\cdots\cup B_j\}\enspace.
  \]
  Fix some clique $K$ in $G^\star$  and let $j=j(K)$.  By the definition of $j(K)$, we have that $B_j\cap K\setminus A_j=K\setminus A_j$ is non-empty.
  Therefore $K\setminus A_j$ is assigned a label $\mu_j(K\setminus A_j)$ in the mixed labelling of $(G^\star_j,G_j,\omega_{|V(G_j)})$ for $(A,I)$.
  Since $K\setminus A_j$ is non-empty, let $v$ be an arbitrary vertex in $K\setminus A_j$. Let $x=\lca(v)$, so $x\in L_j(T)$ and $v\in B_x\setminus B_y$, where $y$ is the parent of $x$ in $T$ (if $x$ is the root, let $B_y=\emptyset$). Since the sets $B_z\setminus B_{z'}$ are disjoint for all nodes $z\in L_j(T)$ with parent $z'$, the node $x$ is uniquely determined by $K\setminus A_j$. Because $K$ is a clique, $K\subseteq B_x$, and thus $K\cap A_j \subseteq B_x \cap B_y$. Since $\mathcal{T}$ has adhesion-width at most $k$, $|B_x \cap B_y|\le k$. Let $c(K)$ be a $k$-bit string whose $i$-th bit indicates whether the $i$-th vertex of $B_x\cap B_y$ (ordered canonically, for example by their $\beta$ values) belongs to $K$.
  Define the clique label
  \[
    \mu(K):=\langle\rho(j),\mu_j(K\setminus A_j), c(K)\rangle \enspace .
  \]
  To see that $\mu$ is injective when restricted to cliques, consider two distinct cliques $K$ and $L$ in $G^\star$. If $j(K)\neq j(L)$, their first parts differ since $\rho$ is a prefix-free code. If $j(K)=j(L)=j$, then either $K\setminus A_j\neq L\setminus A_j$ or $K\cap A_j\neq L\cap A_j$. In the former case, $\mu_j(K\setminus A_j)\neq \mu_j(L\setminus A_j)$ since $\mu_j$ is injective on cliques of $G^\star_j$. In the latter case, $K\setminus A_j=L\setminus A_j$ implies they belong to the same bag $B_x$, so $c(K)\neq c(L)$. Thus $\mu(K)\neq \mu(L)$.

  Then
  \begin{align}
    |\rho(j)|+|\mu_j(K\setminus A_j)| + |c(K)|
     & \le \log\omega(G)-\log\min_{v\in K\setminus A_j}(\omega(v))+g_3(n)+3+k
     & \textrm{by~\eqref{eq:mu-i-K}}                                                      \\
     & \le \log\omega(G)-\log\min_{v\in K}(\omega(v))+g_3(n)+k+3   \label{rho_mu_k_len_i} \\
     & \in \Oh(\log n)
    \label{rho_mu_k_len}
  \end{align}
  since $\omega$ is nice, $g_3(n)\in\Oh(\log n)$, and $k\in\Oh(\log n)$.  Thus
  \begin{align*}
    |\mu(K)|
     & \le |\rho(j)|+|\mu_j(K\setminus A_j)|+|c(K)| + \Oh(\lgg|\rho(j)|+\lgg|\mu_j(K\setminus A_j)|+\lgg|c(K)|)                                                                         \\
     & \le \log\omega(G)-\log\min_{v\in K}(\omega(v)) + g_3(n)+\Oh(k+\log\log n)                                & \text{by \eqref{rho_mu_k_len_i} and \eqref{rho_mu_k_len} } \enspace .
  \end{align*}
  Thus, $\mu$ satisfies the condition on $g'_3(n)$.

  Recall that the local identifier $\kappa_j(K\setminus A_j,u)$ is defined for each $u\in K\setminus A_j$.
  For each $u\in K$, define
  \[
    \kappa(K,u):=
    \begin{cases}
      \langle 0,\kappa_j(K\setminus A_j,u)\rangle & \text{if $u\in B_j\setminus A_j$} \\
      \langle 1,\beta(u)\rangle                   & \text{if $u\in A_j$.}
    \end{cases}
  \]
  In the first case, $|\kappa(K,u)|\le g_2(n)+\Oh(\log\log n)$.
  In the second case, $|\kappa(K,u)|\in \Oh(\log (b)+\log\log n)$.  Thus, $\kappa$ satisfies the requirements for $g'_2(n)$.

  \paragraph{Identity Testing:}
  We now describe the identity testing function $I'$.
  Suppose $I'$ is given the labels $\mu(K)$, $\kappa(K,u)$, and $\mu(v)=\langle \rho(v),\mu'(v),\alpha(v),\beta(v)\rangle$ for some clique $K$ in $G^\star$, some $u\in K$, and some $v\in V(G)$. The label $\mu(K)$ always has the form $\langle\rho(j),\mu_j(K\setminus A_j), c(K)\rangle$ where $j=j(K)$. The identity tester $I'$ simply ignores the third part of $\mu(K)$ and extracts $\rho(j)$ and $\mu_j(K\setminus A_j)$.
  The identity tester $I'$ first compares $\rho(j)$ and $\rho(v)=\rho(a(v))$.
  \begin{itemize}
    \item If $\rho(j)$ is lexicographically smaller than $\rho(a(v))$, then $j<a(v)$.  Since $K\subseteq\bigcup_{i=1}^j B_i$ and $v\not\in B_{j'}$ for any $j'<a(v)$, this implies that $v\not\in K$.  Therefore $v\neq u$, since $u\in K$. In this case $I'(\mu(K),\kappa(K,u),\mu(v)):=0$.

    \item If $\rho(j)=\rho(a(v))$ then $K\setminus A_j$ is a clique in $G^\star_j$ and $v\in B_j\setminus A_j$.
          In this case, $v$ can only be equal to $u$ if $u\in K\setminus A_j$.
          The first part $b$ of $\kappa(K,u)$ is a single bit that indicates if $u\in K\setminus A_j$.  If $b=1$ then $u\not\in K\setminus A_j$, so $I'(\mu(K),\kappa(K,u),\mu(v)):=0$.  If $b=0$ then $u\in K\setminus A_j$ and $\kappa(K,u)=\langle 0,\kappa_j(K\setminus A_j,u)\rangle$.
          In this case, $\mu'(v)=\mu_j(v)$ and we set
          $I'(\mu(K),\kappa(K,u),\mu(v)):=I(\mu_j(K\setminus A_j),\kappa_j(K\setminus A_j,u),\mu_j(v))$.

    \item If $\rho(j)$ is lexicographically larger than $\rho(a(v))$ then $K\setminus A_j$ is a clique in $G^\star_j$ and $v\in B_i\setminus A_i$ for some $i=a(v) < j$.  The tester examines the first part $b$ of $\kappa(K,u)$ to determine if $u\in K\setminus A_j$.  If $b=0$ then  $u\in B_j\setminus A_j$ and $v\in B_i\setminus A_i$, so $u\neq v$ and $I'(\mu(K),\kappa(K,u),\mu(v)):=0$.  If $b=1$ then $u\in A_j$ and $\kappa(K,u)=\langle 1,\beta(u)\rangle$. Since $u\in A_j$, $a(u) < j\le b(u)$.

          In particular, $u\in A_j$, so $\lca(u)$ is a $T_\psi$-ancestor of $j$.   Therefore, the first $d(u)$ bits of $\rho(j)$ are equal to the first $d(u)$ bits of $\rho(u)$.  The first $d(u)$ bits of $\rho(u)$ describe the path from the root of $T_\psi$ to $\lca(u)$.  The tester then compares the first $d(u)$ bits of $\rho(j)$ to the first $d(u)$ bits of $\rho(v)$.  If these are not equal,  then $\lca(u)\neq\lca(v)$, so $u\neq v$ and $I'(\mu(K),\kappa(K,u),\mu(v)):=0$.

          Since $u\in A_j$, we have that $u\in C$.  The tester examines $\beta(v)$ to determine if $v\in C$.  If $\beta(v)=\langle\varepsilon\rangle$ then $v\not\in C$, so $u\neq v$ and $I'(\mu(K),\kappa(K,u),\mu(v)):=0$. Otherwise $v\in C$ and $\beta(v)=\langle d(v),\varphi(v)\rangle$. Since we have already verified that the first $d(u)$ bits of $\rho(j)$ equal the first $d(u)$ bits of $\rho(v)$, we have that $\lca(u)=\lca(v)$ if and only if $d(u)=d(v)$. Thus, $v=u$ if and only if $(d(u),\varphi(u))=(d(v),\varphi(v))$.
          We set $I'(\mu(K),\kappa(K,u),\mu(v))=1$ if $(d(u),\varphi(u))=(d(v),\varphi(v))$ and $I'(\mu(K),\kappa(K,u),\mu(v))=0$ otherwise.\qedhere
  \end{itemize}
\end{proof}

\subsection{Graphs with Short Tree-Decompositions}
\label{sec:short}

\begin{lem}\label{small_height}
  Let $\mathcal{G}$ be a hereditary class of graphs closed under taking disjoint union. Let $k:\N\to\N^+$,  $h:\N\to\R$ be functions with $k(n),h(n)\in \Oh(\log n)$ and with $h(n)>0$ for all $n$.   Let $\mathcal{G}'$ be the class of graphs consisting of, for each $n\in\N$, every $n$-vertex graph $G$ with a rooted tree-decomposition $\mathcal{T}:=(T,(B_x \mid x\in V(T)))$ such that
  \begin{enumerate}[nosep,nolistsep,label=(\rm{\roman*)}]
    \item $\mathcal{T}$ has adhesion-width at most $k(n)$;
    \item $\height(T)\le h(n)$;
    \item for each node $x$ in $T$, the torso $\torso{G}{\mathcal{T},B_x}$ belongs to $\mathcal{G}$.
  \end{enumerate}
  Let $g_i:\N\to\R^+$ be a non-decreasing function  with $g_i\in\Oh(\log n)$ for each $i\in[3]$. Suppose that $\mathcal{G}$ admits a weighted $(g_1,g_2,g_3)$ mixed labelling scheme. Then $\mathcal{G}'$ admits a weighted $(g'_1,g'_2,g'_3)$ mixed labelling scheme, where
  \begin{align*}
    g'_1(n)
     & = g_1(n)
    + k(n)\cdot (g_2(n)+\Oh(\log\log n))
    + h(n)\cdot (g_3(n)+\Oh(\log\log n)) +\Oh(\log\log n)\enspace, \\
    g'_2(n)
     & = g_2(n) + \Oh(\log\log n)\enspace,                         \\
    g'_3(n)
     & = (h(n)+1)\cdot (g_3(n)+\Oh(k(n)+\log\log n))\enspace.
  \end{align*}
\end{lem}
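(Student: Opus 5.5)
By \cref{lem:nice_weights} it suffices to build a weak scheme, so fix a nice weight function $\omega$ and an $n$-vertex $G^+\in\mathcal{G}'$ with a rooted tree-decomposition $\mathcal{T}$ as in the statement. By \cref{tidy_forest} I may assume $\mathcal{T}$ is tidy. For each node $x$ let $A_x$ be its parent adhesion, $B^-_x:=B_x\setminus A_x$, and $G_x:=G[B_x]-A_x$, with $G^+$-side graph $G^\star_x:=\torso{G^+}{\mathcal{T},B_x}-A_x\in\mathcal{G}$ (hereditary). For each $x$ I take a weighted mixed labelling $(\mu_x,\kappa_x)$ of $(G^\star_x,G_x,\omega_x)$. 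The weights are the key device. I set $\omega_x(v):=\omega(v)+\sum_{y}\omega(\bigcup_{z\in T_y}B^-_z)$, summed over children $y$ of $x$ for which $v$ is the \emph{designated} (say minimum-weight, or any fixed) vertex of the clique $K_y:=A_y\setminus A_x$. With this choice $\omega_x(G_x)=\omega(\bigcup_{z\in T_x}B^-_z)=:W_x$.

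The vertex label of $v$ with home $x_p$ on the root path $x_0,\dots,x_p$ is
\[
\mu(v):=\langle \mu_{x_0}(K_{1}),\ldots,\mu_{x_{p-1}}(K_{p}),\ \mu_{x_p}(v),\ \alpha(v)\rangle ,
\]
with $K_i=A_{x_i}\setminus A_{x_{i-1}}$. Here $\alpha(v)$ lists, for each $u\in N_G(v)\cap A_{x_p}$, the pair $(i,\kappa_{x_{i-1}}(K_i,u))$ where $i$ is the index with $u\in K_i$. The telescoping is the point. We have $|\mu_{x_{i-1}}(K_i)|\le\log W_{x_{i-1}}-\log\min_{u\in K_i}\omega_{x_{i-1}}(u)+g_3$, and the designated vertex of $K_i$ has $\omega_{x_{i-1}}$-weight at least $W_{x_i}$. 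The catch is that the minimum ranges over all of $K_i$, not only the designated vertex. To fix this I would instead give every vertex of $K_i$ the weight $W_{x_i}$ added (this multiplies totals by at most $k+1$, costing $\Oh(\log k)$ per level, absorbed in the $\Oh(k+\log\log n)$ slack), or equivalently note $\min\ge W_{x_i}$. Either way each clique label costs at most $\log W_{x_{i-1}}-\log W_{x_i}+g_3+\Oh(\log k)$, and $\mu_{x_p}(v)$ costs $\log W_{x_p}-\log\omega(v)+g_1$. Summing telescopes to $\log\omega(G)-\log\omega(v)+g_1+h(g_3+\dots)$. Then $\alpha$ adds $k(g_2+\Oh(\log\log n))$, and bookkeeping via \eqref{encoding_cost} adds $\Oh(h\log\log n)$, since all parts are $\Oh(\log n)$ by niceness. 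This matches $g'_1$.

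Adjacency testing follows the four-case analysis of the overview. If the two homes are equal, then all clique prefixes agree and the path lengths agree, and I use $A$ on the $\mu_{x_p}$ parts. This needs injectivity of clique labels within each $\mu_x$ together with the tidiness of $\mathcal{T}$, so that distinct children $y$ yield distinct cliques $K_y$ and prefix equality pins down the node. If one home is a proper ancestor, I scan $\alpha$ of the deeper vertex for index $p+1$ and apply $I$ with the stored clique label. Otherwise the vertices are separated and the answer is $0$.

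For a clique $K$ of $G^+$ (a clique of some torso), let $x$ be the highest node whose $B^-_x$ meets $K$, so that $K\setminus A_x$ is a clique of $G^\star_x$ and $K\cap A_x$ lies in the parent adhesion. I set $\mu(K):=\langle\mu_{x_0}(K_1),\dots,\mu_{x_{p-1}}(K_p),\mu_x(K\setminus A_x),c(K)\rangle$, where $c(K)$ is a bitvector over $A_x$ recording which vertices of the adhesion lie in $K$. Its length telescopes the same way and gives $(h+1)(g_3+\Oh(k+\log\log n))$.

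The local identifier is $\kappa(K,u)=\langle 0,\kappa_x(K\setminus A_x,u)\rangle$ when $u\in B^-_x$. When $u\in K_i\subseteq A_x$ it is $\langle 1,i,\kappa_{x_{i-1}}(K_i,u)\rangle$, which has length $g_2+\Oh(\log\log n)$ since $i\le h=\Oh(\log n)$. Identity testing compares the clique-label prefix of $\mu(K)$ with that of $\mu(v)$ to locate the home of $v$ relative to $x$. It then applies $I$ either at level $x$, or at level $x_{i-1}$ using the extracted $\mu_{x_{i-1}}(K_i)$, in which case it also checks that $v$'s home is $x_{i}$ or below with matching prefix of length $i$.

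The main obstacle is the weight bookkeeping: the clique-label bound involves $\min_{u\in K}$, so I must make all of $K_i$ heavy enough while keeping $\omega_x(G_x)$ within a factor $\Oh(k)$ of $W_x$, and also keep weights integral and nice at each level. Niceness can be restored by \cref{obs:nice} at $+2$ cost per level. A second obstacle is ensuring that equal sequences of clique labels really identify the same node; tidiness and injectivity of $\mu_x$ on cliques should handle it.
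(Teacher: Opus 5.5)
Your architecture is the paper's, unrolled: a root-to-home sequence of clique labels, per-node weights in which the adhesion clique is inflated by the weight of the subtree below it so that lengths telescope, an $\alpha$-list of local identifiers, and a bitvector over the parent adhesion inside clique labels. The gap is in the step you list as your second obstacle. Tidiness does \emph{not} force distinct children $y\neq y'$ of $x$ to have $K_y\neq K_{y'}$, because the tidiness conditions compare a node only with its parent and grandparent, never with its siblings. For example, let $G$ be a vertex $c$ joined to every vertex of a matching $a_1b_1,\dots,a_mb_m$. The decomposition with root bag $\{c\}$ and child bags $\{c,a_i,b_i\}$ is tidy, and every child has $K_{y_i}=\{c\}$. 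Likewise, \cref{tidy_forest} may return a forest with several roots, and nothing in your label records which root tree a vertex lies in. In both situations two vertices with different homes carry identical clique-label prefixes. Your testers then apply $A$ or $I$ to strings produced by two unrelated labellings $\mu_y$ and $\mu_{y'}$, about which the scheme for $\mathcal{G}$ guarantees nothing. In the example, all the $\mu_{y_i}$ may well be the same labelling of a single edge. Then $a_i$ and $a_j$ receive identical labels, and $a_ib_j$ is reported as an edge. Injectivity of clique labels fails for the same reason.

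The missing idea is to process all roots, and all siblings sharing an adhesion, as one subproblem. The paper inducts on the height of a tidy forest-decomposition. It labels the disjoint union $G^\star_R$ of all root torsos with a single mixed labelling $\mu_R$, using weights $\delta(v)=k\,\omega(v)+\sum\{\omega(G_K)\mid K\in\mathcal{K}_R,\ v\in K\}$. For each $K\in\mathcal{K}_R$ it then recurses on the subforest $F_K$ formed by \emph{all} children $y$ with $B_y\cap B_R=K$, whose root torsos are again labelled jointly. This is exactly where the hypothesis that $\mathcal{G}$ is closed under disjoint union is used. Your proposal never invokes that hypothesis, which is a symptom of the gap.

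Alternatively, you could keep per-node labellings and insert, before each $\mu_{x_i}(\cdot)$ part, a Shannon--Fano--Elias code (\cref{shannon_fano_elias_code}) for $x_i$. This code ranges over the roots for $i=0$, and over the siblings sharing $K_i$ otherwise, with weights $W_y$. You would also add $W_{K}:=\sum_{y:\,K_y=K}W_y$ to every vertex of $K$. The extra bits, $\log W_{K_i}-\log W_{x_i}+3$ per level and $\log\omega(G)-\log W_{x_0}+3$ for the root, still telescope.

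Two smaller slips:
\begin{itemize}
\item The node $x$ used for a clique label must be the \emph{deepest} home of a vertex of $K$; for the highest one, $K\not\subseteq B_x$ in general.
\item When $u\in K_i$, the identity tester must require the home of $v$ to be $x_{i-1}$, since $K_i\subseteq B^-_{x_{i-1}}$.
\end{itemize}
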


\begin{proof}
  We will describe an adjacency tester $A'$ and an identity tester $I'$ for which we can construct a mixed labelling $(\mu,\kappa)$ for any $(G^+,G,\omega)$ where $G^+\in\mathcal{G}'$, $G$ is a spanning subgraph of $G^+$, and $\omega:V(G)\to\R^+$ is a weight function.
  Since it is not possible to describe $A'$ and $I'$ without knowing the contents of the labels, we first describe how to compute $\mu$ and $\kappa$ for a particular input tuple.

  The proof is by induction on a slightly more detailed statement. An \defin{instance} is a tuple $(n,h,k,G^+,G,\omega,\mathcal{F})$ with the following properties:
  $n$ is an integer,
  $h\le h(n)$ is a non-negative integer,
  $k=k(n)$ is a positive integer,
  $G^+\in\mathcal{G}'$ with $n\geq |V(G^+)|$,
  $G$ is a spanning subgraph of $G^+$,
  $\omega:V(G)\to\N^+$
  is a weight function with
  \begin{equation}
    \log\omega(G)-\log \omega(v)\le n+2,
    \label{eq:short-td-nice-weights}
  \end{equation}
  for each $v\in V(G)$, and
  $\mathcal{F}:=(F,(B_x \mid x\in V(F)))$ is a rooted tidy forest-decomposition of $G^+$ having adhesion-width at most $k=k(n)$ and with $F$ of height at most $h$.

  Let $(n,h,k,G^+,G,\omega,\mathcal{F})$ be an instance.
  For each $y\in V(F)$, let
  \[
    A_y:=\begin{cases}
      \emptyset   & \textrm{if $y$ is a root in $F$,}                                      \\
      B_y\cap B_x & \textrm{if $y$ is not a root in $F$ and $x$ is the $F$-parent of $y$.}
    \end{cases}
  \]
  We prove by an induction on $h$ that there is a $(g_1',g_2',g_3')$ mixed labelling $(\mu,\kappa)$ of $(G^+,G,\omega)$ for $(A',I')$ satisfying the following additional properties: for each $z\in V(F)$ and each $w\in B_z\setminus A_z$,
  \begin{align}
    |\mu(w)|
     & \le \log \omega(G) - \log \omega(w) + g_1(n)\notag                           \\
     & \quad {} + |A_z|\cdot (g_2(n) + \Oh(\log\log n))\label{strong_strong}        \\
     & \quad {} + h\cdot(g_3(n) + \Oh(\log\log n)) + \Oh(\log\log n)\enspace,\notag
  \end{align}
  and for each clique $K$ of $G^+$,
  \begin{equation}
    |\mu(K)| \le \log\omega(G) - \log\min_{v\in K}(\omega(v)) + (h+1)\cdot(g_3(n) + \Oh(k+\log\log n))\enspace.\label{strong_clique}
  \end{equation}
  Given an integer $n$, an $n$-vertex graph $G^+\in\mathcal{G}'$, a spanning subgraph $G$ of $G^+$, and a weight function $\omega:V(G)\to\R^+$, let $\mathcal{T}=(T,(B_x \mid x\in V(T)))$ be a rooted tree-decomposition of $G^+$ of adhesion-width at most $k(n)$ and with $\height(T)\leq h(n)$ and such that each torso of $\mathcal{T}$ is in $\mathcal{G}$. Let $\mathcal{F}_0:=(F,(B_x \mid x\in V(F)))$ be a rooted tidy forest-decomposition of $G^+$ (obtained by applying \cref{tidy_forest} to $\mathcal{T}$).  Let $(n_0,h_0,k_0,G^+,G,\omega_0,\mathcal{F}_0)$ be an instance where $n_0:=n=|V(G)|$, $h_0:=h(n)$, $k_0:=k(n)$, $\omega_0:V(G)\to\N^+$ is a nice weight function obtained by applying \cref{obs:nice} to $\omega$, $\mathcal{F}_0$ is a rooted tidy forest-decomposition obtained by applying \cref{tidy_forest} to $\mathcal{T}$.  (The definition of $\omega_0$ satisfies \eqref{eq:short-td-nice-weights}, by \cref{obs:nice}.)
  By our inductive statement we obtain labelling functions $(\mu,\kappa)$ for $(A',I')$.
  The labels lengths are bounded in terms of $\omega_0$ which is enough because $\log\omega_0(G) - \log\omega_0(v) \leq \log\omega(G)-\log\omega(v)+2$, which follows from~\cref{obs:nice}.
  This will complete the proof of the lemma.

  We now begin the inductive proof. Let $(n,h,k,G^+,G,\omega,\mathcal{F})$ be an instance, where $\mathcal{F}:=(F,(B_x\mid x\in V(F)))$.
  Let $(A,I)$ be the pair of adjacency and identity testers for a weighted $(g_1,g_2,g_3)$ mixed labelling scheme of $\mathcal{G}$.  Let
  \[
    G^\star:= \cup\set{\torso{G^+}{\mathcal{F},B_z}\mid z \in V(F)}\enspace.
  \]
  By definition, $\mathcal{F}$ is a forest-decomposition of $G^\star$.  By the assumptions of the lemma,
  \begin{align}
    G^\star[B_z] = \torso{G^+}{\mathcal{F},B_z}
     & \in \mathcal{G}\enspace,
    \label{eq:torso-in-the-class-G}
  \end{align}
  for each $z\in V(F)$.  Our labelling $\mu$ will assign a label $\mu(v)$ to each vertex $v$ of $G^\star$ and a label $\mu(K)$ to each clique $K$ of $G^\star$.  Since $G^\star$ is a supergraph of $G^+$, this ensures that $\mu$ assigns a label to each vertex and clique of $G^+$.

  \paragraph{\boldmath The root labelling $\mu_R$:}
  Let $R$ be the set of roots of $F$. Let
  \begin{align*}
    B_R
     & :=\textstyle\cup\set{B_r\mid r\in R}\enspace.
    \intertext{Since $\mathcal{F}$ is tidy, each bag of $\mathcal{F}$ is non-empty, so  $B_R\neq\emptyset$. Let}
    G^\star_R
     & :=G^\star[B_R]\enspace,                       \\
    G_R
     & :=G[B_R]\enspace.
  \end{align*}
  Observe that $G^\star_R\in\mathcal{G}$, since
  the sets in $\set{B_r \mid r\in R}$ are pairwise disjoint, $G^\star[B_r] \in \mathcal{G}$ (by~\eqref{eq:torso-in-the-class-G}) for each $r\in R$, and
  $\mathcal{G}$ is closed under disjoint union.  Let
  \begin{align*}
    \mathcal{K}_R
     & :=\set{B_y\cap B_R\mid y\in L_2(F)}\enspace.
    \intertext{Note that if $h=0$ then $\mathcal{K}_R$ is empty. Since $\mathcal{F}$ is tidy, each $K$ in $\mathcal{K}_R$ is nonempty. Note also that each $K$ in $\mathcal{K}_R$ is a clique in $G^\star_R$.
      For each $K$ in $\mathcal{K}_R$, let} F_K
     & :=\textstyle\cup \set{F_y\mid y\in L_2(F) \textrm{ and } B_y\cap B_R = K}\enspace, \\
    G_K
     & :=\textstyle\cup\set{G[B_z] \mid z\in V(F_K)}- B_R\enspace.
  \end{align*}
  Note that for each $K\in\mathcal{K}_R$,
  since $\mathcal{F}$ is tidy,
  $G_K$ has at least one vertex.
  Thus, the family of sets $\set{\set{B_R}}\cup\set{V(G_K)\mid K\in \mathcal{K}_R}$ is a partition of vertices of $G$.

  For each $v\in B_R$, let
  \begin{align*}
    \delta(v)
     & :=k\cdot\omega(v)+\textstyle\sum\set{\omega(G_K)\mid K\in\mathcal{K}_R,\ v\in K} \enspace .
  \end{align*}
  Since $k\geq1$, $\delta(v)\geq\omega(v)\geq 1$ for each $v\in B_R$.
  Clearly, for each $v\in B_R$,
  \begin{equation}
    \label{eq-gamma-v-omega-v}
    \delta(v) \geq k\cdot\omega(v)\enspace.
  \end{equation}
  Note also that for each $K\in\mathcal{K}_R$ and for each $v\in K$, we have that $\delta(v)\ge \omega(G_K)$, and therefore (recalling that $K$ is non-empty)
  \begin{equation}
    \min_{v\in K}\delta(v) \ge \omega(G_K) \enspace . \label{gamma_v_lower_bound}
  \end{equation}
  Since $B_R\neq\emptyset$, we have that $\delta(B_R)\geq1$.
  Since $\mathcal{F}$ has adhesion-width at most $k$, each clique in $\mathcal{K}_R$ has size at most $k$, so
  \begin{equation}
    \delta(B_R)=\sum_{v\in B_R}\delta(v)
    \le k\cdot\omega(B_R) + k\sum_{K\in\mathcal{K}_R}\omega(G_K)
    = k\cdot\omega(G)  \enspace . \label{gamma_br}
  \end{equation}

  The cliques of $G^\star_R$ in $\mathcal{K}_R$ play an important role in our labelling scheme.
  Each clique in $\mathcal{K}_R$ is defined by an adhesion $A_y=B_r\cap B_y$ for at least one $y\in L_2(F)$.  The label $\mu_R(K)$ for a clique $K\in\mathcal{K}_R$ will be used a part of the label $\mu(v)$ for each vertex $v$ in $G_R$ and as part of the label $\mu(K')$ for each clique $K'$ of $G^\star$ that contains vertices in $G_R$.

  Altogether, $G^\star_R\in\mathcal{G}$, $G_R$ is a spanning subgraph of $G^\star_R$, $\delta:V(G_R)\to\N^+$ is a weight function. Let $(\mu_R,\kappa_R)$ be a mixed labelling of $(G^\star_R, G_R, \delta)$ for $(A, I)$. For each $v\in B_R$,
  \begin{align}
    |\mu_R(v)|
     & \le \log \delta(B_R)-\log\delta(v)+g_1(|B_R|)\notag                                                                                                                              \\
     & \le \log (k\cdot\omega(G)) - \log (k\cdot\omega(v)) + g_1(n)             &  & \textrm{by~\eqref{gamma_br}, ~\eqref{eq-gamma-v-omega-v}, and since $g_1$ is non-decreasing}\notag \\
     & = \log\omega(G) - \log\omega(v) + g_1(n)\label{eq:mu-R-bounded-in-omega}                                                                                                         \\
     & \in\Oh(\log n)                                                           &  & \textrm{by~\eqref{eq:short-td-nice-weights} and since $g_1(n)\in\Oh(\log n)$.}
    \label{eq:mu-R-in-logn}
  \end{align}

  Let $K$ be a clique in $G^\star_R$. We make use of two different bounds on the length of $\mu_R(K)$.  If $K\in\mathcal{K}_R$:
  \begin{align}
    |\mu_R(K)|
     & \le\log \delta(B_R) - \log \min_{v\in K}(\delta(v)) + g_3(|B_R|) \notag                                                                                                                            \\
     & \le\log \omega(G) - \log \omega(G_K) + g_3(n) + \log k
     & \textrm{by~\eqref{gamma_br},~\eqref{gamma_v_lower_bound} and since $g_3$ is non-decreasing} \label{mu_RK_i}                                                                                        \\
     & \in\Oh(\log n)                                                                                              & \textrm{by~\eqref{eq:short-td-nice-weights} and since $g_3(n), k(n)\in\Oh(\log n).$}
  \end{align}
  For each $K\not\in\mathcal{K}_R$,
  \begin{align}
    |\mu_R(K)|
     & \le\log \delta(B_R) - \log\min_{v\in K}(\delta(v)) + g_3(|B_R|) \notag                                                                                                          \\
     & \le\log(k\cdot\omega(G)) - \log\min_{v\in K}(k\cdot\omega(v)) + g_3(n)
     & \text{by \eqref{eq-gamma-v-omega-v},~\eqref{gamma_br} and since $g_3$ is non-decreasing}\notag                                                                                  \\
     & = \log \omega(G) - \log\min_{v\in K}(\omega(v)) + g_3(n)
    \label{mu_RK_ii}                                                                                                                                                                   \\
     & \in\Oh(\log n)                                                                                 & \textrm{by~\eqref{eq:short-td-nice-weights} and since $g_3(n)\in\Oh(\log n)$.}
    \label{eq:mu_R(K)-in-logn}
   \end{align}
  Moreover, for each clique $K$ in $G^\star_R$,
  \begin{align}
    |\kappa_R(K,v)|
     & \leq g_2(|B_R|) \leq g_2(n) \in \Oh(\log n) \enspace ,
    \label{eq:kappa_R-logn}
  \end{align}
  since $g_2$ is non-decreasing.%

  \paragraph{\boldmath The subforest labelling $\mu_K$:}
  Suppose that $h\geq1$. Thus, $F-R$ is a non-empty rooted forest. We now describe the subproblems on which we apply induction. For each $z\in V(F-R)$, let $C_z:=B_z\setminus B_R$ and
  note that $\bigcup_{z\in V(F_K)}C_z=V(G_K)$.

  Let $K\in\mathcal{K}_R$. Define
  \begin{align*}
    G^\star_K
     & :=G^\star[\textstyle\bigcup\set{C_z\mid z\in V(F_K)}]\enspace, \\
    \mathcal{F}_K
     & :=(F_K, (C_z \mid z\in V(F_K)))\enspace.
  \end{align*}
  Thus, $\mathcal{F}_K$ is a rooted tidy forest-decomposition of $G^\star_K$ of adhesion-width at most $k$ and the height of $F_K$ is at most $h-1$.
  Since $\mathcal{G}$ is hereditary and by~\eqref{eq:torso-in-the-class-G},
  \begin{align*}
    \torso{G^\star_K}{\mathcal{F}_K,C_z}
     & =G^\star[B_z]-B_R\subseteq G^\star[B_z]\in \mathcal{G}\enspace ,
  \end{align*}
  for every $z \in V(F_K)$.

  Altogether, $\mathcal{F}_K$ is a rooted tidy forest-decomposition of $G^\star_K$ of adhesion-width at most $k$ and with $\height(F_K)\le h-1$ and with each torso in $\mathcal{G}$, $G_K$ is a spanning subgraph of $G^\star_K$, and the restriction
  $\omega_{|V(G_K)}$ of $\omega$ to the vertices of $G_K$ is a weight function that satisfies \eqref{eq:short-td-nice-weights}.  Thus, $(n,h-1,k,G^\star_K,G_K,\omega_{|V(G_K)},\mathcal{F}_K)$ is an instance on which we can apply induction.
  Fix a weighted $(g_1'(n),g_2'(n),g_3'(n))$ mixed labelling $(\mu_K,\kappa_K)$ of $(G^\star_K,G_K,\omega_{|V(G_K)})$ for $(A',I')$ that is obtained from the inductive hypothesis.
  In addition to being a weighted $(g_1'(n),g_2'(n),g_3'(n))$ mixed labelling, $(\mu_K,\kappa_K)$ also satisfies \eqref{strong_strong}.  That is, for each $z\in V(F_K)$ and each $w\in C_z\setminus A_z$,
  \begin{align}
    |\mu_K(w)|
     & \le \log \omega(G_K) - \log\omega(w) + g_1(n) \notag                                \\
     & \quad {} + |A_z\setminus B_R|\cdot (g_2(n) + \Oh(\log\log n)) \label{mu_Kw}         \\
     & \quad {} + (h-1)\cdot (g_3(n) + \Oh(\log\log n)) +\Oh(\log\log n) \enspace . \notag
  \end{align}

  \paragraph{\boldmath The vertex labelling $\mu:V(G)\to\{0,1\}^*$:}
  The format of a vertex label depends on whether the vertex is in $B_R$ or not.
  For each $v\in B_R$, define
  \[
    \mu(v) := \langle\mu_R(v)\rangle\enspace.
  \]
  Thus, $\mu(v)$ has just one part and
  by~\eqref{eq:mu-R-in-logn} this part is of length in $\Oh(\log n)$.
  Therefore,
  for each $v\in B_R$
  \begin{align*}
    |\mu(v)|
     & \leq |\mu_R(v)| + \Oh(1) + \Oh(\lgg|\mu_R(v)|)                &  & \text{by~\eqref{encoding_cost}}                                           \\
     & \leq \log\omega(G) - \log\omega(v) + g_1(n) + \Oh(\log\log n) &  & \textrm{by~\eqref{eq:mu-R-bounded-in-omega} and~\eqref{eq:mu-R-in-logn}.}
  \end{align*}
  Thus, the length of $\mu(v)$ satisfies \eqref{strong_strong} when $v\in B_R$.

  Now, let $w\in V(G)\setminus B_R$.
  Then, $w\in B_z\setminus A_z$ for some $z\in V(F_K)$ and some $K\in\mathcal{K}_R$.
  Note that the existence of $w$ implies that $h\geq 1$.
  Define
  \begin{align*}
    \alpha(w)
     & :=\langle\kappa_R(K,v)\mid v\in A_z\cap K, vw\in E(G)\rangle\enspace, \\
    \mu(w)
     & := \langle\mu_R(K)\,,\, \mu_K(w)\,,\, \alpha(w) \rangle \enspace .
  \end{align*}
  Thus,
  \begin{align*}
    |\alpha(w)|
     & \leq \textstyle\sum_{v\in A_z\cap K} |\kappa_R(K,v)| +
    \Oh(\lgg|A_z\cap K|) + \Oh(\sum_{v\in A_z\cap K} \lgg |\kappa_R(K,v)|)
     & \textrm{by~\eqref{encoding_cost}}                                            \\
     & \leq |A_z\cap K|\cdot g_2(n) +
    \Oh(\lgg k) + |A_z\cap K|\cdot\Oh(\lgg g_2(n))
     & \textrm{by~\eqref{eq:kappa_R-logn}}                                          \\
     & \leq |A_z\cap B_R|\cdot (g_2(n) +\Oh(\log\log n)) + \Oh(\log\log n)\enspace,
  \end{align*}
  where the last line follows
  from $A_z\cap K= A_z\cap B_R$ and $k(n),g_2(n)\in\Oh(\log n)$.
  Also,
  \begin{align*}
    |\mu_R(K)|+|\mu_K(w)|
     & \le \log \omega(G) - \log\omega(G_K) + g_3(n) + \log k                            & \text{by~\eqref{mu_RK_i}} \\
     & \qquad {} + \log\omega(G_K) - \log\omega(w) + g_1(n)                              & \text{by~\eqref{mu_Kw}}   \\
     & \qquad {} + |A_z\setminus B_R|\cdot (g_2(n) + \Oh(\log\log n))                                                \\
     & \qquad {} + (h-1)\cdot g_3(n) + h\cdot\Oh(\log\log n)                                                         \\
     & \leq \log\omega(G) - \log\omega(w) + g_1(n)                                                                   \\
     & \qquad {} + |A_z\setminus B_R|\cdot (g_2(n) + \Oh(\log\log n))                                                \\
     & \qquad {} + h\cdot (g_3(n) + \Oh(\log\log n)) + \Oh(\log\log n) \enspace . \notag
  \end{align*}
  Altogether
  \begin{align*}
    |\mu_R(K)|+|\mu_K(w)|+|\alpha(w)|
     & \le \log \omega(G) - \log\omega(w) + g_1(n)                              \\
     & \qquad {} + |A_z|\cdot (g_2(n)+\Oh(\log\log n))                          \\
     & \qquad {} + h\cdot (g_3(n) + \Oh(\log\log n)) + \Oh(\log\log n)\enspace.
  \end{align*}
  Therefore,
  \begin{align*}
    |\mu(w)|
     & =|\mu_R(K)|+|\mu_K(w)|+|\alpha(w)|+ \Oh(\lgg|\mu_R(K)|+\lgg|\mu_K(w)|+\lgg|\alpha(w)|)
     & \text{by \eqref{encoding_cost}}                                                        \\
     & \le \log \omega(G) - \log\omega(w) + g_1(n)                                            \\
     & \qquad {} + |A_z|\cdot (g_2(n)+\Oh(\log\log n))                                        \\
     & \qquad {} + h\cdot (g_3(n) + \Oh(\log\log n)) + \Oh(\log\log n)\enspace.
  \end{align*}
  Thus, the length of $\mu(w)$ satisfies \eqref{strong_strong} for each $w\in V(G)\setminus B_R$.

  \paragraph{Adjacency Testing:}  At this point, $\mu(v)$ is defined for each $v\in V(G)$, so we can describe the operation of the adjacency tester $A'$.
  Given the labels $\mu(v)$ and $\mu(w)$ for two vertices $v$ and $w$ of $G$, $A'$ examines how many parts each label has.
  For an arbitrary vertex $u\in V(G)$, if $\mu(u)$ has one part, then $u\in B_R$,
  otherwise $u\in V(G_K)$ for some $K\in\mathcal{K}_R$.
  \begin{itemize}
    \item If $v$ and $w$ are both in $B_R$ then $\mu(v)=\langle\mu_R(v)\rangle$ and $\mu(w)=\langle\mu_R(w)\rangle$.  In this case, $vw\in E(G)$ if and only if $vw\in E(G_R)$, so the adjacency tester returns $A'(\mu(v),\mu(w)):=A(\mu_R(v),\mu_R(w))$.
    \item If exactly one of $v$ or $w$ is in $B_R$ then, assume without loss of generality that $v\in B_R$ and $w\in V(G_K)$ for some $K\in\mathcal{K}_R$.
          So $\mu(v)=\langle\mu_R(v)\rangle$ and $\mu(w)=\langle\mu_R(K),\mu_K(w),\alpha(w)\rangle$.
          In this case, $vw\in E(G)$ if and only if $v\in K\cap N_G(w)$.  Thus, $vw\in E(G)$ if and only if $v\in K$ and $\alpha(w)$ contains $\kappa_R(K,v)$.  In this case, the adjacency tester iterates through each $\kappa$ in $\alpha(w)$ checking if $I(\mu_R(K),\kappa,\mu_R(v))=1$.  If so, then $\kappa=\kappa_R(K,v)$ and $vw\in E(G)$ so $A'(\mu(v),\mu(w)):=1$. If $I(\mu_R(K),\kappa,\mu_R(v))=0$ for every $\kappa$ in $\alpha(w)$ then $v\not\in K\cap N_G(w)$, so  $vw\not\in E(G)$, and $A'(\mu(v),\mu(w)):=0$.
    \item Otherwise, $v\in V(G_K)$ and $w\in V(G_{L})$ for some $K,L\in\mathcal{K}_R$.
          In this case, $\mu(v)=\linebreak \langle\mu_R(K),\mu_K(v),\alpha(v)\rangle$ and $\mu(w)=\langle\mu_R(L),\mu_{L}(w),\alpha(w)\rangle$.
          If $\mu_R(K)\neq\mu_R(L)$ then $K\neq L$. In this case, $K$ (and $L$) separates $v$ and $w$ in $G^+$, so $vw\not\in E(G)$ and $A'(\mu(v),\mu(w)):=0$.
          If $\mu_R(K)=\mu_R(L)$ then, since $\mu$ is injective, $K=L$ and $v,w\in V(G_K)$.  Therefore, $vw\in E(G)$ if and only if $vw\in E(G_K)$, so the adjacency tester returns $A'(\mu(v),\mu(w)):=A'(\mu_{K}(v),\mu_K(w))$.
  \end{itemize}

  \paragraph{\boldmath The clique label $\mu(K)$:}
  What remains is to define $\mu(K)$ and $\kappa(K,u)$ for each clique $K$ of $G^\star$ and each $u\in K$.  For each clique $K$ of $G^\star$ such that $K\subseteq B_R$,
  \[
    \mu(K) := \langle \mu_R(K)\rangle \enspace .
  \]
  Then    for each clique $K$ in $G^\star$ such that $K\subseteq B_R$,
  \begin{align*}
    |\mu(K)|
     & \leq |\mu_R(K)| + \Oh(1) + \Oh(\lgg|\mu_R(K)|)                               &  & \text{by~\eqref{encoding_cost}}                              \\
     & \leq \log\omega(G) - \log\min_{v\in K}(\omega(v)) + g_3(n) + \Oh(\log\log n) &  & \textrm{by~\eqref{mu_RK_ii} and~\eqref{eq:mu_R(K)-in-logn}.}
  \end{align*}  Now, let $K$ be a clique of $G^\star$ that contains at least one vertex not in $B_R$. Then there exists exactly one $L\in\mathcal{K}_R$ such that all the vertices of $K\setminus B_R$ are contained in $G_L$. Then $K\setminus B_R$ is a clique in $G^\star_L$, since $G^\star_L[K\setminus B_R]=G^\star[K\setminus B_R]$. Thus, $\mu_L(K\setminus B_R)$ is defined. Since $K$ is a clique in $G^\star$, $K\cap B_R \subseteq L$. Since $L$ is the parent adhesion of some node in $L_2(F)$, $|L| \le k$. Let $c(K)$ be a $k$-bit string whose $i$-th bit indicates whether the $i$-th vertex of $L$ (ordered canonically, for example by their $\mu_R(v)$ labels) belongs to $K$. Define
  \[
    \mu(K):= \langle\;\mu_R(L),\;\mu_{L}(K\setminus B_R),\;c(K)\;\rangle
    \enspace .
  \]
  We now argue that $\mu$ is injective on the set of cliques of $G^\star$.  Consider two distinct cliques $K$ and $P$ of $G^\star$.   If $K,P \subseteq B_R$, then $\mu(K)=\langle\mu_R(K)\rangle\neq\langle\mu_R(P)\rangle=\mu(P)$.  If $K\subseteq B_R$ and $P\not\subseteq B_R$, then $\mu(K)$ has one part and $\mu(P)$ has three parts, so $\mu(K)\neq\mu(P)$. Now suppose that $K\not\subseteq B_R$ and $P\not\subseteq B_R$. Thus $K$ is contained in $G^\star_L$ and $P$ is contained in $G^\star_M$ for some cliques $L,M$ in $G^\star_R$.  If $L\neq M$ then $\mu_R(L) \neq \mu_R(M)$, so $\mu(K)$ differ in their first parts. Now assume that $L=M$. If $K\cap L\neq P\cap L$ then $c(K)\neq c(P)$, so $\mu(K)$ and $\mu(P)$ differ in their third parts. Otherwise, $K\setminus L\neq P\setminus L$, so $\mu_L(K\setminus B_R)$ and $\mu_L(P\setminus B_R)$ differ by induction, so $\mu(K)$ and $\mu(P)$ differ in their second parts.

  Then,
  \begin{align*}
    |\mu(K)|
     & = |\mu_R(L)| + |\mu_{L}(K\setminus B_R)| + |c(K)| + \Oh(1)+\Oh(\lgg|\mu_R(L)|+\lgg|\mu_L(K\setminus B_R)|+\lgg|c(K)|)
     & \text{by \eqref{encoding_cost}}                                                                                       \\
     & \le |\mu_R(L)| + |\mu_{L}(K\setminus B_R)| + k + \Oh(\log\log n)\enspace,
  \end{align*}
  where the last line follows by~\eqref{eq:mu_R(K)-in-logn} and by induction as $|\mu_L(K\setminus B_R)|\leq \log\omega(G_L)-\linebreak \log\min_{v\in K\setminus B_R}\omega(v) + h\cdot(g_3(n)+\Oh(k+\log\log n)) \in\Oh(\log^2 n)$, since $h\le h(n)\in\Oh(\log n)$, $k\in\Oh(\log n)$, and $g_3(n)\in\Oh(\log n)$. Continuing,
  \begin{align*}
    |\mu(K)|
     & \le |\mu_R(L)| + |\mu_{L}(K\setminus B_R)| + k + \Oh(\log\log n)                                                             \\
     & \le \log\omega(G)-\log \omega(G_{L}) + g_3(n) + |\mu_{L}(K\setminus B_R)| + k + \Oh(\log\log n)
     & \text{by~\eqref{mu_RK_ii}}                                                                                                   \\
     & \le \log\omega(G) - \log\min_{v\in K\setminus B_R}(\omega(v)) + (h+1)\cdot g_3(n) + (h+1)\cdot k + (h+1)\cdot\Oh(\log\log n) \\
     & \le \log\omega(G) - \log\min_{v\in K}(\omega(v)) + (h+1)\cdot (g_3(n) + \Oh(k+\log\log n)) \enspace .
  \end{align*}
  Thus, $\mu$ satisfies \eqref{strong_clique} and thus the requirement for $g'_3(n)$.

  \paragraph{\boldmath The local identifier $\kappa(K,u)$:}
  Let $K$ be a clique in $G^\star$, let $u\in K$, and let $z$ be the vertex of $F$ such that $u\in B_z\setminus A_z$.   The local identifier $\kappa(K,u)$ for $(A',I')$ has two parts,
  \[
    \kappa(K,u) := \langle\depth_F(z),\zeta(K,u)\rangle \enspace .
  \]
  The first part is (the binary encoding of) $\depth_F(z)$. We now describe the second part, $\zeta(K,u)$.  If $K\subseteq B_R$, then $\zeta(K,u):=\kappa_R(K,u)$.
  If $K\not\subseteq B_R$ then there is exactly one clique $L\in\mathcal{K}_R$ such that $K\setminus B_R$ is contained in $G_L$.  If $u\in B_R$ then $\zeta(K,u):=\kappa_R(L,u)$.  If $u\in K\setminus B_R$ then $K\setminus B_R$ is non-empty, so $\mu_L(K\setminus B_R)$ and  $\kappa_L(K\setminus B_R, u)$ are defined as part of the mixed labelling $(\mu_L,\kappa_L)$ of $(G^\star_L,G_L,\omega_{|V(G_L)})$ for $(A',I')$.
  Since this is a labelling for $(A',I')$ obtained by induction,
  \begin{equation}
    \kappa_L(K\setminus B_R,u)=\langle\depth_{F_L}(z),\zeta_{L}(K\setminus B_R,u)\rangle=\langle\depth_F(z)-1,\zeta_L(K\setminus B_R,u)\rangle\enspace.
    \label{off_by_one}
  \end{equation}
  In this case, define $\zeta(K,u):=\zeta_L(K\setminus B_R,u)$.

  It follows from this definition that $\zeta(K,u)$ is a local identifier $\kappa'(K',u)$, with $K'\subseteq K$ for some mixed labelling $(\mu',\kappa')$ for $(A,I)$ on a graph with at most $n$ vertices.  Therefore, $|\zeta(K,u)|\le g_2(n)$.  Therefore,
  \begin{align*}
    |\kappa(K,u)|
     & \le \ceil{\log(\depth_F(z)+1)} + |\zeta(K,u)|
    + \Oh(\lgg\ceil{\log(\depth_F(z)+1)}) + \Oh(\lgg|\zeta(K,u)|)
     & \text{by \eqref{encoding_cost}}               \\
     & \leq |\zeta(K,u)| + \Oh(\log\log n)           \\
     & \leq g_2(n) +  \Oh(\log\log n)\enspace,
  \end{align*}
  since $\depth_F(z)\le h\le h(n)\in \Oh(\log n)$ and $|\zeta(K,u)|\le g_2(n)\in\Oh(\log n)$.  Thus, the length of $\kappa(K,u)$ satisfies the requirement for $g'_2(n)$.

  \paragraph{Identity Testing:}
  We now describe the identity tester $I'$ that takes $\mu(K)$, $\kappa(K,u)$, and $\mu(v)$ as input and outputs $1$ if $u=v$ or $0$ if $u\neq v$.  To do this, $I'$ first examines $\mu(v)$ to determine if $v\in B_R$.  If $\mu(v)$ has exactly one part, then $v\in B_R$. Otherwise, $\mu(v)$ has three parts and $v\not\in B_R$.
  In both cases, $I'$ will make use of $\kappa(K,u)=\langle\depth_F(z),\zeta(K,u)\rangle$, where $z$ is the unique node of $F$ such that $u\in B_z\setminus A_z$.
  \begin{itemize}
    \item If $v\in B_R$, then $\mu(v)=\langle\mu_R(v)\rangle$ and $I'$ examines  $\depth_F(z)$ as follows.
    \begin{itemize}
      \item If $\depth_F(z)> 0$, then $z\not\in R$ and $u\in B_z\setminus A_z\subseteq B_z\setminus B_R$.  Since $v\in B_R$ and $u\not\in B_R$, $u\neq v$.  Thus, $I'(\mu(K),\kappa(K,u),\mu(v)):=0$.

      \item If $\depth_F(z)=0$ then $z\in R$ and $u\in B_R$.   There are now two possibilities that $I'$ can distinguish by looking at the number of parts of $\mu(K)$.
      \begin{itemize}
        \item If $\mu(K)$ has one part then $K\subseteq B_R$,  $\mu(K)=\langle \mu_R(K)\rangle$, and $\zeta(K,u)=\kappa_R(K,u)$.  Therefore, $I'(\mu(K),\kappa(K,u),\mu(v)):=I(\mu_R(K),\kappa_R(K,u),\mu_R(v))$.
        \item If $\mu(K)$ has three parts, then $K\not\subseteq B_R$, $\mu(K)=\langle \mu_R(L),\mu_L(K\setminus B_R),c(K)\rangle$, and  $\zeta(K,u)=\kappa_R(L,u)$, where $L$ is the unique clique in $\mathcal{K}_R$ such that $G_L$ contains $K\setminus B_R$.  Then $K\cap B_R=K\cap L$, so $u\in L$. Therefore, $I'(\mu(K),\kappa(K,u),\mu(v)):=I(\mu_R(L),\kappa_R(L,u),\mu_R(v))$.
      \end{itemize}
    \end{itemize}
    \item If $v\not\in B_R$ then $\mu(v)=\langle \mu_R(L'),\mu_{L'}(v),\alpha(v)\rangle$, where $L'$ is the unique clique in $\mathcal{K}_R$ such that $v\in V(G_{L'})$.
    \begin{itemize}
      \item If $\depth_F(z)=0$ then $z\in R$, $u\in B_R$ and $v\not\in B_R$, so $I'(\mu(K),\kappa(K,u),\mu(v)):=0$.

      \item If $\depth_F(z)>0$ then $u\not\in B_R$.  Since $u\in K$ and $u\not\in B_R$, $K\not\subseteq B_R$, so $\mu(K)=\langle \mu_R(L),\mu_{L}(K\setminus B_R),c(K)\rangle$ where $L$ is the unique clique in $\mathcal{K}_R$ such that $G_L$ contains $K\setminus B_R$.  If $\mu_R(L')\neq \mu_R(L)$ then $L\neq L'$ and $u\neq v$ since $G_{L}$ and $G_{L'}$ are vertex-disjoint, so $I'(\mu(K),\kappa(K,u),\mu(v)):=0$.

      If $\mu_R(L)=\mu_R(L')$ then $L=L'$ since $\mu_R$ is injective.
      Since $u\in K$ and $u\not\in B_R$, we have that $u\in K\setminus B_R$, so $K\setminus B_R$ is non-empty. Thus, $\zeta(K,u)=\zeta_L(K\setminus B_R,u)$
      and $I'$ can compute the local identifier $\kappa_L(K\setminus B_R,u)$ since, by \eqref{off_by_one},
      \[
        \kappa_L(K\setminus B_R,u) = \langle \depth_F(z)-1, \zeta_L(K\setminus B_R,u)\rangle\enspace.
      \]
      In this case $I'(\mu(K),\kappa(K,u),\mu(v)):=I'(\mu_L(K\setminus B_R),\kappa_L(K\setminus B_R,u),\mu_L(v))$. \qedhere
    \end{itemize}
  \end{itemize}
\end{proof}

\subsection{Finalizing the Proof of \texorpdfstring{\cref{PlainRealMainTechnical}}{Theorem \ref*{PlainRealMainTechnical}}}
\label{sec:finalizing}

The following lemma proves \cref{PlainRealMainTechnical}, with quantitative bounds.

\begin{lem}\label{RealMainTechnical}
  Let $\mathcal{G}$ be a hereditary class of graphs closed under taking disjoint union.
  Let $k\in\N$.
  Let $\mathcal{G}'$ be the class of graphs consisting of, for each $n\in\N$,  every $n$-vertex graph $G$ that has a tree-decomposition of adhesion-width at most $k$ in which each torso is a member of $\mathcal{G}$.
  Let $g_i:\N\to\R^+$ be a non-decreasing function with $g_i\in\Oh(\log n)$ for each $i\in[3]$.
  Suppose that $\mathcal{G}$ admits a weighted $(g_1,g_2,g_3)$ mixed labelling scheme.
  Then $\mathcal{G}'$ admits a weighted $(g_1'',g_2'',g_3'')$ mixed labelling scheme, where
  \begin{align*}
    g''_1(n)
     & = g_1(n) + \Oh(g_2(n)) + \Oh\left(\sqrt{g_3(n)\log n}\right) + \Oh\left(\sqrt{\frac{\log n}{g_3(n)}}\log\log n\right) \enspace , \\
    g''_2(n)
     & = g_2(n) + \Oh\left(\sqrt{g_3(n)\log n}\right) + \Oh(\log\log n)  \enspace ,                                                     \\
    g''_3(n)
     & = \Oh\left(\sqrt{g_3(n)\log n}\right) + \Oh\left(\sqrt{\frac{\log n}{g_3(n)}}\log\log n\right) \enspace .
  \end{align*}
\end{lem}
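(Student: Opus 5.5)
We combine the three lemmas as in the proof overview. Fix $n$, an $n$-vertex $G^+\in\mathcal{G}'$ with tree-decomposition $\mathcal{T}$ of adhesion-width at most $k$ and torsos in $\mathcal{G}$, a spanning subgraph $G$, and a weight function $\omega$. We choose a skinniness parameter $b=b(n):=2^{\sqrt{g_3(n)\log n}}$; the role of $g_3$ here is to balance $\log b$ against $(\log_b n)\cdot g_3(n)$. We then apply \cref{b_skinny_decomp} to $\mathcal{T}$. This gives a tree-decomposition $\mathcal{Q}$ of $G^+$ with $\height(Q)\le\log_b n=\sqrt{\log n/g_3(n)}$. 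Every adhesion of $\mathcal{Q}$ is an adhesion of $\mathcal{T}$, so it has size at most $k$. Moreover, every torso of $\mathcal{Q}$ has a $b$-skinny tree-decomposition whose bags are bags of $\mathcal{T}$.

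\textbf{Step 1 (torsos of $\mathcal{Q}$).} We must check that the torsos of this $b$-skinny decomposition lie in $\mathcal{G}$. Its adhesions are adhesions of $\mathcal{T}$, hence of size at most $k$. Its torsos are subgraphs of torsos of $\mathcal{T}$, possibly with extra clique edges on adhesions; those extra edges are already present in the $\mathcal{T}$-torsos, so each torso is an induced subgraph of a $\mathcal{T}$-torso. Since $\mathcal{G}$ is hereditary, each torso lies in $\mathcal{G}$. We then invoke \cref{skinny_labelling} with $k(n)=k$ constant and $b(n)$ as above. The resulting class $\mathcal{G}_1$ of torsos of $\mathcal{Q}$ admits a $(h_1,h_2,h_3)$ scheme, where
\[
h_1=g_1+\Oh(\log b+\log\log n),\qquad h_2=g_2+\Oh(\log b+\log\log n),\qquad h_3=g_3+\Oh(\log\log n).
\]
To apply \cref{small_height} we need $\mathcal{G}_1$ to be hereditary and closed under disjoint union. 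Both should hold because induced subgraphs and disjoint unions of graphs with skinny decompositions again have such decompositions: for disjoint unions, merge layer by layer, which at worst doubles $b$ and is absorbed by the $O(\cdot)$. Alternatively, we can define $\mathcal{G}_1$ directly as the closure and apply \cref{disjoint_union_ii}.

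\textbf{Step 2 (short decomposition).} Apply \cref{small_height} to $\mathcal{Q}$ with torso class $\mathcal{G}_1$, adhesion bound $k$, and $h(n)=\sqrt{\log n/g_3(n)}$. This gives
\[
g_1''=h_1+k(h_2+\Oh(\log\log n))+h(n)\bigl(h_3+\Oh(\log\log n)\bigr)+\Oh(\log\log n).
\]
Substituting, $h(n)\,g_3(n)=\sqrt{g_3\log n}$ and $\log b=\sqrt{g_3\log n}$. The $\log\log n$ terms contribute $\sqrt{\log n/g_3}\,\log\log n$. Since $k$ is constant, the $k\cdot h_2$ term is $\Oh(g_2)+\Oh(\sqrt{g_3\log n})+\Oh(\log\log n)$. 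This matches $g_1''$. Similarly, $g_2''=h_2+\Oh(\log\log n)$ matches, and $g_3''=(h(n)+1)(h_3+\Oh(k+\log\log n))$ gives the stated bound. Here we use $g_3\ge$ some constant, which we may assume after adding $1$ to $g_3$, and $g_3\in\Oh(\log n)$, so that the $+1$ term is dominated.

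\textbf{Main obstacle.} The delicate point is the hypotheses of the intermediate lemmas. \Cref{small_height} needs the torso class to be hereditary and closed under disjoint unions, and $h(n)\in\Oh(\log n)$ and $h(n)>0$; both of the latter hold because $g_3\ge 1$. The heredity and disjoint-union closure of $\mathcal{G}_1$ is where I would spend care. I would define $\mathcal{G}_1$ as the class of graphs with a $2b(n)$-skinny decomposition of adhesion-width $k$ and torsos in $\mathcal{G}$. Heredity follows by restricting bags and applying tidying (\cref{tidy_forest}), which preserves skinniness and keeps torsos as induced subgraphs of the original torsos. For disjoint unions, I would note that \cref{skinny_labelling} in fact only uses the layered path-decomposition, which works equally for skinny forest-decompositions, and that a forest whose components each have at most $b$ nodes per layer can be handled by applying \cref{disjoint_union_ii}. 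Finally, the monotonicity conditions ($g_i$ non-decreasing) must be preserved; this is achieved by taking $b(n)$ non-decreasing.
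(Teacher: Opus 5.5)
Your proposal is correct and follows the paper's proof: the same choice $b(n)=2^{\sqrt{g_3(n)\log n}}$ and $h(n)=\log_{b(n)}n$, \cref{b_skinny_decomp} to obtain a decomposition of height at most $h(n)$ whose torsos have $b(n)$-skinny decompositions, then \cref{skinny_labelling} followed by \cref{small_height}, with the same bookkeeping. You are more careful than the paper, which simply takes the class of graphs with $b(n)$-skinny forest-decompositions and does not check the hypotheses of \cref{small_height}. Your route of taking the disjoint-union closure and invoking \cref{disjoint_union_ii} works, but drop the tidying step from your heredity argument: tidying can lift subtrees and enlarge layers, whereas restricting the bags alone already gives heredity.
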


\begin{proof}
  Let $b,h:\N\to\R^+$ be functions defined by
  \begin{align*}
    b(n)
     & :=2^{\sqrt{g_3(n)\log n}}\enspace,                        \\
    h(n)
     & := \log_{b(n)}n = \sqrt{\frac{\log n}{g_3(n)}} \enspace .
  \end{align*}
  Since $g''_1(n)$, $g''_2(n)$, and $g''_3(n)$ all include asymptotic terms that increase without bound as a function of $n$, we may assume that $n\ge 2$ and $g_3(2)\ge 1$.  Therefore, $b(n)\ge 2$.

  Let $\mathcal{B}$ be a class of graphs that contains, for each $n\in\N^+$, every $n$-vertex graph that has a rooted $b(n)$-skinny forest-decomposition of adhesion-width at most $k$ in which each torso is a member of $\mathcal{G}$.
  By \cref{skinny_labelling}, each graph in $\mathcal{B}$ admits a
  weighted $(g'_1(n),g'_2(n),g'_3(n))$ mixed labelling scheme, where
  \begin{align*}
    g'_1(n)
     & = g_1(n) + \Oh(k \log b(n) + k \log\log n)                                 \\
     & = g_1(n) + \Oh\left(\sqrt{g_3(n) \log n}\right) + \Oh(\log\log n)\enspace, \\
    g'_2(n)
     & = g_2(n)+ \Oh(\log b(n) + \log\log n)                                      \\
     & = g_2(n) + \Oh\left(\sqrt{g_3(n) \log n}\right) + \Oh(\log\log n)\enspace, \\
    g'_3(n)
     & = g_3(n) + \Oh(k + \log\log n)                                             \\
     & = g_3(n) + \Oh(\log\log n)\enspace.
  \end{align*}

  Let $\mathcal{C}$ be a class of graphs that contains, for each $n\in\N^+$,
  every $n$-vertex graph that has a rooted forest-decomposition $(F,(C_x)_{x\in V(F)})$
  of adhesion-width at most $k$, with $\height(F)\leq h(n)$, and
  in which each torso is a member of $\mathcal{B}$.
  By \cref{small_height}, $\mathcal{C}$ has a weighted \linebreak $(g''_1(n),g''_2(n),g''_3(n))$ mixed labelling scheme, where
  \begin{align*}
    g''_1(n)
     & = g'_1(n) + k\cdot(g'_2(n) + \Oh(\log\log n)) + h(n)\cdot (g'_3(n)+\Oh(\log\log n)) + \Oh(\log\log n) \\
     & = g_1(n) +  \Oh\left(\sqrt{g_3(n) \log n}\right) + \Oh(g_2(n)) + \Oh\left((1+h(n))\log\log n\right)   \\
    g''_2(n)
     & = g'_2(n) + \Oh(\log\log n)                                                                           \\
     & = g_2(n) + \Oh\left(\sqrt{{g_3(n)\log n}}\right) + \Oh(\log\log n)                                    \\
    g''_3(n)
     & = (h(n)+1)\cdot (g'_3(n)+\Oh(k+\log\log n))                                                           \\
     & = \Oh\left(\sqrt{g_3(n)\log n}\right) + g_3(n) + \Oh\left((1+h(n))\log\log n\right)                   \\
     & = \Oh\left(\sqrt{g_3(n) \log n}\right) + \Oh\left((1+h(n))\log\log n\right)\enspace.
  \end{align*}
  Substituting $h(n) = \sqrt{\log n\, / g_3(n)}$ into the error terms yields the bounds in the lemma statement.

  Let $G$ be an $n$-vertex graph in $\mathcal{G}'$.
  Then $G$ has a tree-decomposition $\mathcal{T}:=(T,(B_x\mid x\in V(T)))$ of adhesion-width at most $k$ and such that $\torso{G}{\mathcal{T},B_x}\in\mathcal{G}$ for each $x\in V(T)$.
  By~\cref{b_skinny_decomp}, there is a tree-decomposition $\mathcal{Q}:=(Q,(D_x \mid x\in V(Q)))$ of $G$ such that $\height(Q)\le\log_{b(n)} n = h(n)$, $\mathcal{Q}$ has adhesion-width at most $k$, and such that each torso of $\mathcal{Q}$ has a $b(n)$-skinny tree-decomposition whose torsos are in $\mathcal{G}$.
  Therefore, all the torsos of $\mathcal{Q}$ are in the class $\mathcal{B}$ and therefore $G$ is in the class $\mathcal{C}$.  This completes the proof of the lemma.
\end{proof}

\subsection{Proof of \texorpdfstring{\cref{main_result}}{Theorem \ref*{main_result}} with Explicit Bounds}

We now prove our main result, \cref{main_result}, with an explicit lower order term.

\begin{lem}
  \label{main_result_detailed}
  Every proper minor-closed graph class $\mathcal{C}$  admits a $\log n + \Oh\big((\log n)^{3/4}\big)$-bit adjacency labelling scheme.
\end{lem}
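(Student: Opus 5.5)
We assemble the pieces already established. By \cref{gmpst}, there are integers $k,a\ge 0$ such that every graph in $\mathcal{C}$ has a tree-decomposition whose torsos lie in $\mathcal{R}_k^{+a}$. The adhesion-width of such a decomposition is bounded by a constant depending only on $\mathcal{C}$. Every adhesion is a clique in each incident torso, and graphs in $\mathcal{R}_k^{+a}$ have clique number at most $a+2(k+1)$, since $H\boxtimes P$ with $\mathrm{tw}(H)\le k$ has clique number at most $2(k+1)$. So the adhesion-width is at most a constant $k'$. Let $\mathcal{G}$ be the class of all graphs that are disjoint unions of graphs in $\mathcal{R}_k^{+a}$. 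This class is hereditary and closed under disjoint union, as \cref{RealMainTechnical} requires. It also contains every torso.

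Next we need an efficient weighted mixed labelling scheme for $\mathcal{G}$ with explicit $g_i$. As stated in the overview, \cref{weighted_product_proof} extends the scheme of \citet{dujmovic.esperet.ea:adjacency} and \citet{gawrychowski.janczewski:simpler} to a weighted mixed labelling scheme for $\mathcal{R}_k$, and \cref{gk_labelling_precise} gives $g_1,g_2,g_3\in\Oh(\sqrt{\log n})$. We take this result as given. Applying \cref{add_apexes} with the constant $a$ adds only $\Oh(a+\log\log n)$, and \cref{disjoint_union_ii} adds only $\Oh(\log\log n)$. So $\mathcal{G}$ admits a weighted $(g_1,g_2,g_3)$ mixed labelling scheme with each $g_i\in\Oh(\sqrt{\log n})$, all non-decreasing; we may replace each $g_i$ by a non-decreasing upper bound. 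We should also make sure $g_3(n)=\Theta(\sqrt{\log n})$, padding upward if necessary, so that $\log n/g_3(n)$ stays controlled.

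Now apply \cref{RealMainTechnical} with adhesion-width $k'$. With $g_2,g_3\in\Theta(\sqrt{\log n})$, the terms become
\[
\sqrt{g_3(n)\log n}=\Oh\big((\log n)^{3/4}\big)
\qquad\text{and}\qquad
\sqrt{\log n/g_3(n)}\,\log\log n=\Oh\big((\log n)^{1/4}\log\log n\big).
\]
This gives $g_1''(n)=\Oh\big((\log n)^{3/4}\big)$. Every graph in $\mathcal{C}$ has a tree-decomposition of adhesion-width at most $k'$ with torsos in $\mathcal{G}$, so $\mathcal{C}$ is contained in the class $\mathcal{G}'$ of that lemma. A weighted mixed labelling scheme yields an adjacency labelling scheme by taking the uniform weight $\omega\equiv1$ and the pair $(G,G)$. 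The vertex labels then have length at most $\log n+g_1''(n)=\log n+\Oh\big((\log n)^{3/4}\big)$, and the adjacency tester $A''$ answers correctly.

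The main obstacle is the base-class input: the weighted mixed scheme for $\mathcal{R}_k$ with the stated $\Oh(\sqrt{\log n})$ overheads, which must come from \cref{weighted_product_proof}. A second point is to check the hypothesis that $\mathcal{G}$ is closed under disjoint union and hereditary, with torsos landing in it. $\mathcal{R}_k^{+a}$ itself is not closed under disjoint union, since the apex sets add up. That is why we pass to the disjoint-union closure and use \cref{disjoint_union_ii} rather than working with $\mathcal{R}_k^{+a}$ directly. The rest is the routine substitution above.
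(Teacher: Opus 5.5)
Your proposal is correct and follows essentially the same route as the paper. You bound the adhesion-width by the clique number $2k+a+2$ of $\mathcal{R}_k^{+a}$, build a scheme for the disjoint-union closure of $\mathcal{R}_k^{+a}$ via \cref{gk_labelling_precise}, \cref{add_apexes} and \cref{disjoint_union_ii}, and then apply \cref{RealMainTechnical}. Your padding of $g_3$ up to $\Theta(\sqrt{\log n})$ is harmless but not needed: already $g_3\ge 1$ (which the proof of \cref{RealMainTechnical} assumes anyway) makes $\sqrt{\log n/g_3(n)}\,\log\log n\le\sqrt{\log n}\,\log\log n\in\Oh((\log n)^{3/4})$.
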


\begin{proof}
  We show that $\mathcal{C}$ admits a weighted $(g_1,g_2,g_3)$ mixed labelling scheme with $g_1(n) \in \Oh((\log n)^{3/4})$. By the definition of mixed labelling schemes, this implies that $\mathcal{C}$ admits an $f(n)$-bit adjacency labelling scheme with $f(n)\le\log n + \Oh((\log n)^{3/4})$.

  Let $k,a\geq 0$ be integers given by \cref{gmpst} for the class $\mathcal{C}$. So $\mathcal{C}\subseteq \mathcal{R}_k^{+a}$. By \cref{add_apexes} and \cref{gk_labelling_precise}, $\mathcal{R}_k^{+a}$ admits a weighted $(g'_1,g'_2,g'_3)$ mixed labelling scheme, where $g'_1(n),g'_3(n)\in \Oh(\sqrt{\log n})$ and $g'_2(n)\in \Oh(\log\log n)$.

  Let $\mathcal{Q}$ consist of every graph that can be formed by the disjoint union of a finite number of graphs in $\mathcal{R}_k^{+a}$.  By \cref{disjoint_union_ii}, $\mathcal{Q}$ admits a weighted $(g''_1,g_2'',g_3'')$ mixed labelling scheme with $g''_1(n),g''_3(n)\in \Oh(\sqrt{\log n})$ and $g''_2(n)\in \Oh(\log\log n)$.   By definition, $\mathcal{Q}$ is closed under disjoint union.  Also, by definition  $\mathcal{R}_k^{+a}\subseteq\mathcal{Q}$. Since $\mathcal{R}_k^{+a}$ is hereditary, $\mathcal{Q}$ is hereditary.

  Let $\mathcal{Q}'$ be the class of graphs that have a tree-decomposition whose torsos are in $\mathcal{Q}$.  The graphs in $\mathcal{R}_k^{+a}$ have maximum clique size at most $2k+a+2$.  So the graphs in $\mathcal{Q}$ also have maximum clique size at most $2k+a+2$. Since each adhesion of a tree-decomposition is a clique in some torso,   each graph in $\mathcal{Q}'$ has a tree-decomposition 
  of adhesion-width at most $2k+a+2$ whose torsos are in $\mathcal{Q}$.

  By \cref{RealMainTechnical}, each graph in $\mathcal{Q}'$ admits a weighted $(g_1,g_2,g_3)$ mixed labelling scheme with
  \[
    g_1(n) \in g''_1(n) + \Oh\left(g''_2(n) + \sqrt{g''_3(n)\log n}
    + \sqrt{\frac{\log n}{g''_3(n)}}\log\log n\right) = \Oh((\log n)^{3/4}) \enspace.
  \]
  Since $\mathcal{R}_k^{+a}\subseteq\mathcal{Q}$, \cref{gmpst} implies that $\mathcal{C}\subseteq\mathcal{Q}'$.  Therefore, $\mathcal{C}$ admits a weighted $(g_1,g_2,g_3)$ mixed labelling scheme with $g_1(n)\in \Oh((\log n)^{3/4})$.
\end{proof}

\section{Discussion}\label{conclusion}

We conclude with a brief discussion about constants and computational complexity.  The constant $a$ in \cref{gmpst} is surprisingly small. Suppose that some graph $X$ is not in $\mathcal{G}$, where $X-A$ is planar for some non-empty set $A\subseteq V(X)$. Then Theorem~42 in \citep{dujmovic.joret.ea:planar} (which employs the structure theorem of \citet{DvoTho}) shows that \cref{gmpst} holds with $a=|A|-1$. For example, if $K_t$ is not in $\mathcal{G}$, then \cref{gmpst} holds with $a=\max\{t-5,0\}$. Moreover, using the recent polynomial bounds in the Graph Minor Structure Theorem by \citet{gorsky2025polynomial}, in \cref{gmpst} one can obtain polynomial bounds on $k$ and $a$ (as a function of the excluded minor). However, it is open whether one can simultaneously get the above optimal bound on $a$ and a polynomial bound on $k$. \citet[Conjecture~18.9]{gorsky2025polynomial} conjecture this is possible (also see the discussion in \citep{GSW26}).

We now explain why the proof of \cref{main_result} is constructive and that there is a polynomial time encoder.  The only difficult part of this process is finding the decomposition described by the Graph Minor Product Structure Theorem (\cref{gmpst}).  \citet{dujmovic.joret.ea:planar} show how to obtain this decomposition in polynomial time, given the structural decomposition in the original Robertson-Seymour Graph Minor Structure Theorem. Finding the Robertson-Seymour decomposition is complicated, but polynomial time algorithms exist \cite{kawarabayashi2018flat,kawarabayashi2020new,gorsky2025polynomial,demaine2005algorithmic}.  Together, these give the parameters $k$ and $a$, the tree-decomposition $\mathcal{T}:=(T, (B_x:x\in V(T)))$ of $G$ and the embedding of each torso of $\mathcal{T}$ into a graph in $\mathcal{R}_k^{+a}$. The adjacency labelling scheme in \cite{dujmovic.esperet.ea:adjacency} and its modification to weighted mixed labellings in \cref{weighted_product_proof} are easily implemented in polynomial time just from their definitions.  Thus, for any torso of $\mathcal{T}$, the mixed labelling of any subgraph of the torso and any set of cliques in the torso can be computed in polynomial time.  The partition of $T$ into skinny subtrees and the corresponding contracted tree $T'$ and tree-decomposition $\mathcal{T}'$ are easily obtained (even in linear time).  The labelling of each torso of $\mathcal{T}'$ given in \cref{skinny_labelling} requires only computing  alphabetic codes (\cref{shannon_fano_elias_tree,shannon_fano_elias_code}) as well as weighted mixed labellings on subgraphs of torsos of $\mathcal{T}$.  The final vertex label $\mu(v)$ for a vertex of $G$ is then obtained by following the path from the root of $T'$ to the home of $v$ in $\mathcal{T}'$ and concatenating a sequence of clique labels, the vertex label for $v$ in its home torso, and at most $k$ local identifiers.

Finally, we note that \cref{PlainRealMainTechnical} applies beyond proper
minor-closed classes. Recall that, for fixed $k$, $\mathcal{R}_k$ consists of
the graphs isomorphic to subgraphs of $H\boxtimes P$ for some graph $H$ of
treewidth at most $k$ and some path $P$. This class is hereditary, closed
under taking disjoint union, and admits an efficient weighted mixed labelling
scheme (\cref{gk_labelling_precise}), so \cref{PlainRealMainTechnical} applies
to it. Moreover, for suitable $k=k(p)$, $\mathcal{R}_k$ contains every
$p$-planar graph (graphs that can be drawn in the plane with at most $p$
crossings per edge)~\citep{dujmovic.morin.ea:graph}, and more generally every
$p$-matching-planar graph~\citep{DBLP:journals/corr/abs-2507-22395} (graphs that have a drawing in the plane such that, for every edge $e$, every
matching among the edges crossing $e$ has size at most $p$). Note that these classes are not minor-closed. Consequently, for all fixed $p$ and $q$, the class of graphs admitting a tree-decomposition of adhesion-width at most $q$ in which every
torso is $p$-matching-planar (again a class that is not minor-closed)
 admits a $(1+o(1))\log n$-bit adjacency labelling scheme.

\section*{Acknowledgement} 
This work was initiated during the Ninth Annual Workshop on Geometry and Graphs (v2), held January 21--28, 2022, at the Bellairs Research Institute of McGill University.  The authors are thankful to the other organizers and participants for providing a stimulating research environment.

\setlength{\bibsep}{0.3ex plus 0.1ex minus 0.1ex}
\bibliographystyle{plainurlnat}
\bibliography{mflabelling}

\appendix
\crefalias{section}{appendix}

\section{Mixed Labelling Scheme for \texorpdfstring{$\mathcal{R}_k$}{Rk}}
\label{weighted_product_proof}

In this appendix, we modify the labelling scheme of \citet{dujmovic.esperet.ea:adjacency} and its improved version by \citet{gawrychowski.janczewski:simpler} in order to show the existence of the efficient mixed labelling scheme for $\mathcal{R}_k$ mentioned in \cref{KeyTools}.  We prove the following result.

\begin{lem}\label{gk_labelling_precise}
  For each fixed integer $k\ge 0$, the graph class $\mathcal{R}_{k}$ admits an efficient weighted $(g_1,g_2,g_3)$ mixed labelling scheme, where $g_1(n),g_3(n)\in \Oh(\sqrt{\log n})$, and $g_2(n)\in \Oh(1)$.
\end{lem}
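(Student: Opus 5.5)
I will reduce the statement to the product-structure labelling of \citet{dujmovic.esperet.ea:adjacency} as simplified by \citet{gawrychowski.janczewski:simpler}, and augment it with clique labels and local identifiers. Let $G^+$ be an $n$-vertex subgraph of $H\boxtimes P$, where $\mathrm{tw}(H)\le k$ and $P=(1,\dots,p)$. Let $G$ be a spanning subgraph of $G^+$ and let $\omega$ be a nice weight function; by \cref{lem:nice_weights} nice weights suffice. Each vertex $v$ is a pair $(x(v),y(v))$ with $x(v)\in V(H)$ and $y(v)\in[p]$.

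\textbf{Vertex labels.} I will use the vertex labels of the existing scheme, reweighted. That scheme has three components. First, a weighted alphabetic code on the rows of $P$, with row weight $\omega(\text{row})$, built via \cref{shannon_fano_elias_code}. Second, a weighted code for $x(v)$ derived from a bounded-treewidth (bulk) tree of $H$, weighting each $x$ by the $\omega$-mass of its column. Third, the $O(1)$-size adjacency information to the $O(k)$ neighbours in rows $y-1,y,y+1$, stored as short offsets. The key inequality to re-verify is that the combined ideal length is $\log\omega(G)-\log\omega(v)+O(\sqrt{\log n})$. The unweighted argument already balances a row code against a column code through the $\sqrt{\log n}$-level ``bulk/transition'' trick, so replacing uniform counts by $\omega$-masses should go through verbatim. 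Weight-by-multiplicity, as in the proof of \cref{shannon_fano_elias_tree}, turns integer weights into counts.

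\textbf{Cliques and identity testing.} A clique $K$ of $G^+$ projects to a clique of $H$, which has at most $k+1$ vertices, and to at most two consecutive rows. So $|K|\le 2(k+1)=O(1)$. I choose $v_0\in K$ of minimum weight and set $\mu(K):=\langle\mu(v_0),\tau(K)\rangle$. Here $\tau(K)$ lists, for each $u\in K$, the row offset $y(u)-y(v_0)\in\{-1,0,1\}$ together with the same relative description of $x(u)$ that the adjacency part of the scheme uses for neighbours. The labelling of $H$ already encodes ancestors of $x(v_0)$ in an elimination tree by depth, and $x(u)$ lies on the path from $x(v_0)$ to the root or among its descendants bounded by the clique. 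This gives $|\mu(K)|\le|\mu(v_0)|+O(1)$ and hence the $g_3$ bound. Distinct cliques get distinct labels because $\mu(v_0)$ is injective and $\tau(K)$ determines $K$ given $v_0$. Set $\kappa(K,u)$ to be the index of $u$ in this list, which has $O(1)$ bits, so $g_2\in O(1)$. The tester $I$ decodes $y(u)$'s relative position and $x(u)$'s relative descriptor from $\mu(K)$ and the index, then compares with $\mu(v)$. It checks that the row codes agree after the offset and that the $H$-parts match. This check is exactly what the adjacency tester already does when deciding whether $v$ is a particular neighbour of $w$.

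\textbf{Main obstacle.} The hard step is making the relative descriptor of $x(u)$ with respect to $x(v_0)$ checkable against $\mu(v)$ alone. In the Gawrychowski--Janczewski scheme, vertices of $H$ are identified only through a combination of row-dependent bulk codes. If $x(u)$ is a descendant of $x(v_0)$ in the elimination tree rather than an ancestor, $\mu(v_0)$ does not naturally carry its code. I would fix this by choosing, instead of the minimum-weight vertex, the vertex $v_1\in K$ whose $H$-projection is deepest in the elimination ordering of $H$. Then every other $x(u)$ is one of its $\le k$ tree-ancestors, which the scheme encodes with $O(1)$ bits per ancestor. I would keep the weight bound by prepending the $O(\sqrt{\log n})$-bit discrepancy correction. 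This step is uncertain: I do not yet see a way to bound $\log\omega(v_0)-\log\omega(v_1)$ with $O(1)$ bits per clique. I would handle it by storing a Shannon--Fano--Elias code for $v_1$ within the $\le 3$ rows under the weights, but this bound remains to be established.
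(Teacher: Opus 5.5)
\textbf{Vertex labels.} The genuine gap is here: you say they reweight \emph{verbatim}, and they do not.

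First, your picture of the scheme is off. The $H$-part of the label of $(x,i)$ is not a fixed code for $x$ weighted by the mass of its column. It is a row-dependent signature in row $i$'s B-tree, which is why transition codes exist. Independent row and column codes cannot give $\log\omega(G)-\log\omega(v)$: for the vertices $(j,j)$, $j\in[n]$, with unit weights, every row and every column has mass $1$, so both codes have length about $\log n$. The row code must instead be weighted by the size of row $i$'s search structure, so that row code plus signature telescopes.

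Second, and this is the missing idea, the label of $(v,i)$ stores its adjacency entries only as pairs $\langle|\sigma(w,i)|,\varphi(w)\rangle$. These are meaningful only because the label also stores the extended signature $\sigma^+(v,i)$, the longest $\sigma(w,i)$ over $w\in X_{v,i}=\{v\}\cup(V_i\cap N^+_H(v))$. Suppose row $i$'s tree is built on multiplicities $\omega$. A heavy $v$ can have a light out-neighbour $w$ whose interval $f(w)$ is nested inside $f(v)$ and captures little weight. Then $|\sigma^+(v,i)|$ can be close to $\log|S^+_i|$, far above the budget $\log|S^+_i|-\log\omega((v,i))+O(\sqrt{\log n})$.

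The paper fixes this as follows. It simulates by multiplicity, as you anticipate, but with an inflated weight $\delta(w,j)$. This weight adds to $w$ the weights of all vertices whose $X$-sets contain $w$, plus $\omega((w,j-1))$. Hence $\delta(w,i)\ge\omega((v,i))$ for every $w\in X_{v,i}$. Bounded out-degree keeps the total at most $(2k+3)\omega(G)$. So running the B-tree and transition machinery on the multisets $S_{\delta,i}$, with the row code weighted by $|S^+_{\delta,i}|$, costs only $O(\log k)$ extra bits.

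\textbf{Clique labels.} Once you switch to $v_1$, your construction is essentially the paper's. The paper takes $v^*\in V_K$ with $V_K\subseteq\{v^*\}\cup N^+_H(v^*)$. It stores the row code and $\sigma^+(v^*,i)$, together with the list of $\langle|\sigma(w,i)|,\varphi(w)\rangle$ over the members of $K$. It uses the list index as $\kappa$, and tests identity exactly as the adjacency tester matches out-neighbour entries.

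The obstacle you flag is not one. The clique budget is relative to $\min_{u\in K}\omega(u)\le\omega(v_1)$, so $|\mu(v_1)|$ is automatically within it and no discrepancy correction is needed.

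What does need care is the row. The source $x^*$ may lie in $K$ only in row $i-1$, while other members lie in row $i$ and possibly outside $V_{i-1}$. Their descriptors must then be read in row $i$'s B-tree, so you would have to go through the transition code in $\mu((x^*,i-1))$. The paper instead always uses $(v^*,i)$, and adds dummy vertices $(v,i+1)$ and the term $\omega((w,j-1))$ in $\delta$ precisely so that this label exists and is short enough.
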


\begin{proof}
  We begin by reviewing the proofs in \cite{dujmovic.esperet.ea:adjacency} and \cite{gawrychowski.janczewski:simpler} and then describe the changes needed to obtain the statement of \cref{gk_labelling_precise}.

  Let $G$ be an $n$-vertex graph in $\mathcal{R}_k$, so $G$ is a subgraph of $H\boxtimes P$, where $H$ is a graph of treewidth $k$ and $P$ is a path. Without loss of generality, assume that $P=(1,2,\dots,h)$, with $h\le n$, and $H$ is an edge-maximal graph of treewidth $k$.  This implies that $H$ has an acyclic orientation in which the out-neighbourhood \defin{$N^+_H(v)$} is a clique in $H$, for each $v\in V(H)$. It follows that the out-degree $|N^+_H(v)|$ is at most $k$, for each vertex $v\in V(H)$.

  \paragraph{\boldmath Mapping $V(H)$ onto intervals:}
  The first step in \cite{dujmovic.esperet.ea:adjacency} is an idea used by \citet{gavoille.labourel:shorter} in their labelling scheme for bounded-treewidth graphs.
  Each vertex $v$ of $H$ is mapped to a real interval \defin{$f(v)$}.
  This mapping has the following properties (for some fixed $c>0$):
  \begin{enumerate}[nosep,label=(I\arabic*)]
    \item\label{nested_or_disjoint} For any two vertices $v,w\in V(H)$, $f(v)\subseteq f(w)$, $f(w)\subseteq f(v)$ or $f(v)\cap f(w)=\emptyset$.
    \item\label{intersection_subgraph} For each $vw\in E(H)$, $f(v)\cap f(w)\neq\emptyset$.
    \item\label{thin} For each $x\in \R$, $|\{v\in V(H):x\in f(v)\}|\le ck\log n$.
  \end{enumerate}

  The existence of such a mapping $f$ is easily established by a recursive procedure that takes an open interval $(a,b)$ and a subgraph $H'$ of $H$.
  The procedure identifies a set $S$ of at most $k+1$ vertices in $H'$ such that the components of $H'-S$ can be partitioned into two graphs $H'_1$ and $H'_2$, each with at most $\tfrac{2}{3}|V(H')|$ vertices.
  For each $v\in S$, $f(v):=(a,b)$.  The procedure is then applied inductively to map the vertices of $H'_1$ onto subintervals of $(a,(a+b)/2)$ and then again to map the vertices of $H'_2$ onto subintervals of $((a+b)/2,b)$.

  Let $I$ be the interval intersection graph with vertex-set $V(I):=V(H)$ and edge-set $E(I):=\{vw\in\binom{V(H)}{2}:f(v)\cap f(w)\neq\emptyset\}$.
  Property~\ref{intersection_subgraph} implies that $H$ is a subgraph of $I$. Property~\ref{thin} implies that $H$ has no clique larger than $ck\log n$, for some fixed constant $c$.
  The graph $I$ has a proper colouring $\mathdefin{\varphi}:V(H)\to[\lfloor ck\log n\rfloor]$ \cite{gilmore.hoffman:characterization}.
  The label for each vertex $(v,i)$ of $G$ will include the $\Oh(\log (k\log n))$-bit integer $\varphi(v)$.

  \paragraph{\boldmath Mapping $V(H)$ onto points:}
  For each vertex $v$ of $H$, let $x_f(v)$ be the midpoint of the interval $f(v)$.\footnote{Any real number $x_f(v)$ in the interior of the interval $f(v)$ would also work.}  Property~\ref{nested_or_disjoint} then implies:
  \begin{enumerate}[nosep,label=(I\arabic*)]\setcounter{enumi}{3}
    \item\label{clique_property} For each clique $K$ in $I$, there exists $v\in K$ such that $x_f(v)\in f(w)$ for each $w\in K$.
  \end{enumerate}
  Indeed, a vertex $w$ of $K$ that minimizes the length of $f(w)$ satisfies this condition.

  \paragraph{\boldmath Treating $G$ as a sequence of graphs:}
  Each vertex of $G$ is a pair $(v,i)$ with an \defin{$H$-coordinate} $v\in V(H)$ and a \defin{$P$-coordinate} $i\in V(P)$.
  We may assume, without loss of generality, that $G$ contains at least one vertex with $P$-coordinate $i$, for each $i\in[h]$ (and therefore $h\le n$).
  The next idea in \cite{dujmovic.esperet.ea:adjacency} is to treat $V(G)$ as a sequence of vertex subsets of $H$.
  For each $i\in[h]$, let $V_i:=\{v\in V(H): \{(v,i),(v,i-1)\}\cap V(G)\neq\emptyset \}$.    Each vertex $(v,i)$ of $G$ contributes a vertex to the two sets $V_{i}$ and $V_{i+1}$, so $\sum_{i=1}^h |V_i|\le 2n$.  For each vertex $(v,i)$ of $G$, define the set $X_{v,i}:= \{v\}\cup (V_i\cap N^+_H(v))$.

  \begin{obs}\label{x_v_i_sufficient}
    For any edge $(v,i)(w,j)$ of $G$ with $w\in N^+_H(v)$, we have that $w\in X_{v,i}$ (if $j\in\{i-1,i\}$) or $w\in X_{v,i+1}$ (if $j\in\{i,i+1\}$).
  \end{obs}

  The label of any vertex $(v,i)$ of $G$ has two main parts, whose lengths sum to $\log n+o(\log n)$.
  The first part, $\lambda_1((v,i))$ is determined by the $P$-coordinate, $i$.
  The second part, $\lambda_2((v,i))$, is obtained from a labelling of $H[V_i]$, so it represents the $H$-coordinate, $v$.

  Let $S_0:=S_{h+1}:=\emptyset$ and,
  for each $i\in[h]$, let $\mathdefin{S_i}:=\{x_f(v):v\in V_i\}$.
  Note that $S_i$ is a set of real numbers, for each $i\in[h]$.
  It is helpful to impose a relationship between consecutive sequences $S_i$ and $S_{i+1}$, for each $i\in[h-1]$. Therefore, \cite{dujmovic.esperet.ea:adjacency} defines a sequence of supersets \defin{$S^+_1,\ldots,S^+_h$}. The exact definition of $S^+_1,\ldots,S^+_h$ is not necessary for the current discussion.
  The relevant properties here are:
  \begin{enumerate}[nosep,label=(\alph*)]
    \item\label{plus_superset} $S^+_i\supseteq S_i$ for each $i\in[h]$;
    \item\label{plus_size} there is a constant $c>0$ such that $\sum_{i=1}^h |S^+_i|\le cn$.
    \item\label{plus_ratio} there is a constant $c>0$ such that
          $1/c\le |S^+_i|/|S^+_{i+1}|\le c$, for each $i\in[h-1]$.
  \end{enumerate}
  We remark that the contents of the sets $S^+_1,\ldots,S^+_h$ are entirely determined by the contents of the sets $S_1,\ldots,S_h$.
  (Formally, the sequence $S^+_1,\ldots,S^+_h$ is the result of applying a function $q:(2^\R)^*\to (2^\R)^*$ whose input is a sequence of sets of real numbers and whose output is a sequence of sets of real numbers. The function $q$ is universal; it does not depend on $H$, $P$, or $G$.)

  \paragraph{\boldmath The row label $\lambda_1$:}
  With the sets $S^+_1,\ldots,S^+_h$ described, we can now discuss $\lambda_1$.
  The first part of $\lambda_1(v,i)$ is obtained by applying \cref{shannon_fano_elias_code} to the set $[h]$ using the weight function defined by $i \mapsto |S^+_i|$. This gives a prefix-free code $\mathdefin{\rho}:[h]\to\{0,1\}^*$ where, by Property~\ref{plus_size} of \cref{shannon_fano_elias_code}, $|\rho(i)|\le \log\rho([h]) - \log|S^+_i| + \Oh(1)\le \log n - \log|S^+_i| + \Oh(1)$ for each $i\in[h]$.
  For each $i\in\{2,\ldots,h\}$, a string $\mathdefin{\pi(i)}$ of length $\Oh(\log\log n)$ is computed so that $\rho(i-1)$ can be derived from $\rho(i)$ and $\pi(i)$.  Formally, there exists a universal function $P:(\{0,1\}^*)^2\to\{0,1\}^*$ such that $P(\rho(i),\pi(i))=\rho(i-1)$, for each $i\in\{2,\ldots,h\}$.
  Then
  \[
    \mathdefin{\lambda_1(v,i)}:=\langle \rho(i),\pi(i)\rangle
  \]
  and
  \begin{equation}
    |\lambda_1(v,i)|\le \log n - \log |S^+_i| + \Oh(\log\log n) \enspace .
    \label{lambda_one_length}
  \end{equation}
  for each vertex $(v,i)$ of $G$.

  \paragraph{Bulk trees:}
  In \cite{dujmovic.esperet.ea:adjacency}, each set $S^+_i$ is stored in a binary search tree $T_i$ of height $\log|S^+_i|+o(\log n)$.
  As in \cref{shannon_fano_elias_code}, each node $x$ of $T_i$ is assigned a \defin{signature} \defin{$\sigma_{T_i}(x)$} that describes the root-to-$x$ path $P_{T_i}(x)$ as a binary string of length $\depth_{T_i}(x)$.
  Note that, for any $x\in V(T_i)$:
  \[
    |\sigma_{T_i}(x)| \le \height(T_i) \le \log|S^+_i|+o(\log n)
  \]
  For each $i\in[h]$, each vertex $v\in V_i$ is mapped to the node $\mathdefin{x_{T_i}(v)}\in V(T_i)$ of minimum $T_i$-depth such that (the real number) $x_{T_i}(v)$ is contained in the (real interval) $f(v)$.
  For each $v\in V_i$, the node $x_{T_i}(v)$ always exists because $x_f(v)\in S_i\subseteq S_i^+$ and $x_f(v)\in f(v)$.  The same reasoning implies that $x_{T_i}(w)$ is defined, for each $w\in X_{v,i}$.

  For each $v\in V_i$, define the \defin{signature} $\sigma(v,i):=\sigma_{T_i}(x_{T_i}(v))$.  Note that, for two vertices $(v,i)$ and $(v',i')$ of $G$, $(v,i)=(v',i')$ if and only if
  \begin{equation}
    (i,\,\sigma(v,i),\,\varphi(v)) = (i',\,\sigma(v',i'),\,\varphi(v')) \enspace .
    \label{ps_identity}
  \end{equation}

  For each $(v,i)\in V(G)$, the \defin{extended signature} \defin{$\sigma^+(v,i)$} is the signature in $\mathdefin{\Sigma_{v,i}}:=\{\sigma(w,i):w\in X_{v,i}\}$ of maximum length.  Since $X_{v,i}$ is a clique in $H$, property~\ref{clique_property} implies that every signature in $\Sigma_{v,i}$ is a prefix of $\sigma^+(v,i)$ for each vertex $(v,i)$ of $G$.  In particular, for each $w\in X_{v,i}$, the signature $\sigma(w,i)$ can be derived easily from $\sigma^+(v,i)$ and $|\sigma(w,i)|$.

  For each $(v,i)\in V(G)$, define
  \begin{align*}
    \mathdefin{\alpha^{-}(v,i)}
     & = \langle\; \langle |\sigma(w,i)|, \varphi(w)\rangle \mid w\in \{v\}\cup N^+_H(v),\; (w,i-1)\in N_G((v,i)) \;\rangle   \\
    \mathdefin{\alpha^{0}(v,i)}
     & = \langle\; \langle |\sigma(w,i)|, \varphi(w)\rangle \mid w\in N^+_H(v),\; (w,i)\in N_G((v,i)) \;\rangle               \\
    \mathdefin{\alpha^+(v,i)}
     & = \langle\; \langle |\sigma(w,i+1)|, \varphi(w)\rangle \mid w\in \{v\}\cup N^+_H(v),\; (w,i+1)\in N_G((v,i)) \;\rangle \\
    \mathdefin{\alpha(v,i)}
     & = \langle\; \alpha^-(v,i),\alpha^0(v,i),\alpha^+(v,i) \;\rangle        .
  \end{align*}
  Since $|N^+_G(v)|\le k$ for each $v\in V(H)$, $|\alpha(v,i)|\in\Oh(k(\log k + \log\log n))$.

  The last (and critical) piece of the labelling scheme in \cite{dujmovic.esperet.ea:adjacency} is the notion of a \defin{transition code}. The authors of \cite{dujmovic.esperet.ea:adjacency} choose the sets $S^+_1,\ldots,S^+_h$ and design the binary search trees $T_1,\ldots,T_h$ so that, for any vertex $(v,i)$ of $G$, the extended signature $\sigma^+(v,i+1)$ can be derived from $\sigma^+(v,i)$ and a transition code \defin{$\tau((v,i))$} of length $o(\log n)$. Formally, there is a universal function $J:(\{0,1\}^*)^2\to\{0,1\}^*$ such that $J(\sigma^+(v,i),\tau(v,i))=\sigma^+(v,i+1)$ for each $(v,i)\in V(G)$.

  This gives all the pieces needed for the second part of the label for a vertex $(v,i)$ of $G$:
  \[
    \mathdefin{\lambda_2(v,i)} := \langle \sigma^+(v,i),\; |\sigma(v,i)|,\; \varphi(v),\; \alpha(v,i),\; \tau(v,i),\; |\sigma(v,i+1)| \rangle  \enspace .
  \]
  The complete vertex label for a vertex $(v,i)$ of $G$ is
  \[
    \mathdefin{\mu((v,i))} := \langle \lambda_1(v,i),\; \lambda_2(v,i) \rangle
  \]
  which has length $\log n + O(\sqrt{\log n\log\log n})$, for fixed $k$.  (For $k$ that grows with $n$, $|\mu(v,i))|=\log n + \Oh(\sqrt{\log n\log\log n} + k(\log k + \log\log n))$.)

  \paragraph{Adjacency Testing}

  We now describe a procedure that defines the output of the adjacency tester $A$.
  The operation of this procedure makes use of the fact that $\lambda_2(v,i)$ contains all the information in $\sigma^+(v,i)$, $\sigma(v,i)$, $\sigma^+(v,i+1)=J(\sigma^+(v,i),\tau(v,i))$, and $\sigma(v,i+1)$.

  Given the labels $\mu((v,i))$ and $\mu((w,j))$ for two vertices $(v,i)$ and $(w,j)$ of $G$, the procedure first uses $\lambda_1(v,i)$ and $\lambda_1(w,j)$ (along with the function $P$) to test if $i=j$, $i=j+1$ or $i=j-1$.  If none of these is the case, then the procedure can immediately conclude that $(v,i)(w,j)$ is not an edge of $H$, since $i\neq j$ and $ij$ is not an edge of $P$.  Since $G$ is a subgraph of $H\boxtimes P$, this implies that $A(\mu((v,i)),\mu((w,j))):=0$.  Otherwise, the procedure deals with one of the preceding cases:
  \begin{itemize}
    \item If $i=j$, then the procedure computes $\sigma(w,i)$ using $\sigma^+(w,i)$ and $|\sigma(w,i)|$.  Then the procedure uses $\sigma^+(v,i)$ and $\alpha^0(v,i)$ to compute $\sigma(w',i)$ and $\varphi(w')$ for each $w'\in N^+_H(v)$ such that $(w',i)\in N_G((v,i))$.  For each of these the procedure checks if $(\sigma(w',i),\varphi(w'))=(\sigma(w,i),\varphi(w))$. If so, then $(w,i)=(w',i)\in N_G((v,i))$ so $A(\mu((v,i)),\mu((w,j))):=1$.

          If this test fails for each of the entries in the second part of $\alpha(v,i)$, then the procedure concludes that $w\not\in X_{v,i}$ or $(w,i)\not\in N_G((v,i))$.  This still leaves the possibility that $v\in X_{w,i}$ and $(v,i)\in N_G((w,j))$.  The procedure checks this in exactly the same manner, but with the roles of $\mu(v,i)$ and $\mu(w,j)$ reversed.  If this test also fails to return a result, then $A(\mu((v,i)),\mu((w,j))):=0$.

    \item If $i=j+1$ then the procedure computes $\sigma^+(w,i)=J(\sigma^+(w,j),\tau(w,j))$. Then the procedure computes $\sigma(w,i)$ using $\sigma^+(w,i)$ and $|\sigma(w,i)|$. Then the procedure continues as in the previous case, except now using entries from $\alpha^-(v,i)$ to find $(w',i)$ such that $(\sigma(w',i),\varphi(w'))=(\sigma(w,i),\varphi(w))$.

          If this search fails, then the procedure continues as in the previous case, except now using entries from $\alpha^+(w,j)$ to find a $(v',i)$ such that $(\sigma(v',i),\varphi(v'))=(\sigma(v,i),\varphi(v))$.  If this search also fails, then $A(\mu((v,i)),\mu((w,j))):=0$.

    \item If $i=j-1$ then the procedure computes $\sigma^+(v,j)=J(\sigma^+(v,i),\tau(v,i))$. Then the procedure computes $\sigma(w,j)$ using $\sigma^+(w,j)$ and $|\sigma(w,j)|$. Then the procedure continues as in the previous cases, except now using entries from $\alpha^+(v,i)$ to find $(w',j)$ such that $(\sigma(w',j),\varphi(w'))=(\sigma(w,j),\varphi(w))$.

          If this search fails, then the procedure continues by using entries in $\alpha^+(w,j)$ to find $(v',j)$ such that $(\sigma(v',j),\varphi(v'))=(\sigma(v,j),\varphi(v))$.  If this search also fails, then the procedure concludes that $A(\mu((v,i)),\mu((w,j))):=0$.
  \end{itemize}

  \paragraph{B-trees:}
  \citet{gawrychowski.janczewski:simpler} use
  weight-balanced B-trees to replace the (binary) bulk trees used in \cite{dujmovic.esperet.ea:adjacency}.
  A \defin{B-tree} stores a set of real numbers in the leaves of a tree $B$ in which all leaves have the same depth.
  \defin{Weight-balanced B-trees} are parameterized by a parameter $a\ge 6$.  A weight-balanced B-tree $B$ is \defin{semi-balanced} if, for each node $x$ of $B$, the number of leaves in the subtree $B_x$ is at most $6a^{\height(B_x)}$ and for each \emph{non-root} node $x$ of $B$, the number of leaves in $B_x$ is at least $\tfrac{1}{2}a^{\height(B_x)}$.
  Let $B$ be a weight-balanced B-tree with $r\in [n]$ leaves.  The weight-balance conditions guarantee that $\height(B)\le \log_a r + 1$.
  The weight-balance conditions also ensure that the number of children of any node $x$ of $B$ is at most $12a$.
  This implies that any root-to-leaf path in a B-tree can be encoded by a sequence of $\log_a r+1$ integers, each requiring $\lceil\log(12a)\rceil\le 5+\log a$ bits to represent, for a total of at most $\lceil\log(12a)\rceil\cdot(\log_a r+1)=\log r + \Oh(\log_a r + \log a)$ bits.
  The value of $a$ used in \cite{gawrychowski.janczewski:simpler} is approximately $2^{\sqrt{\log n}}$, so that any root-to-leaf path in $B$ can be represented using $\log r + \Oh(\sqrt{\log n})$ bits, for any $r\in\Oh(n)$.

  The sequence of binary search trees $T_1,\ldots,T_h$ used in \cite{gawrychowski.janczewski:simpler} is replaced with a sequence of B-trees $B_1,\ldots,B_h$.
  For each $i\in[h]$, the leaves of $B_i$ store the values in $S^+_i$.
  For each $i\in[h]$, each vertex $(v,j)\in V_i\times\{i-1,i\}$ is mapped to the lowest-common-$B_i$-ancestor $x_{B_i}(v)$ of $f(v)\cap S^+_i$.
  Then, for a vertex $(v,i)$ of $G$, the signature \defin{$\sigma(v,i)$} is the encoding of $P_{B_i}(x_{B_i}(v))$.
  Then
  \[
    |\sigma(v,i)| \le \lceil\log (12a)\rceil\cdot(\depth_{B_i}(x_{B_i}(v))) = \lceil\log (12a)\rceil(\depth_{B_i}(\lca_{B_i}(f(v)\cap S_i^+))) \enspace .
  \]
  An upper bound on $|\sigma((v,i))|$ can be obtained by using the $\log_a |S^+_i|+1$ upper bound on $\height(B_i)$:
  \[
    |\sigma(v,i)|\le \lceil\log (12a)\rceil\height(B_i) \le \lceil\log (12a)\rceil(\log_a |S^+_i| + 1)= \log |S^+_i| + \Oh(\sqrt{\log n})
  \]

  The main advantage of using B-trees over the bulk trees in \cite{dujmovic.esperet.ea:adjacency} is that in B-trees the rebalancing operations needed to maintain the height bound result in a much simpler transition code $\tau(w,i-1)$.
  This simplicity ultimately comes from the fact that, for any $p\in S_{i-1}\cap S_i$, the root-to-$p$ path $P_{B_{i-1}}(p)$ in $B_{i-1}$ and the root-to-$p$ path $P_{B_i}(p)$ in $B_i$ differ by at most a single vertex.
  More precisely, a subpath $xyz$ in $P_{B_{i-1}}(p)$ can be replaced by a subpath $xy'z$ to give $P_{B_i}(x)$.%
  \footnote{There are two exceptions. Occasionally, $P_{B_i}(x)$ is obtained by removing the first vertex of $P_{B_{i-1}}(x)$ or by adding a vertex at the beginning of $P_{B_{i-1}}(x)$. Since these are distracting and easily handled, we ignore them here.}
  In terms of the signature function $\sigma$, this means that the difference between these two paths can be described by an $\Oh(\log\log_a n)$-bit integer that gives the index of the vertex $y$ being replaced and two $\lceil\log (12a)\rceil$ bit integers that give the index of $y'$ in the list of children of $x$ and the index of $z$ in the list of children of $y'$.\%
  \footnote{Determining $\sigma^+((w,i))$ from $\sigma^+((w,i-1))$ requires a bit more work, since it is possible that $x_{B_i}(w)\neq x_{B_{i-1}}(w)$. This is common to both labelling schemes and does not affect the upcoming discussion. The important thing for what comes next is the definition of $x_{B_i}(w)=\lca_{B_i}(f(w)\cap S^+_i)$.}
    As a result of this, the transition code $\tau((w,i-1))$ used for the B-trees in \cite{gawrychowski.janczewski:simpler} is shorter (it has length $\Oh(\sqrt{\log n})$ rather than the $\Oh(\sqrt{\log n\log\log n})$ required by bulk trees) and, on a computer with $\Oh(\log n)$-bit words, the translation from $\sigma^+((w,i-1))$ to $\sigma^+((w,i))$ can be done in constant time, resulting in a constant-time adjacency testing procedure. (All of the other adjacency-testing operations for both schemes can be done in constant-time.)

    Except for the new definition of the basic signature $\sigma$ and the transition function $\tau$, every other aspect of the resulting labelling scheme and adjacency testing procedure is as described in \cite{dujmovic.esperet.ea:adjacency}.

    \paragraph{A weighted mixed labelling scheme:}
    We now describe the changes to \cite{gawrychowski.janczewski:simpler,dujmovic.esperet.ea:adjacency} needed to prove \cref{gk_labelling_precise}.  For this we will define an adjacency tester $A$ and an identity tester $I$. As usual, we first describe the format of the vertex and clique labels before describing the operation of $A$ and $I$.

    Let $H$ be an edge-maximal graph of treewidth at most $k$, let $P=(1,\ldots,h)$ be a path, let $G^+$ be a subgraph of $H\boxtimes P$, let $G$ be a spanning subgraph of $G^+$ and let $\omega:V(G)\to\R^+$ be a weight function.  As in other proofs, we may assume that $\omega$ is nice, so that $\omega(v)$ is a positive integer, for each $v\in V(G)$, and that $\log\omega(G)\in\Oh(\log n)$.  We may also assume that $G^+$ is an induced subgraph of $H\boxtimes P$.

    To help with clique labelling later, we will introduce some \defin{dummy} vertices to $G^+$ and $G$.  Since $G^+$ is an induced subgraph of $H\boxtimes P$, this also introduces edges and cliques in $G^+$.  For each vertex $(v,i)$ of $G^+$ with $i\in[h-1]$, if $G^+$ does not contain the vertex $(v,i+1)$ then we add the dummy vertex $(v,i+1)$ to $G^+$ and $G$ and define $\omega(v,i+1):=\omega(v,i)$.  The addition of dummy vertices does not increase the number of vertices or the total weight $\omega(G)$ by more than a factor of $2$.  Any vertex of $G$ that is not a dummy vertex is called a \defin{real} vertex.

    To satisfy the requirement on $g_1(n)$, the label $\mu((v,i))$ of a vertex $(v,i)$ of $G$ should have length $\log \omega(G)-\log\omega((v,i)) + \Oh(\sqrt{\log n})$.  To achieve this, we require that the extended signature $\sigma^+(v,i)$ have length at most $\log|S_i^+|-\log\omega((v,i)) + \Oh(\sqrt{\log n})$.  To accomplish this, we define an intermediate weight function $\delta$, as in the proof of \cref{small_height}.  For each $(w,j)\in V(H\boxtimes P)$, define $X^{-1}_{w,j}:=\{(v,i)\in V(G)\mid i\in\{j,j+1\},\; w\in X_{v,i}\}$ and define
    \[
      \delta(w,j):=\sum_{(v,i)\in X^{-1}_{w,j}}\omega(v,i) + \omega((w,j-1)) \enspace ,
    \]
    where we use the convention that $\omega((w,j-1))=0$ if $(w,j-1)\not\in V(G)$.
    The intuition behind the sum in the definition $\delta(w,j)$ is that, for each vertex $(v,i)$ of $G$, the extended signature $\sigma^+(v,i)$ must have length at most $\log|S^+_i|-\log\omega((v,i))+\Oh(\sqrt{\log n})$.  However, $\sigma^+(v,i)$ should be long enough to contain $\sigma(w,i)$ for each $w\in X_{v,i}$.  Thus, $\delta$ increases the weight of $(w,j)$ to ensure this.
    Importantly, for each $(v,i)\in V(G)$ and each $w\in X_{v,i}$,
    \begin{equation}
      \delta(w,i) \ge \omega((v,i)) \enspace . \label{delta_v_omega}
    \end{equation}
    (The final term, $\omega(w,j-1)$ in the definition of $\delta(w,j)$ will be useful for clique labelling later.)

    The total weight assigned by $\delta$ is
    \begin{equation}
      \sum_{(w,j)\in V(H\boxtimes P)} \delta(w,j)
      \le \sum_{(v,i)\in V(G))}|X_{v,i}|\cdot\omega((v,i)) + \omega(G)
      \le (2k+3)\omega(G) \enspace .
      \label{total_delta}
    \end{equation}

    At this point, we proceed almost exactly as in the proof of \cref{shannon_fano_elias_tree}.
    For each $i\in[h]$ and each $v\in V_i$, define
    \[
      S_{\delta,i}(v):=\{(x_f(v),1),\,\ldots,\, (x_f(v),\delta(v,i)\}
    \]
    and define $S_{\delta,i}:=\bigcup_{v\in V_i}S_{\delta,i}(v)$.  For each $i\in[h]$, treat $S_{\delta,i}$ as a totally ordered set where order is determined by lexicographic comparison.  Note that $|S_{\delta,i}|=\delta(V(H)\times\{i\})$ for each $i\in[h]$ and $\sum_{i=1}^h |S_{\delta,i}|=\sum_{(w,j)\in H\boxtimes P}\delta(w,j)\le (2k+3)\omega(G)$, by \cref{total_delta}.

    In the following, we use the convention that a real interval $(a,b)$ contains $(x,j)\in S^+_i$ if (the real interval) $(a,b)$ contains (the real number) $x$. In particular, for a vertex $v\in V(H)$ and $i\in[h]$, $\mathdefin{f(v)\cap S^+_i}:=\{(x,j)\in S^+_i:x\in f(v)\}$. Then, just as in the unweighted case $x_{B_i}(v)=\lca_{B_i}(f(v)\cap S^+_i)$.

    At this point, we apply exactly the same  machinery used in \cite{gawrychowski.janczewski:simpler,dujmovic.esperet.ea:adjacency} beginning with the sets $S_{\delta,1},\ldots,S_{\delta,h}$ instead of the sets $S_1,\ldots,S_h$.
    This results in sets $S^+_{\delta,1},\ldots,S^+_{\delta,h}$ of total size $\sum_{i=1}^h |S^+_{\delta,i}|\in \Oh(k\cdot\omega(G))\in \Oh(kn)$, by \eqref{total_delta}, and since $\omega$ is nice.
    This has the desired effect because a B-tree of height $h$ has at most $6a^h$ leaves.
    Therefore,
    \begin{align*}
      \depth_{B_i}(x_{B_i}(v))
       & =
      \depth_{B_i}(\lca_{B_i}(f(v)\cap S_{\delta,i}))            \\
       & \le
      \depth_{B_i}(\lca_{B_i}(S_{\delta,i}(v)))                  \\
       & \le\height(B_i)-\log_a(\delta(v,i)/6)                   \\
       & \le \log_a|S^+_{\delta,i}| - \log_a\delta(v,i) + \Oh(1) \\
      \enspace .
    \end{align*}
    Furthermore, for any $(w,j)\in X_{v,i}$, by \eqref{delta_v_omega},
    \begin{align*}
      \depth_{B_i}(x_{B_i}(w))
       & \le \log_a|S^+_{\delta,i}| - \log_a\delta(w,i) + \Oh(1)
      \le \log_a|S^+_{\delta,i}| - \log_a\omega(v,i) + \Oh(1) \enspace .
    \end{align*}
    Therefore,
    \begin{align*}
      |\sigma^+(v,i)|
       & = \max\{|\sigma((w,i))| \mid w\in X_{v,i}\}                                              \\
       & \le \lceil \log (12a)\rceil\cdot\max\{\depth_{B_i}(x_{B_i}(w)) \mid w\in X_{v,i}\}       \\
       & \le \lceil \log (12a)\rceil\cdot (\log_a|S^+_{\delta,i}| - \log_a\omega((v,i)) + \Oh(1)) \\
       & \le (5+ \log a)\cdot (\log_a|S^+_{\delta,i}| - \log_a\omega((v,i)) + \Oh(1))             \\
       & \le \log |S^+_{\delta,i}| - \log\omega((v,i)) + \Oh(1+\log_a |S^+_{\delta_i}| + \log a)  \\
       & \le \log |S^+_{\delta,i}| - \log\omega((v,i)) + \Oh(\sqrt{\log n}) \enspace .
    \end{align*}
    Then
    \begin{align*}
      |\lambda_2(v,i)|
       & \le \log |S^+_{\delta,i}|-\log \omega((v,i)) + \Oh(\sqrt{\log n} + k(\log k+\log\log n))            \\
       & \le \log |S^+_{\delta,i}|-\log \omega((v,i)) + \Oh(\sqrt{\log n} + k(\log k+\log\log n)) \enspace .
    \end{align*}
    By \eqref{total_delta}, we have that $|\lambda_1((v,i))|\le \log\omega(G) - \log |S^+_{\delta,i}| + \Oh(\log\log n)$, so
    \begin{align*}
      |\mu((v,i))|
       & = |\lambda_1((v,i))|+|\lambda_2((v,i))| + \Oh(\log\log n)                                                                        \\
       & \le \log\omega(G) - \log |S^+_{\delta,i}| + \log |S^+_{\delta,i}|-\log \omega((v,i)) + \Oh(\sqrt{\log n} + k(\log k+\log\log n)) \\
       & = \log\omega(G) - \log\omega((v,i)) + \Oh(\sqrt{\log n} + k(\log k+\log\log n)) \enspace .
    \end{align*}
    For any fixed $k$, $|\mu((v,i))|=\log \omega(G)-\log \omega((v,i)) + \Oh(\sqrt{\log n})$, satisfying the requirement $g_1(n)\in\Oh(\sqrt{\log n})$. Since the vertex labels assigned by $\mu$ have the same format as those in \cite{alstrup.dahlgaard.ea:optimal,dujmovic.esperet.ea:adjacency}, the adjacency tester $A$ proceeds exactly as described in \cite{alstrup.rauhe:improved,dujmovic.esperet.ea:adjacency}.

    \paragraph{Clique labelling and identity testing:}
    Thus far, we have shown that the lengths of the vertex labels promised in \cref{gk_labelling_precise} are achievable. 
    What remains is to describe the clique labels, the local identifiers, and the identity testing procedure.

    Let $K$ be a clique in $G^+$ that contains no dummy vertices and let $V_K:=\{v\mid (v,j)\in K\}$.%
    \footnote{The requirement that $K$ contains no dummy vertices is justified by the fact that we are only required to provide labels for cliques that exist in the original input graph $G^+$.}
    Then $K\subseteq V(H)\times\{i-1,i\}$  and $V_i\supseteq V_K$ for some $i\in\{2,\ldots,h\}$.  Since $K$ is a clique in $G^+\subseteq H\boxtimes P$, $V_K$ is a clique in $H$.  Therefore, there exists $v^*\in V_K$ such that $V_K\subseteq \{v^*\}\cup N^+_H(v^*)$.

    If $(v^*,i)\in K$ then $\delta(v^*,i)\ge \omega((v^*,i))$. If $(v^*,i)\not\in K$ then $(v^*,i-1)\in K$ and, by the definition of $\delta$, $\delta(v^*,i)\ge \omega((v^*,i-1))$.%
    \footnote{This is precisely the reason for the $\omega((v,i-1))$ term in the definition of $\delta(v,i)$.}
    In either case, $\delta(v^*,i)\ge\min_{(v,j)\in K}\omega((v,j))$.  Since $K$ contains no dummy vertices, $(v^*,i)$ is a (possibly dummy) vertex of $G^+$.  Therefore, $\mu((v^*,i))=\langle \lambda_1(v^*,i),\lambda_2(v^*,i)\rangle$ is defined.  Furthermore, $X_{v^*,i}$ contains $V_K$.  To summarize, $\mu((v^*,i))$ is defined, and has length at most $\log n - \log\min_{(v,j)\in K}(\omega((v,j))) + \Oh(\sqrt{\log n})$.  Also, $\mu((v^*,i)))$ contains $\sigma^+(v^*,i)$, which contains $\sigma(w,i)$ as a prefix, for each $w\in V_K$.

    We can now mimic the adjacency labelling.  Define
    \begin{align*}
      \alpha^-(K)
       & := \langle\; \langle |\sigma(w,i)|, \varphi(w)\rangle \mid (w,i-1)\in K \;\rangle \\
      \alpha^0(K)
       & := \langle\; \langle |\sigma(w,i)|, \varphi(w)\rangle \mid (w,i)\in K \;\rangle   \\
      \alpha(K)
       & := \langle\; \alpha^-(K),\; \alpha^0(K)  \;\rangle \enspace .
    \end{align*}
    Since $|K|\le 2k+2$, $|\alpha(K)|\in\Oh(k\log\log n)$.  Define
    \[
      \mu(K) := \langle\; \lambda_1(v^*,i),\; \sigma^+(v^*,i),\; \alpha(K) \;\rangle \enspace .
    \]
    To see that $\mu$ is injective when restricted to cliques in $G^+$, suppose $\mu(K)=\mu(L)$ for two distinct cliques $K$ and $L$.
    Then their first parts match, which implies $i_K=i_L=i$ since the prefix-free code $\rho(i)$ inside $\lambda_1$ uniquely determines the $P$-coordinate $i$.
    They also have the same extended signature $\sigma^+:=\sigma^+(v^*_K,i)=\sigma^+(v^*_L,i)$ and the same third part $\alpha(K)=\alpha(L)$.
    The set of vertices in $K$ is uniquely determined by $i$, $\sigma^+$, and $\alpha(K)$: for each entry $\langle \ell, \varphi \rangle$ in $\alpha^-(K)$ (respectively, $\alpha^0(K)$), there is a vertex $(w,i-1)$ (respectively, $(w,i)$) in $K$ where $\varphi(w)=\varphi$ and $\sigma(w,i)$ is the prefix of $\sigma^+$ of length $\ell$.
    By \eqref{ps_identity}, the $H$-coordinate $w$ is uniquely identified by $i$, $\sigma(w,i)$, and $\varphi(w)$, which in turn uniquely determines each vertex $(w,j)$ of $K$.
    Since $\alpha(K)=\alpha(L)$, we conclude $K=L$. Thus $\mu$ is injective on cliques of $G^+$.

    Then,
    \begin{align*}
      |\mu(K)|
       & \le |\lambda_1(v^*,i)| + |\sigma^+(v^*,i)| + \Oh(k \log\log n)                                       \\
       & \le \log\omega(G) - \log\delta(v^*,i) + \Oh(\sqrt{\log n} + k(\log k+\log\log n))                    \\
       & \le \log\omega(G) - \log\min_{(v,j)\in K}\omega((v,j)) + \Oh(\sqrt{\log n} + k(\log k+\log\log n)) .
    \end{align*}
    For fixed $k$, $|\mu(K)|\le \log\omega(G) - \log\min_{(v,j)\in K}\omega((v,j)) + \Oh(\sqrt{\log n})$, satisfying the requirement $g_3(n)\in\Oh(\sqrt{\log n})$.

    \paragraph{Local identifiers}

    For each vertex $(u,j)\in K$, the local identifier for $(u,j)$ is a single integer $\kappa(K,(u,j))\in[2k+2]$ that indicates the position of the entry for $(u,j)$ in $\alpha(K)$.  This integer also implicitly encodes whether $j=i-1$  or $j=i$.  Thus $|\kappa(K,(u,j))| = \Oh(\log k)$.  For fixed $k$, this satisfies the requirement $g_2(n)\in\Oh(1)$.

    \paragraph{Identity testing}

    We now describe the identity tester $I$.  Let $K$ be a clique in $G^+$ containing only real vertices, let $(u,j)\in K$, and let $(v,i')$ be a real vertex of $G$.  Let $(v^*,i)$ be as in the description of $\mu(K)$, so that $K\subseteq V(H)\times\{i-1,i\}$ and therefore $j\in\{i-1,i\}$.   Given $\mu(K)=\langle\lambda_1(v^*,i),\sigma^+(v^*,i),\alpha(K)\rangle$, $\kappa(K,(u,j))$, and $\mu((v,i'))=\langle \lambda_1(v,i'),\lambda_2(v,i')\rangle$, the identity tester proceeds as follows. First, the tester examines $\lambda_1((v,i'))$ and $\lambda_1((v^*,i))$ to determine if $i'=i$ or $i'=i-1$.  If neither of these is the case, then $(u,j)\neq (v,i')$ so $I(\mu(K),\kappa(K,(u,j)),\mu((v,i'))):=0$ since $j\in\{i-1,i\}$.
    If $i'\in\{i-1,i\}$, then the tester locates the entry $\langle |\sigma(u,i)|,\varphi(u)\rangle$ in $\alpha(K)$ using $\kappa(K,(u,j))$. The location of this entry also determines whether $j=i-1$ or $j=i$.  Since the tester now knows the values of $j$ and $i'$, it can immediately determine that $I(\mu(K),\kappa(K,(u,j)),\mu((v,i'))):=0$ if $j\neq i'$.
    If $j=i'$, then the tester computes $\sigma(u,i)$ using $\sigma^+(v^*,i)$ and $|\sigma(u,i)|$. Recall that $\lambda_2(v,i')=\langle \sigma^+(v,i'), |\sigma(v,i')|, \varphi(v),\alpha(v,i'),\tau(v,i'),|\sigma(v,i'+1)|\rangle$.
  There are now two cases to consider:
  \begin{itemize}
    \item If $i'=i$ then the tester uses $\sigma^+(v,i')$ and $|\sigma(v,i')|$ to compute $\sigma(v,i')=\sigma(v,i)$.  If $(\sigma(v,i),\varphi(v))=(\sigma(u,i),\varphi(u))$ then $(v,i)=(u,i)$ and  $I(\mu(K),\kappa(K,(u,j)),\mu((v,i'))):=1$.  Otherwise, $I(\mu(K),\kappa(K,(u,j)),\mu((v,i'))):=0$.
    \item If $i'=i-1$ then the tester computes $\sigma^+(v,i)=J(\sigma^+(v,i'),\tau(v,i'))$ and uses this with $|\sigma(v,i'+1)|=|\sigma(v,i)|$ to compute $\sigma(v,i)$.  Again, the tester concludes by checking if $(\sigma(v,i),\varphi(v))=(\sigma(u,i),\varphi(u))$.
  \end{itemize}
  This concludes the proof.
\end{proof}

\end{document}